 \newcommand\numberthis{\addtocounter{equation}{1}\tag{\theequation}}
\newtheorem{claim}{Claim}
\newtheorem{theorem}{Theorem}
\newtheorem{lemma}{Lemma}
\newtheorem{proposition}{Proposition}
\newtheorem{observation}{Observation}
\newtheorem{definition}{Definition}
\newtheorem{corollary}{Corollary}
\newcommand{\calC}{\mathcal{C}}
\newcommand{\EFone}{\mathrm{EF}1}
\newcommand{\PROPone}{\mathrm{PROP}1}
\newcommand{\PROP}{\mathrm{PROP}}
\newcommand{\PO}{\mathrm{PO}}
\newcommand{\fPO}{\mathrm{fPO}}
\newcommand{\EF}{\mathrm{EF}}
\newcommand{\MBB}{\mathrm{MBB}}
\newcommand{\MRC}{\mathrm{MRC}}
\newcommand{\prices}{\mathbf{p}}
\newcommand{\rewards}{\mathbf{r}}
\newcommand{\earnings}{\mathbf{e}}
\newcommand{\budgets}{\mathbf{b}}
\newcommand{\dalloc}
{\mathbf{x}}
\newcommand{\dallocy}{\mathbf{y}}
\newcommand{\agents}{\mathcal{N}}
\newcommand{\items}{\mathcal{M}}
\newcommand{\mean}{w}
\newcommand{\bEF}{\beta\text{-}\mathrm{EF}}
\newcommand{\bPROP}{\beta\text{-}\mathrm{PROP}}
\newcommand{\ialloc}{\mathbf{a}}
\newcommand{\bEFk}{\beta\text{-}\mathrm{EF}k}
\newcommand{\bPROPk}{\beta\text{-}\mathrm{PROP}k}
\newcommand{\calI}{\mathcal{I}}
\begin{document}

\title{On the Fairness of Normalized $p$-Means \\ for Allocating Goods and Chores}
\date{}

\author[a]{Owen Eckart}
\author[a]{Alexandros Psomas}
\author[a]{Paritosh Verma}

\affil[a]{Department of Computer Science, Purdue University.

\texttt{{\{oeckart,apsomas,verma136\}@cs.purdue.edu}}}

\maketitle

\begin{abstract}
Allocating items in a fair and economically efficient manner is a central problem in fair division. We study this problem for agents with additive preferences, when items are all goods or all chores, divisible or indivisible. The celebrated notion of Nash welfare is known to produce fair and efficient allocations for both divisible and indivisible goods; there is no known analogue for dividing chores. The Nash welfare objective belongs to a large, parameterized family of objectives called the $p$-mean welfare functions, which includes other notable members, like social welfare and egalitarian welfare. However, among the members of this family, only the Nash welfare produces fair allocations for goods. Incidentally, Nash welfare is also the only member that satisfies the axiom of scale invariance, which is crucially associated with its fairness properties.

We define the class of ``normalized $p$-mean'' objectives, which imparts the missing key axiom of scale invariance to the $p$-mean family. Our results show that optimizing the normalized $p$-mean objectives produces fair and efficient allocations when the items are goods or chores, divisible or indivisible. For instance, the normalized $p$-means gives us an infinite class of objectives that produce $(i)$ proportional and Pareto efficient allocations for divisible goods, $(ii)$ approximately proportional and Pareto efficient allocations for divisible chores, $(iii)$ $\EFone$ and Pareto efficient allocations for indivisible goods for two agents, and $(iv)$ $\EFone$ and Pareto efficient allocations for indivisible chores for two agents.
\end{abstract}

\section{Introduction}

 We study a central problem in fair division: how to allocate items in a fair and economically efficient manner among $n$ agents with additive preferences. We focus on two settings: the case where items are all goods, i.e., every agent $i$ has a value $v_{ij} \geq 0$ for each item $j$, or they weakly prefer being allocated an item, and the case where all items are chores, i.e., every agent $i$ has a cost $c_{ij} \geq 0$ for each item $j$, or they weakly prefer not to be allocated an item. We further consider sub-cases based on whether the items (goods or chores) are all divisible or all indivisible.

Arguably the two most well-studied notions of fairness are \emph{proportionality} ($\PROP$) and \emph{envy-freeness} ($\EF$). Proportionality requires that each agent is at least as happy as in the allocation where she gets an $n^{th}$ of each item~\cite{steinhaus1948problem}.
Envy-freeness, a stronger notion than proportionality, requires that every agent prefers her allocation to the allocation of any other agent~\cite{foley1966resource}. For divisible items, envy-freeness, and thereby proportionality, can be simply achieved by splitting each item equally among all the agents. However, for indivisible items, even proportional allocations may not exist: consider the simple case of one indivisible item and two agents. Hence, discrete relaxations of envy-freeness and proportionality like \emph{envy-freeness up to one item} ($\EFone$) and \emph{proportionality up to one item} ($\PROPone$) are studied in the context of indivisible items. $\EFone$ and $\PROPone$ respectively require the $\EF$ and $\PROP$ guarantee to be satisfied upon the addition or removal of one item from an appropriate bundle~\cite{lipton2004approximately,conitzer2017fair}. Allocations satisfying $\EFone$ and $\PROPone$ always exist for agents with additive preferences and can be computed via simple procedures like round-robin, envy-cycle elimination, etc~\cite{lipton2004approximately,conitzer2017fair,bhaskar2021approximate,caragiannis2019unreasonable}. 
In terms of economic efficiency, the notion of Pareto efficiency is central. An allocation is Pareto efficient if increasing an agent's utility necessarily results in a decrease in another agent's utility. Pareto efficiency by itself is also easy to achieve, e.g. via a serial dictatorship. However, the problem of achieving fairness in conjunction with efficiency is significantly more challenging.


Half a century ago, Varian~\cite{varian1974equity} gave an elegant solution to this problem, for divisible items via the concept of \emph{competitive equilibrium of equal incomes} (CEEI). In CEEI, each agent is initially endowed with a budget of \$1. 
A competitive equilibrium is an allocation and prices for the items such that agents exhaust their budget, buy only their favorite bundle under the given prices, and the market clears. This competitive equilibrium is both economically efficient as well as envy-free. CEEI results in such allocations for both divisible goods and divisible chores. For divisible goods it is also known that allocations realized at CEEI are precisely the allocations that maximize the \emph{Nash welfare} objective~\cite{arrow1981handbook}, defined as the product of agents' utilities~\cite{nash1950bargaining}. Moreover, Nash welfare maximizing allocations for divisible goods (which are fair and efficient) can be computed by solving the convex program of Eisenberg and Gale~\cite{eisenberg1959consensus}.

The potent fairness properties of Nash welfare also translate to the case of indivisible goods: the Kalai-prize winning work of Caragiannis et al.~\cite{caragiannis2019unreasonable} shows that integral allocations that maximize the Nash welfare are $\EFone$, in addition to being Pareto efficient. Complementing this result, Yuen
and Suksompong~\cite{yuen2023extending} prove that Nash welfare is, in fact, the unique welfarist rule that produces $\EFone$ allocations. A welfarist rule for a \emph{welfare function} $f:\mathbb{R}^n \mapsto \mathbb{R}$ chooses an allocation $\ialloc$ that optimizes  $f(v_1(\ialloc_1), v_2(\ialloc_2), \ldots, v_n(\ialloc_n))$ among the set of all allocations, where $v_i(\ialloc_i)$ is the utility of agent $i$ in allocation $\ialloc$. Unfortunately, for indivisible chores there is no known analogue for Nash welfare~\cite{ebadian2022fairly,garg2023new}; no welfarist rule is known to even produce PROP1 and Pareto efficient allocations for indivisible chores.

This unreasonable fairness of the Nash welfare objective when allocating goods can be attributed to its strong axiomatic foundations: it satisfies numerous desirable axioms like symmetry, the irrelevance of unconcerned agents, the Pigou-Dalton Principle, and Pareto optimality, among others. Nash welfare belongs to a large family of objectives called the $p$-mean welfare functions, which also contain prominent objectives like \emph{social welfare} (sum of utilities of the agents), \emph{egalitarian welfare} (minimum utility among the agents), etc. Incidentally, all members of the $p$-mean family satisfy the aforementioned axioms. However, among members of this family, the Nash welfare alone is fair. This can be seen by considering an instance with two agents and ten goods, all of which are equally valued at $1$ by the first agent and at $1$ billion by the second: maximizing the social welfare results in agent 2 getting all the goods, maximizing the egalitarian welfare gives all but one item to agent 1, whereas maximizing the Nash welfare results in each agent getting five goods---a fair outcome.

A possible explanation for why Nash welfare stands out as the unique, remarkably fair member of the $p$-mean family is that the Nash welfare, alone, satisfies the axiom of scale invariance, which states that for any agent $i$, multiplying her valuations for all the items by a fixed positive number shouldn't change the outcome allocation. 
The significance of scale invariance is apparent in the previous example, where it plays a key role in explaining why the Nash welfare produces a fair allocation and others don't.


With an aim to rectify this inherent unfairness of the $p$-mean family, we define the family of ``normalized p-means” objectives, which, by design, makes the scale of values inconsequential. Simply put, the goal of this paper is to comprehensively explore the landscape of fairness and efficiency properties of allocations obtained by optimizing the family of normalized $p$-mean objectives. The entirety of our findings in this paper can be succinctly distilled in the following message: 

\begin{quote}
\textit{For agents with additive preferences, optimizing the normalized $p$-mean of utilities or disutilities results in fair and efficient allocations for divisible and indivisible goods, as well as divisible and indivisible chores.}
\end{quote}

\subsection{Our Contributions}


Let $\items$ be the set of items we'd like to allocate to $n$ additive agents.
Let us first define the class of normalized $p$-mean objectives. Recall that, for $p \in \mathbb{R}$, the generalized $p$-mean of numbers $s_1, \dots, s_n \geq 0$ is defined as $\mean_p(s_1, s_2, \ldots, s_n) \coloneqq \left( \frac{1}{n} \sum_{i \in [n]} s_i^p\right)^{\frac{1}{p}}$. As $p$ approaches $0$ this becomes the geometric mean, for $p=1$ it is the arithmetic mean, for $p=-1$ it is the harmonic mean, if $p$ approaches $-\infty$ it becomes the minimum function, and so on. The normalized $p$-mean of utilities of an allocation $\dalloc = (\dalloc_1, \dots, \dalloc_n)$ of goods is defined as $\mean_p\left( \frac{v_1(\dalloc_1)}{v_1(\items)}, \allowbreak \ldots \allowbreak, \frac{v_n(\dalloc_n)}{v_n(\items)} \right)$, where $v_i(\dallocy)$ is the value of agent $i$ for goods $\dallocy$. Similarly, for the case of chores, the normalized $p$-mean of disutilities is $\mean_p\left( \frac{c_1(\dalloc_1)}{c_1(\items)}, \ldots, \frac{c_n(\dalloc_n)}{c_n(\items)} \right)$, where $c_i(\dallocy)$ is the disutility of agent $i$ for getting chores $\dallocy$. That is, the normalized $p$-mean objective is simply the $p$-mean of the utilities/disutilities after they have been normalized so that the total utility/disutility for $\items$ is equal to $1$.
Note that these objectives do not fall under the definition of welfare functions (also known as collective utility functions), since the value of the normalized $p$-mean depends not only on the individual utilities/disutilities of the agents' but also on their value/cost for $\items$. 
A summary of results for optimizing normalized $p$-means of utilities/disutilities can be seen in~\Cref{table: results}.

\begin{table}[t]
\centering
\setlength{\extrarowheight}{4pt} 
\caption{Summary of our results for optimizing normalized $p$-means of utilities/disutilities.}
\begin{tabular}{lcc}
\toprule
 & \textbf{Goods} & \textbf{Chores} \\ \midrule
\multirow{4}{*}{\centering\textbf{Divisible}} & 
\multirow{3}{*}{%
    \begin{tabular}{@{}c@{}}
        \textcolor{green!70!black}{\ding{51}} PROP + PO for $p \leq 0$ \\[5pt]
        \cline{1-1}
        \textcolor{red}{\ding{56}} EF for $p \neq 0$ \\[5pt]
        \cline{1-1}
        \textcolor{red}{\ding{56}} PROP for $p > 0$ \\
    \end{tabular}
} & 
\multirow{3}{*}{%
    \begin{tabular}{@{}c@{}}
        \textcolor{green!70!black}{\ding{51}} $n^{1/p}$-PROP + PO for $p \geq 1$ \\[5pt]
        \cline{1-1}
        \textcolor{red}{\ding{56}} $n^{1/p}\left(1- \omega \left(\frac{\log{p}}{p}\right)\right)$-PROP for $p \geq 1$ \\[5pt]
        \cline{1-1}
        \textcolor{red}{\ding{56}} PROP for $p  \in \mathbb{R}$ \\
    \end{tabular}
} \\
& & \\ \\
& & \\ \midrule
\multirow{5}{*}{\centering\textbf{Indivisible}} & 
\multirow{4}{*}{%
    \begin{tabular}{@{}c@{}}
        \textcolor{green!70!black}{\ding{51}} PROP1 + PO via rounding for $p\leq 0$ \\[3pt]
        \cline{1-1}
        \textcolor{green!70!black}{\ding{51}} $n=2$: EF1 + PO for $p \leq 0$ \\[3pt]
        \cline{1-1}
        \textcolor{red}{\ding{56}} $n\geq 2$: PROP1 for $p > 0$ \\[3pt]
        \cline{1-1}
        \textcolor{red}{\ding{56}} $n>2$: PROP1 for $p < 0$ \\
    \end{tabular}
} & 
\multirow{4}{*}{%
    \begin{tabular}{@{}c@{}}
        \textcolor{green!70!black}{\ding{51}} $n^{1/p}$-PROP1 + PO via rounding for $p\geq 1$ \\[3pt]
        \cline{1-1}
        \textcolor{green!70!black}{\ding{51}} $n=2$: EF1 + PO for $p \geq 2$ \\[3pt]
        \cline{1-1}
        \textcolor{red}{\ding{56}} $n = 2$: PROP1 for $p < 2$ \\[3pt]
        \cline{1-1}
        \textcolor{red}{\ding{56}} $n > 2$: PROP1 for $p < n$  \\
    \end{tabular}
} \\
& & \\ 
& & \\ \\ \\ \bottomrule
\end{tabular}
\label{table: results}
\end{table}


\paragraph{Dividing Goods}
We start our investigation with the case of goods in~\Cref{SEC:GOODS}. For divisible goods, we show that \emph{any} allocation that maximizes the normalized $p$-mean of agents' utilities, for \emph{any} $p \leq 0$, is proportional, in addition to being fractionally Pareto efficient (\Cref{theorem:divisible-goods}). This holds for any number of agents with additive valuations and any number of goods. This complements the known result that Nash welfare (normalized $0$-mean) maximization leads to envy-free allocations, by giving an \emph{infinite family of objectives} that result in proportional allocations. This result is tight: for any $p > 0$, there exists an instance (even with two agents and two items) such that \emph{any} allocation that maximizes the normalized $p$-mean of utilities is not proportional (\Cref{theorem:div-goods-negative}). Additionally, the fairness guarantee of~\Cref{theorem:divisible-goods} cannot be strengthened to envy-freeness (\Cref{lemma:div-goods-not-EF}). Together, these results paint a complete picture of the fairness and efficiency properties of the normalized $p$-mean objective for divisible goods.

Moving to indivisible goods, we prove that we can, in fact, obtain $\PROPone$ and $\fPO$ integral allocations by rounding the divisible allocations obtained by maximizing the normalized $p$-mean of utilities for any $p \leq 0$ (\Cref{theorem:rounding-goods}). Towards proving this, we show that any divisible allocation that maximizes the normalized $p$-mean, for any $p \leq 0$, corresponds to a competitive equilibrium with \emph{unequal} incomes. We can then use a rounding algorithm of Barman and Krishnamurthy~\cite{barman2019proximity} and obtain a $\PROPone$ and $\fPO$ allocation.

Thus, by maximizing the normalized $p$-mean objective we can implicitly get $\PROPone$ allocations. However, can we achieve the stronger $\EFone$ guarantee? Surprisingly, we show that for two agents and any number of goods, any integral allocation that maximizes the normalized $p$-mean objective for \emph{any} $p \leq 0$ is $\EFone$ and $\PO$ (\Cref{theorem:indiv-goods-EF1}). Notably, this result seems to contradict the characterization of Yuen and Suksompong~\cite{yuen2023extending}---which holds even for two agents---that establishes Nash welfare as the only welfarist rule that produces $\EFone$ allocations. Our result is consistent with this characterization because, as mentioned earlier, the normalized $p$-mean objective is not a welfarist rule. Our result is tight in two orthogonal ways (\Cref{theorem:indiv-goods-tight-wrt-n}): first, maximizing the normalized $p$-mean objective for any $p>0$ and two agents doesn't yield $\PROPone$ (and hence not $\EFone$) allocations; second, for three (or more) agents, maximizing the normalized $p$-mean objective for $p<0$ also does not yield even $\PROPone$ allocations. 

\paragraph{Dividing Chores}
We study the case that items are chores in~\Cref{SEC:CHORES}. We begin by establishing sweeping negative results concerning the fairness properties of welfarist rules; see~\Cref{sec:chores negative result} for the definition. We prove that there is no welfarist rule for divisible chores that produces $\bEF$ allocations for \emph{any} $\beta \geq 1$, or $\bPROP$ allocations for $\beta \in [1,n)$ (\Cref{theorem:non-norm-neg-res}), noting that for chores, the $n$-$\mathrm{PROP}$ guarantee is trivially satisfied by every allocation. A similar impossibility can be established for indivisible chores (in fact, as a corollary to~\Cref{theorem:non-norm-neg-res}): there is no welfarist rule for indivisible chores that produces $\bEFk$ allocations for \emph{any} $\beta \geq 1$ and $k \geq 1$, or $\bPROPk$ allocations for \emph{any} $\beta \in [1,n)$ and $k \geq 1$, noting that $n$-$\mathrm{PROP}k$ is trivially satisfied by every allocation. See~\Cref{SEC:PRELIMS} for the definitions of these notions. These impossibility results show a stark contrast between chores and goods, where the Nash welfare rule produces $\EF$ (resp. $\EF1$) allocations for divisible (resp. indivisible) goods. 

Next, we prove that minimizing the normalized $p$-mean of agents' disutilities for $p\geq 1$ results in $n^{1/p}$-$\PROP$ and $\fPO$ allocations of divisible chores (\Cref{theorem:div-chore-prop}); this contrasts with the case of divisible goods, where we obtain an exact $\PROP$ guarantee. We show that this approximation guarantee is essentially tight. Specifically, for every $p \geq 1$, there exist instances with $n$ agents where every allocation that minimizes the normalized $p$-mean of disutilities is $n^{1/p}(1 - \Theta(\frac{\log{p}}{p}))$-$\PROP$ (\Cref{theorem:chores-div-neg-many-agents}). Additionally, minimizing the normalized $p$-mean objective for any $p \in \mathbb{R}$ does not result in $\PROP$ allocations, showing yet another interesting separation between divisible goods and chores (\Cref{theorem:chores-div-neg-many-agents}). 

For indivisible chores, allocations that satisfy $n^{1/p}$-$\PROPone$ and $\fPO$, for $p\geq 1$, can be obtained by first minimizing the normalized $p$-mean objective over divisible allocations, and then rounding using a procedure of~\cite{branzei2023algorithms,barman2019proximity} (\Cref{theorem:rounding-chores}). Similar to the case of goods, we again exploit a market interpretation of the normalized $p$-mean optimal divisible allocation.

Finally, our work identifies the first natural counterpart of Nash welfare for indivisible chores: for two agents and any number of indivisible chores, we prove that every integral allocation that minimizes the normalized $p$-mean of disutilities, for \emph{any} $p \geq 2$, is $\EFone$ and $\PO$ (\Cref{theorem:chore-EF1-result}). Interestingly, $p=2$ happens to be the transition point for the $\EFone$ guarantee, i.e.,  minimizing the normalized $(2-\epsilon)$-mean objective for any $\epsilon > 0$ does not result in $\EFone$ allocations. Specifically, for $n=2$ agents and all $p < 2$, minimizing the normalized $p$-mean of disutilities may not result in $\PROPone$ (and therefore not $\EFone$) allocations (\Cref{theorem:indiv-chores-negative-wrt-p}). Also, for three or more agents, and all $p < n$, minimizing the normalized $p$-mean of disutilities may not result in $\PROPone$ (and therefore not $\EFone$) allocations (\Cref{thm: 14}). Therefore, for all $p$, there exist instances for which optimizing the normalized $p$-mean of disutilities doesn't yield $\PROPone$ allocations. Interestingly, however, our results leave open the possibility that, given an instance with $n$ agents, a range of values of $p$ (with $p \geq n$) always works.

\subsection{Related Work}

Collective welfare, whether it is social, egalitarian, or Nash welfare, is one of the key objectives in computational economics and has been extensively studied in settings beyond fair division, including auctions~\cite{vickrey1961counterspeculation,clarke1971multipart}, voting~\cite{brandt2016handbook}, facility location~\cite{alon2010strategyproof}, job scheduling~\cite{graham1979optimization}, matching~\cite{abdulkadiroglu2013matching}, etc. 
The problem of (approximately) optimizing the $p$-mean welfare, while simultaneously providing fairness guarantees, has attracted significant attention in recent years. Barman et al.~\cite{barman2020tight} show, for indivisible goods and agents with subadditive valuations, how to efficiently compute an allocation that achieves an $O(n)$ approximation to the optimal $p$-mean welfare, for any $p \leq 1$, and, in the same setting, Chaudhury et al.~\cite{chaudhury2021fair} give a polynomial-time algorithm that outputs an allocation that satisfies a prominent fairness notion (namely, $1/2$-EFX or $(1 - \epsilon)$-EFX with bounded charity; see Chaudhury et al.~\cite{chaudhury2021fair} for the definitions of these notions) as well as achieves an $O(n)$ approximation to the $p$-mean welfare, for all $p \leq 1$ simultaneously. Barman and Sundaram~\cite{barman2021uniform} study indivisible goods and agents with identical subadditive valuations and show how to efficiently find an allocation that approximates the optimal $p$-mean welfare within a constant, for all $p \leq 1$. 
For indivisible goods and agents with identical and additive valuations, Garg et al.~\cite{garg2022tractable} give a PTAS for the problem of computing the $p$-mean welfare maximizing allocation. Viswanathan
and Zick~\cite{viswanathan2023general} study indivisible goods allocation under matroid rank valuations, and give a strategyproof mechanism which outputs the optimal \emph{weighted} $p$-mean welfare allocation for any $p \leq 1$, as well as guarantees several popular fairness criteria, e.g. the maximin fair share guarantee; in the same setting, maximizing $0$-mean welfare (i.e., Nash welfare) was shown to guarantee group strategyproofness~\cite{barman2022truthful}.
Finally, Barman et al.~\cite{barman2022universal} study divisible goods allocated online and show how to approximate the $p$-mean welfare for $p \leq 1$.
To the best of our knowledge, we are the first to consider $p$-mean welfare for the case of chores.

Valuation functions that are normalized to add up to $1$ are common in the computational social choice literature in the context of optimizing social welfare;  see Aziz~\cite{aziz2020justifications} for a discussion.
To the best of our knowledge, the only other work that normalizes valuations to explicitly maximize a function so that the output allocation satisfies additional (fairness) properties is Plaut and Roughgarden~\cite{plaut2020almost}, who prove that for two agents with additive valuations over indivisible goods, the leximin rule on normalized valuations gives EFX allocations (a notion stronger than $\EFone$).




While maximizing Nash welfare for indivisible goods results in $\EFone$ and $\PO$ allocations, the existence of such allocations for indivisible chores and additive costs remains an open problem; the absence of an analogue to Nash welfare for chores also makes this problem challenging. Despite this, $\EFone$ and $\PO$ allocations of indivisible chores are known to exist in various special cases like bivalued additive preferences~\cite{ebadian2022fairly, garg2022fair} and dichotomous supermodular preferences~\cite{barman2023fair}. Recently, Garg et al.~\cite{garg2023new} prove that for indivisible chores and agents with additive preferences, $\EFone$ and $\fPO$ allocations exist when there are three agents, as well as when there are at most two disutility functions; Garg et al.~\cite{garg2023new} also show that EFX and $\fPO$ allocations exist for three agents with bivalued disutilities. Branzei and Sandomirskiy~\cite{branzei2023algorithms} prove that $\PROPone$ and $\PO$ allocations exist for indivisible chores and additive agents, as well as give algorithms for finding such allocations. For divisible chores, while the existence of $\EF$ and $\PO$ allocations follow from the existence of CEEI~\cite{bogomolnaia2017competitive, varian1974equity}, their efficient computation remains an open problem~\cite{garg2022fair}.

\section{Preliminaries}\label{SEC:PRELIMS}


    We study the problem of allocating a set $\items$ of $m$ items among a set $\agents$ of $n$ agents. We consider two cases: the case that all items are goods, i.e., agents are (weakly) happier when receiving more items, and the case that all items are chores, i.e., agents are (weakly) less happy when receiving more items.
    We study both \emph{divisible} items and \emph{indivisible} items. Divisible items can be fractionally divided amongst the agents. A fractional allocation $\dalloc \in [0,1]^{n\cdot m}$ is a partition of $\items$ among $\agents$, where $x_{ij}$ denotes the fraction of item $j$ allocated to agent $i$, and $\dalloc_i = (x_{i1}, x_{i2}, \ldots, x_{im})$ represents all the items allocated to agent $i$. Indivisible items can be allocated to only one agent, i.e., $x_{ij} \in \{0,1\}$, and $\dalloc \in \{0,1\}^{n\cdot m}$ is referred to as an \emph{integral} allocation. 
    We often refer to $\dalloc_i$ as a \emph{bundle} of items.
    Whether our allocations are fractional or integral, we ask that all items are fully allocated, i.e. $\sum_{i \in \agents} x_{ij} = 1$ for all items $j \in \items$.
    For ease of exposition, we will use $\dalloc$ to denote fractional allocations and $\ialloc$ to denote integral allocations.
    For integral allocations $\ialloc$, it is often convenient to interpret $\ialloc_i$ as the subset of $\items$ allocated to agent $i$, therefore we overload notation and use $\ialloc_i \subseteq \items$ to also denote the set of items allocated to agent $i$. 

    \subsection{Preliminaries for Goods}

    When allocating goods, we assume that each agent $i \in \agents$ has a \emph{valuation (or utility) function} $v_i : [0,1]^m \mapsto \mathbb{R}_{\geq 0}$ which specifies the value $v_i(\dalloc_i)$ agent $i$ has for her allocation $\dalloc_i$. In this paper, we focus on \emph{additive} valuation functions. A valuation function $v_i$ is additive iff the value of any bundle of items $\dalloc_i$ can be expressed as $v_i(\dalloc_i) = \sum_{j=1}^n x_{ij} v_{ij}$, where $v_{ij} \in \mathbb{R}_{\geq 0}$ is the value that agent $i$ derives from being allocated good $j$ in its entirety. 


An (integral or fractional) allocation $\dalloc$ of goods is envy-free or $\EF$ iff for all pairs of agents $i, j \in \agents$, we have $v_i(\dalloc_i) \geq v_i(\dalloc_j)$, i.e., agent $i$'s value for her bundle is at least her value for $j$'s bundle. An (integral or fractional) allocation $\dalloc$ is \emph{proportional} or $\PROP$ iff for every agent $i \in \agents$, we have $v_i(\dalloc_i) \geq v_i(\items)/n$. Indeed, if an allocation is $\EF$ then it is $\PROP$. It is well known that $\EF$ and $\PROP$ allocations may not exist for indivisible goods: consider the case of two agents and one good that they both value. In this paper, we consider the following two well-studied relaxations of envy-freeness and proportionality for indivisible goods known as $\EFone$ and $\PROPone$. An integral allocation $\ialloc = (\ialloc_1, \ldots, \ialloc_n)$ is \emph{envy-free up to one good} or  $\EFone$ iff for all pairs of agents $i, j \in \agents$ with $\ialloc_j \neq \emptyset$ we have $v_i(\ialloc_i) \geq v_i(\ialloc_j \setminus \{g\})$, for some $g \in \ialloc_j$, i.e., agent $i$ prefers her bundle over agent $j$'s bundle after the removal of one item from agent $j$'s bundle. An integral allocation $\ialloc = (\ialloc_1, \ldots, \ialloc_n)$ is \emph{proportional up to one good} or $\PROPone$ if for every agent $i \in \agents$ there exists a good $g \in \items \setminus \ialloc_i$ such that $v_i(\ialloc_i \cup \{g\}) \geq v_i(\items)/n$.
    
    An (integral or fractional) allocation $\dalloc$ is \emph{fractionally Pareto efficient} or $\fPO$ iff there \emph{does not} exist any fractional allocation $\dallocy$ such that for all agents $i \in \agents$ we have $ v_i(\dallocy_i) \geq v_i(\dalloc_i)$,
    and there exists an agent $j \in \agents$ for which $ v_j(\dallocy_j) > v_j(\dalloc_j)$. Intuitively, in a Pareto optimal allocation, we cannot increase the utility of one agent without decreasing the utility of some other agent. If there does not exist any integral allocation $\dallocy$ satisfying the aforementioned conditions then $\dalloc$ is \emph{Pareto efficient} or $\PO$.

    \subsection{Preliminaries for Chores}

    When allocating chores, we assume that each agent $i \in \agents$ has a \emph{cost (or disutility) function} $c_i : [0,1]^m \rightarrow \mathbb{R}_{\geq 0}$. Similar to goods, we will assume that cost functions are additive, i.e., $c_i(\dalloc_i) = \sum_{j=1}^n x_{ij} c_{ij}$, where $c_{ij} \in \mathbb{R}_{\geq 0}$ is the cost or disutility that agent $i$ derives from being allocated chore $j$ in its entirety. 
    

    An (integral or fractional) allocation $\dalloc = (\dalloc_1, \ldots, \dalloc_n)$ of chores is $\beta$ \emph{approximately envy-free} or $\beta$-$\EF$ for $\beta \geq 1$ iff for all pairs of agents $i, j \in \agents$ we have $c_i(\dalloc_i) \leq \beta \cdot c_i(\dalloc_j)$. If $\beta = 1$, then $\dalloc$ is \emph{envy-free} or $\EF$. An (integral or fractional) allocation of chores $\dalloc$ is $\beta$-\emph{approximately proportional} or $\beta$-$\PROP$, for $\beta \geq 1$, iff for every agent $i\in \agents$ it holds that $c_i(\ialloc_i) \leq \beta \cdot c_i(\items)/n$. If $\beta = 1$, then $\dalloc$ is \emph{proportional} or $\PROP$. An integral allocation $\ialloc$ of chores is \emph{envy-free up to one chore} or $\EFone$ iff for all pairs of agents $i, j \in \agents$ with $\ialloc_i \neq \emptyset$, $c_i(\ialloc_i \setminus \{t\}) \leq c_i(\ialloc_j)$, for some $t \in \ialloc_i$.  An integral allocation $\ialloc$ is {\emph proportional up to one chore} or $\PROPone$ iff for every agent $i\in \agents$ with $\ialloc_i \neq \emptyset$ there exists a chore $t \in \ialloc_i$ such that $c_i(\ialloc_i \setminus \{t\}) \leq c_i(\items)/n$. 
    
    We also consider the following relaxations of $\EFone$ and $\PROPone$.
    \begin{definition}[$\bEFk$]
    An integral allocation $\ialloc$ of chores is $\bEFk$, for an integer $k \geq 1$ and $\beta \geq 1$, iff for all agents $i,j \in \agents$, we have $c_i(\ialloc_i \setminus S) \leq \beta \cdot c_i(\ialloc_j)$ for some $S \subseteq \ialloc_j$ with $|S| \leq k$.
    \end{definition} 

    \begin{definition}[$\bPROPk$]
    An integral allocation $\ialloc$ of chores is $\bPROPk$, for $\beta \geq 1$ and integer $k \geq 1$, if for every agent $i\in \agents$, there exists a set of chores $S \subseteq \ialloc_i$ with $|S| \leq k$ such that $c_i(\ialloc_i \setminus S) \leq \beta \cdot c_i(\items)/n$.
    \end{definition}

    For $\beta = 1$ and $k=1$, $\bEFk$ is exactly $\EFone$ and $\bPROPk$ is exactly $\PROPone$. Also, $\bEFk$ implies $\beta'\text{-}\mathrm{EF}k'$ and $\bPROPk$ implies $\beta'\text{-}\mathrm{PROP}k'$, for $\beta' \geq \beta$ and $k' \geq k$.
    
    An (integral or fractional) allocation $\dalloc$ of chores is $\fPO$ iff there does not exist a fractional allocation $\dallocy$ such that for all agents $i \in \agents$ we have $c_i(\dallocy_i) \leq c_i(\dalloc_i)$ and there is an agent $j \in \agents$ for which $c_j(\dallocy_j) < c_j(\dalloc_j)$. If there does not exist any integral allocation $\dallocy$ satisfying the aforementioned conditions then $\dalloc$ is \emph{Pareto efficient} or $\PO$.

\subsection{Divisible to Indivisible Reductions}

The following lemma establishes a relationship between the fairness properties obtained by optimizing any function of agents' (dis)utilities for indivisible and divisible items. Intuitively, if optimizing a function results in fair indivisible allocations, optimizing the same function will result in fair divisible allocations as well. This lemma will be useful in our technical sections when establishing positive results for divisible items, but also impossibility results for indivisible items.

\begin{lemma}\label{lemma:div-to-indiv}
    Let $f:\mathbb{R}^n \mapsto \mathbb{R}$ be a function such that minimizing $f$ for indivisible chore division instances results in $\bEFk$ (resp. $\bPROPk$) allocations. Then, minimizing $f$ for divisible chore division instances results in $\bEF$ (resp. $\bPROP$) allocations. Similarly, for the case of goods, if $f:\mathbb{R}^n \mapsto \mathbb{R}$ is a function such that maximizing $f$ for indivisible goods results in $\bEFk$ (resp. $\bPROPk$) allocations, then, maximizing $f$ for divisible goods results in $\bEF$ (resp. $\bPROP$) allocations.
\end{lemma}

\begin{proof}   
    We prove the lemma for the case of chores; the argument for the case of goods is identical. Consider any instance with a set $\items$ of divisible chores, and assume that minimizing $f$ in this instance results in an allocation $\dalloc^*$ that is not $\bEF$ (resp. $\bPROP$). This means that there exist agents $i, j$ such that $c_i(\dalloc^*_i) > \beta \cdot c_i(\dalloc^*_j)$ (resp., for $\PROP$, there exists an agent $i$ such that $c_i(\dalloc^*_i) > \beta \cdot c_i(\items)/n$). Let $\epsilon > 0$ be such that $c_i(\dalloc^*_i) - \epsilon = \beta \cdot c_i(\dalloc^*_j)$ (resp. $c_i(\dalloc^*_i) - \epsilon = \beta \cdot c_i(\items)/n$). 
    To arrive at a contradiction, we will construct an instance with indivisible chores such that (i) the allocation that corresponds to $\dalloc^*$ is feasible (and therefore optimal) with respect to $f$, and (ii) the cost of agent $i$ is strictly less than $\epsilon/k$ for all indivisible chores. We can achieve (i) as follows. Let $\hat{\mathcal{I}}$ be an instance with set of items $\hat{\items}$, where for each chore $t \in \items$, $\hat{\items}$ has $n$ chores, $t_1, \dots, t_n$, such that $t_p \in \hat{\items}$ corresponds to a $\dalloc^*_{pt}$ fraction of chore $t \in \items$. That is, for every agent $\ell$ the cost of the chore $t_p$, $\hat{c}_{\ell t_p} = \dalloc_{pt}^* \cdot c_{\ell t}$. Clearly, allocating chore $t_p$ to agent $p$ is optimal with respect to $f$, as it gives each agent $i$ disutility exactly $c_i(\dalloc^*_i)$. To additionally achieve (ii), consider an instance with a set of chores $\hat{\items^{\epsilon}}$, where $\hat{\items^{\epsilon}}$ is constructed by splitting every chore $t \in \hat{\items}$ into $z$ identical chores, for $z$ large enough such that agent $i$'s cost is at most $\epsilon/2k$ for every chore in $\hat{\items^{\epsilon}}$. It is easy to see that allocating all pieces that correspond to a chore $t \in \hat{\items}$ to the agent that got $t$ in instance $\hat{\mathcal{I}}$ gives each agent disutility $c_i(\dalloc^*_i)$ and is therefore optimal for $f$. However, this optimal allocation fails to satisfy the $\bEFk$ (resp. the $\bPROPk$) guarantee for agent $i$, because for any set of $k$ chores, $S$, we will have $c_i(\dalloc^*_i) - c_i(S) \geq c_i(\dalloc^*_i) -  \epsilon/2 > \beta \cdot c_i(\dalloc^*_j)$ (resp. $c_i(\dalloc^*_i) - c_i(S) \geq c_i(\dalloc^*_i) -  \epsilon/2 > \beta \cdot c_i(\items)/n$), contradicting the fact that $f$ results in $\bEFk$ (resp. $\bPROPk$) allocations.  
\end{proof}

\subsection{Preliminaries on Markets}

Given divisible goods $\items$ and agents $\agents$ having additive valuation functions $\{v_i\}_i$, a Fisher market equilibrium is defined as a triple $(\dalloc, \prices, \budgets)$ where $\dalloc$ is an allocation, $\prices = (p_1, p_2, \ldots, p_m) \in \mathbb{R}_{\geq0}^m$ defines the \emph{price} $p_j$ of each good $j\in \items$, and $\budgets = (b_1, b_2, \ldots, b_n) \in \mathbb{R}_{\geq0}^n$ denotes the \emph{budget} $b_i$ of each agent $i\in \agents$. A market equilibrium must satisfy the following three properties.
\begin{enumerate}
    \item \textbf{Market Clears:} For each good $j\in \items$, either $p_j=0$ or $\sum_{i=1}^n x_{ij} = 1$. 
    \item {\bf Maximum bang-per-buck allocation:} If $x_{ij}>0$, then $v_{ij}/p_j \geq v_{i\ell}/p_\ell$ for every item $\ell \in \items$.
    \item {\bf Budget exhausted.} For each agent $i$ we have $b_i = \sum_{j=1}^m p_j x_{ij}$. 
\end{enumerate}
In any market equilibrium $(\dalloc, \prices, \budgets)$ for each agent $i\in \agents$, we will denote the set of all maximum bang-per-buck goods by $\MBB_i \coloneqq \{j \in \items : v_{ij}/p_j \geq v_{i\ell}/p_\ell \text{ for all } \ell \in \items\}$. Additionally, overloading notations $\MBB_i$ will also be used to represents the maximum bang-per-buck value, $\MBB_i = \max_{j}{ v_{ij}/p_j}$.
\begin{proposition}[First Welfare Theorem;~\cite{mas1995microeconomic}, Chapter 16]\label{theorem:first-welfare}
    In any Fisher market equilibrium $(\dalloc, \prices, \budgets)$, the allocation $\dalloc$ is always fractionally Pareto efficient.
\end{proposition}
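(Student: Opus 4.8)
The plan is to run the classical money-counting proof of the First Welfare Theorem, specialized to additive Fisher markets, as a proof by contradiction. First I would clear away two degeneracies: any agent who values every good at $0$ is indifferent among all bundles and can be disregarded, and any good priced at $0$ is either valued by nobody (and can be deleted) or valued by someone -- who then has infinite bang-per-buck and, since she buys only maximum bang-per-buck goods, would hold only zero-priced goods and thus have zero budget, a situation we exclude (in every equilibrium we actually invoke all budgets are strictly positive). After these reductions, every $\MBB_i$ is a well-defined positive real and every price $p_j$ is positive.

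The crux is two elementary facts for each agent $i$. Using budget exhaustion ($b_i = \sum_j p_j x_{ij}$) and the fact that the maximum bang-per-buck condition forces $v_{ij}/p_j = \MBB_i$ on every good $i$ actually holds a positive amount of, additivity gives
\[
  v_i(\dalloc_i) \;=\; \sum_j x_{ij} v_{ij} \;=\; \MBB_i \sum_j x_{ij} p_j \;=\; \MBB_i\, b_i ,
\]
while for \emph{any} fractional bundle $\dallocy_i$, additivity together with $v_{ij}/p_j \le \MBB_i$ gives $v_i(\dallocy_i) \le \MBB_i \sum_j p_j y_{ij}$. In words: at equilibrium agent $i$'s value equals $\MBB_i$ times her spending, and no bundle can give her more than $\MBB_i$ times its cost at the prices $\prices$.

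With these in hand I would suppose $\dalloc$ is not $\fPO$ and fix a dominating allocation $\dallocy$, so $v_i(\dallocy_i) \ge v_i(\dalloc_i)$ for all $i$ with strict inequality for some $k$. Chaining the two facts yields $\sum_j p_j y_{ij} \ge b_i$ for every $i$, strict for $k$. Summing over all agents, the left side is $\sum_j p_j \sum_i y_{ij}$ and the right side is $\sum_i b_i = \sum_i \sum_j p_j x_{ij} = \sum_j p_j \sum_i x_{ij} = \sum_j p_j$, the last step by market clearing (all $p_j>0$, so $\sum_i x_{ij}=1$). But feasibility of $\dallocy$ gives $\sum_i y_{ij} \le 1$ for every $j$, hence $\sum_j p_j \sum_i y_{ij} \le \sum_j p_j$ as well, producing $\sum_j p_j < \sum_j p_j$, a contradiction; therefore $\dalloc$ is $\fPO$.

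I do not expect any serious obstacle: after the preliminary reductions, every step is just additivity and rearrangement of finite sums. The one place that genuinely needs care -- and the only reason the statement is not literally immediate -- is the up-front handling of zero prices and globally satiated agents; in the applications of this proposition in the sequel (competitive equilibria with strictly positive incomes arising from normalized $p$-mean optima) these degeneracies do not occur, so the reduction is automatic.
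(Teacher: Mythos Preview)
The paper does not supply its own proof of this proposition: it is stated as a citation to~\cite{mas1995microeconomic}, Chapter~16, and used as a black box. Your argument is exactly the classical money-counting proof of the First Welfare Theorem specialized to additive Fisher markets, and it is correct. Your caveat about zero prices and satiated agents is well placed; in the two places the paper invokes this proposition (the market equilibria arising from normalized $p$-mean optima in~\Cref{theorem:divisible-goods} and~\Cref{theorem:div-chore-prop}), all prices are explicitly shown to be strictly positive, so the degeneracies you flag do not arise and your reduction goes through automatically.
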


Fisher markets can analogously be defined for chores, see Appendix~\ref{appendix:market-prelims} for details.

\section{Allocating Goods}\label{SEC:GOODS}

In this section, we present our results for the case of goods. For ease of notation, for the remainder of this section, we use $\tilde{v}_i$ to refer to the \emph{normalized} valuation of agent $i$, i.e. $\tilde{v}_{ij} = v_{ij}/v_i(\items)$ and $\tilde{v}_{i}(\dalloc_i) = v_{i}(\dalloc_i)/v_i(\items)$.
For ease of exposition, we say that we maximize the \emph{normalized $p$-mean} of utilities to mean that we maximize the $p$-mean of normalized utilities, i.e., find the allocation $\dalloc$ with maximum $\mean_p( \tilde{v}_{i}(\dalloc_i), \ldots,  \tilde{v}_{n}(\dalloc_n) )$.

In~\Cref{sec:divisible goods}, we prove that for all $p \leq 0$, maximizing the normalized $p$-mean of utilities gives $\PROP$ and $\fPO$ allocations, and show that the KKT conditions of the corresponding convex program allow for a market interpretation of the optimal solution; for $p > 0$ this result does not hold. In~\Cref{sec:divisible goods rounding} we show how to round fractional $p$-mean maximizing allocations, for $p \leq 0$, to get integral allocations that are $\fPO$ and $\PROPone$. Finally, in~\Cref{sec:indivisible-goods} we prove that, for the case of two agents and $p \leq 0$, the integral allocation that maximizes the normalized $p$-mean of utilities is $\EFone$ and $\PO$; this is not true for more than two agents or $p > 0$. 

\subsection{Divisible Goods}\label{sec:divisible goods}

We prove the following theorem.

\begin{theorem}\label{theorem:divisible-goods}
    For any number of agents with additive valuations over divisible goods, and all $p \leq 0$, every allocation $\dalloc^*$ that maximizes the normalized $p$-mean of utilities is $\PROP$ and $\fPO$. Furthermore, there exist prices $p^*_j \geq 0$ for each item $j \in \items$ and budgets $b^*_i$ for each agent $i \in \agents$, such that $(\dalloc^*, \prices^*, \budgets^*)$ is a Fisher market equilibrium.
\end{theorem}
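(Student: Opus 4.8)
## Proof Proposal

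The plan is to write the maximization of the normalized $p$-mean as a convex program, take its KKT conditions, and read off both the market equilibrium structure and the $\PROP$ guarantee from those conditions; fractional Pareto efficiency will then follow for free from the First Welfare Theorem (\Cref{theorem:first-welfare}).

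\textbf{Step 1: Set up the convex program.} For $p \leq 0$, maximizing $\mean_p(\tilde v_1(\dalloc_1), \dots, \tilde v_n(\dalloc_n))$ over the allocation polytope $\{\dalloc \geq 0 : \sum_i x_{ij} = 1 \ \forall j\}$ is equivalent to optimizing a separable concave objective in the $\tilde v_i(\dalloc_i)$'s. Concretely, for $p < 0$ the function $\sum_i (\tilde v_i(\dalloc_i))^p$ is convex and we minimize it (the outer $(\cdot)^{1/p}$ is monotone decreasing, so minimizing the sum maximizes the mean); for $p = 0$ we instead maximize $\sum_i \log \tilde v_i(\dalloc_i)$, recovering the Eisenberg--Gale program. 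I would handle $p<0$ as the main case and note $p=0$ is the classical special case. One should first argue that at any optimum every agent has $v_i(\dalloc^*_i) > 0$ — otherwise the objective is $+\infty$ (for $p<0$) or $-\infty$ (for $p=0$), and moving an $\epsilon$-fraction of any good that agent $i$ values to agent $i$ strictly improves things — so the objective is differentiable at $\dalloc^*$ and KKT applies.

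\textbf{Step 2: Write down KKT and define prices/budgets.} Introduce a multiplier $p^*_j$ for each constraint $\sum_i x_{ij} = 1$. Stationarity in $x_{ij}$ gives, for the $p<0$ case, something of the form $|p| \, (\tilde v_i(\dalloc^*_i))^{p-1} \tilde v_{ij} \leq p^*_j$ with equality whenever $x^*_{ij} > 0$. Dividing, this says: for each agent $i$, the ratio $\tilde v_{ij}/p^*_j$ is maximized exactly on the goods $i$ is allocated — i.e. agent $i$ spends only on maximum-bang-per-buck goods with respect to prices $\prices^*$ and her normalized values (equivalently her raw values $v_{ij}$, since normalization only rescales a single agent's ratios uniformly). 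Define $b^*_i \coloneqq \sum_j p^*_j x^*_{ij}$; then budget-exhausted holds by definition, market-clearing holds because the $p^*_j$ are the multipliers of the $\sum_i x_{ij}=1$ constraints (and one checks $p^*_j > 0$, since the objective is strictly increasing in each agent's allocation so all goods are fully priced), and MBB holds by the stationarity condition. Hence $(\dalloc^*, \prices^*, \budgets^*)$ is a Fisher market equilibrium, and $\fPO$ follows from \Cref{theorem:first-welfare}.

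\textbf{Step 3: Derive proportionality.} This is where the normalization earns its keep, and I expect it to be the step requiring the most care. From the equality case of stationarity, for any agent $i$ and any good $j$ with $x^*_{ij}>0$ we get $b^*_i = \sum_{j} p^*_j x^*_{ij}$, and $p^*_j = |p|(\tilde v_i(\dalloc^*_i))^{p-1}\tilde v_{ij}$ on $i$'s goods, so summing gives $b^*_i = |p|(\tilde v_i(\dalloc^*_i))^{p-1} \cdot \tilde v_i(\dalloc^*_i) = |p| (\tilde v_i(\dalloc^*_i))^{p}$. For goods $i$ does \emph{not} get, $p^*_j \geq |p|(\tilde v_i(\dalloc^*_i))^{p-1}\tilde v_{ij}$, so $\sum_j p^*_j \geq |p|(\tilde v_i(\dalloc^*_i))^{p-1}\sum_j \tilde v_{ij} = |p|(\tilde v_i(\dalloc^*_i))^{p-1}$ (using $\sum_j \tilde v_{ij} = 1$, the crucial normalization). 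Therefore $b^*_i / (\sum_j p^*_j) \geq \tilde v_i(\dalloc^*_i)$. Now I need the reverse-type inequality $b^*_i \leq (\sum_j p^*_j)/n$: this should come from comparing agent $i$'s "spending power" against the total. The cleanest route is to show $\sum_i b^*_i = \sum_j p^*_j$ (total money equals total prices, immediate from market-clearing and budget-exhausted) together with the fact that, because $p \leq 0$, the optimal solution equalizes a Pigou--Dalton-type quantity forcing $b^*_i \le \big(\sum_j p^*_j\big)/n$ — equivalently, that the agent with the largest $(\tilde v_i(\dalloc_i^*))^p$ (hence, since $p \le 0$, the \emph{smallest} normalized utility) cannot have budget exceeding the average, which follows by a local exchange argument: if $b^*_i > (\sum_j p^*_j)/n$ then some agent $k$ has $b^*_k < (\sum_j p^*_j)/n \le b^*_i$, and transferring an infinitesimal MBB good from $i$ to $k$ improves the $p$-mean for $p<0$ (it decreases the large term $(\tilde v_i)^p$ by more than it increases the small term — precisely because $t \mapsto t^p$ is convex and we are moving mass from a high-$(\tilde v)^p$ agent, i.e. low-$\tilde v$... wait, need $i$ to be the high-utility agent here). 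The correct framing: rank agents by $\tilde v_i(\dalloc_i^*)$; the standard Eisenberg--Gale-style argument (adapted to general $p \le 0$ using convexity of $t^p$) shows spending is "balanced" in the sense that no agent's budget exceeds $1/n$ of the total, and combined with $b^*_i/(\sum_j p^*_j) \geq \tilde v_i(\dalloc^*_i)$ from above this yields $\tilde v_i(\dalloc^*_i) \leq b^*_i/(\sum_j p^*_j)$...

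Let me restate Step 3 more carefully to avoid that confusion. The right chain is: from MBB-at-optimum, for the goods $i$ holds we have $b_i^* = |p|(\tilde v_i(\dalloc_i^*))^p$, and for \emph{all} goods $\sum_j p_j^* \ge |p|(\tilde v_i(\dalloc_i^*))^{p-1}$, giving
\[
\tilde v_i(\dalloc_i^*) \;=\; \frac{|p|(\tilde v_i(\dalloc_i^*))^p}{|p|(\tilde v_i(\dalloc_i^*))^{p-1}} \;=\; \frac{b_i^*}{|p|(\tilde v_i(\dalloc_i^*))^{p-1}} \;\ge\; \frac{b_i^*}{\sum_j p_j^*}.
\]
So it suffices to prove $b_i^* \ge \big(\sum_j p_j^*\big)/n$ for every $i$. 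Suppose not: $b_i^* < \frac1n \sum_j p_j^*$. Since $\sum_k b_k^* = \sum_j p_j^*$, some agent $k$ has $b_k^* > \frac1n\sum_j p_j^* > b_i^*$, hence (by the displayed relation read in the other direction, $b_k^* = |p|(\tilde v_k)^p$) we get $(\tilde v_k(\dalloc_k^*))^p > (\tilde v_i(\dalloc_i^*))^p$. Transfer an $\epsilon$-fraction of some good in agent $k$'s bundle (one exists since $b_k^*>0$) to agent $i$. Because that good is MBB for $i$ (check: its per-price value for $i$ is $\ge$ any other good's — this needs the MBB structure; if the good is not MBB for $i$ one instead argues via the KKT inequality directly) the transfer changes $\tilde v_i(\dalloc_i^*)$ at rate proportional to $(\tilde v_i)^{1-p}$ and $\tilde v_k$ at rate proportional to its own MBB value; a short first-order computation using convexity of $t\mapsto t^p$ (valid for $p<0$) shows the objective $\sum(\tilde v)^p$ strictly decreases, contradicting optimality. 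For $p=0$ the same argument runs with $\log$ in place of $t^p$ and is exactly the textbook proof. This completes proportionality.

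\textbf{Main obstacle.} The genuinely delicate part is Step 3 — establishing the "no agent overspends the average" inequality $b_i^* \ge \frac1n\sum_j p_j^*$ via a local exchange argument that is uniform in $p \le 0$, being careful about (a) which direction the transfer goes, (b) whether the receiving agent actually treats the transferred good as MBB (handled by working with the raw KKT inequalities rather than assuming MBB membership), and (c) getting the sign of the first-order change right using convexity of $t\mapsto t^p$. Steps 1 and 2 are routine convex-programming bookkeeping; $\fPO$ is immediate once the market structure is in hand.
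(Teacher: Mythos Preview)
Your Steps 1 and 2 are essentially the paper's argument: write the convex program, extract from KKT the relation $|p|\,\tilde v_{ij}\,\tilde v_i(\dalloc_i^*)^{p-1}\le p_j^*$ with equality on held goods, and set $b_i^*=|p|\,\tilde v_i(\dalloc_i^*)^{p}$ to read off a Fisher equilibrium; $\fPO$ then comes from the First Welfare Theorem. That part is fine.

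Step 3 has a real gap. The intermediate claim ``$b_i^*\ge \tfrac1n\sum_j p_j^*$ for every $i$'' is false in general: since $b_i^*=|p|\,\tilde v_i(\dalloc_i^*)^p$ with $p<0$, the agent with the \emph{largest} normalized utility has the \emph{smallest} budget, and there is no reason it sits above the average (take two agents with $\tilde v_1=0.9$, $\tilde v_2=0.5$, $p=-1$). Your exchange argument does not yield a contradiction either: in your setup $b_k^*>b_i^*$ forces $\tilde v_k<\tilde v_i$, so you are transferring a good \emph{away from the poorer agent} $k$. The first-order change in the objective from moving good $j$ out of $k$'s bundle into $i$'s is $\epsilon\, p\big[\tilde v_i(\dalloc_i^*)^{p-1}\tilde v_{ij}-\tilde v_k(\dalloc_k^*)^{p-1}\tilde v_{kj}\big]$, and KKT gives exactly $|p|\tilde v_k(\dalloc_k^*)^{p-1}\tilde v_{kj}=p_j^*\ge |p|\tilde v_i(\dalloc_i^*)^{p-1}\tilde v_{ij}$, so the bracket is $\le 0$ and (with $p<0$) the change is $\ge 0$: no improvement, hence no contradiction. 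Two clean fixes: (a) only the agent with \emph{minimum} $\tilde v_i$ needs checking, and that agent has \emph{maximum} $b_i^*\ge \tfrac1n\sum_j p_j^*$, so your inequality $\tilde v_i\ge b_i^*/\sum_j p_j^*$ finishes it; or (b) follow the paper's route, which avoids budget shares altogether: from MBB one gets $\tilde v_i(\dalloc_r^*)\le \tilde v_r(\dalloc_r^*)^{p}\cdot\tilde v_i(\dalloc_i^*)^{1-p}$ for every $r$, summing over $r$ gives $1\le \tilde v_i(\dalloc_i^*)^{1-p}\sum_r\tilde v_r(\dalloc_r^*)^{p}$, and the equal split bounds the optimal objective by $n^{1-p}$, yielding $\tilde v_i(\dalloc_i^*)\ge 1/n$ directly.
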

\begin{proof}   
Without loss of generality, we assume that for each good $j \in \items$, there exists an agent $i \in \agents$ who values it positively, i.e., $v_{ij} > 0$. Note that, items that have zero value for every agent can be assigned arbitrarily without affecting economic efficiency or fairness notions like $\EF$, $\PROP$, etc. 
We analyze the convex program whose objective is to maximize the normalized $p$-mean of agents' utilities for $p<0$ (the $p=0$ case follows from the equivalence of CEEI and Nash welfare optimization). For the sake of notational convenience, let $k \coloneqq -p$ where $k>0$. Noting that maximizing $\left( \sum_{i=1}^n \frac{1}{n} \left( \sum_{j=1}^m \tilde{v}_{ij}x_{ij} \right)^{-k}\right)^{-1/k}$ is equivalent to minimizing $\sum_{i=1}^n \left( \sum_{j=1}^m \tilde{v}_{ij}x_{ij} \right)^{-k}$, this convex program (indeed, the objective is convex, since each term of the objective $\left( \sum_{j=1}^m\tilde{v}_{ij}x_{ij} \right)^{-k}$ is convex) can be written as follows:
\begin{align*}
    \text{minimize} \ \ \sum_{i=1}^n & \left( \sum_{j=1}^m \tilde{v}_{ij}x_{ij} \right)^{-k}\\
    \text{subject to: }~~ \sum_{i=1}^n x_{ij} & \leq 1 \ \ \text{, for all } j \in \items\\
    -x_{ij} & \leq 0 \ \ \text{, for all } i \in \agents, j \in \items
\end{align*}
    \paragraph{Analysis of the Convex Program.} Let $\dalloc^*$ be the optimal solution to this program, and let $p^*_j \geq 0$ and $\kappa^*_{ij} \geq 0$ be the associated optimal dual variables corresponding to the constraints $\sum_{i=1}^n x_{ij} \leq 1$ and $-x_{ij} \leq 0$, respectively. The KKT conditions imply that

    \begin{align*}
        p^*_{j} \left( \sum_{i=1}^n x^*_{ij} - 1 \right) & = 0, \text{ for all }  j \in \items \label{equation:kkt-1} \numberthis\\
        \kappa^*_{ij}x^*_{ij} & = 0, \text{ for all } i \in \agents, j \in \items \label{equation:kkt-2} \numberthis\\
         \frac{\partial}{\partial x_{ij}}  \sum_{i=1}^n  \left( \sum_{j=1}^m \tilde{v}_{ij}x_{ij} \right)^{-k} \Biggr|_{x^*_{ij}} & + p^*_j \,  \frac{\partial}{\partial x_{ij}} \left(\sum_{i=1}^n x_{ij} - 1\right) \Biggr|_{x^*_{ij}} + \kappa^*_{ij} \,  \frac{\partial}{\partial x_{ij}} (-x_{ij}) \Biggr|_{x^*_{ij}} = 0 ,\forall i \in \agents, j \in \items. \numberthis
    \end{align*}

Note that, letting $\tilde{v}_i(\dalloc^*_i) = \sum_{j=1}^m \tilde{v}_{ij}x^*_{ij}$, stationarity (the third condition) implies:
\begin{equation}
k \tilde{v}_{ij} \, \tilde{v}_i(\dalloc^*_i)^{-(k+1)} = p^*_j - \kappa^*_{ij} \text{ for all } i \in \agents, j \in \items. \label{equation:kkt-3} 
\end{equation}

    If $\tilde{v}_{ij}>0$, then~\Cref{equation:kkt-3} implies that $k \, \tilde{v}_i(\dalloc^*_i)^{-(k+1)} = \frac{p^*_j - \kappa^*_{ij}}{\tilde{v}_{ij}}$. Note that, if $x^*_{ij} > 0$, then it must be that $\tilde{v}_{ij}>0$; otherwise, transferring item $j$ to an agent who values it positively increases the objective. Additionally, if $x^*_{ij} > 0$,~\Cref{equation:kkt-2} implies that $\kappa^*_{ij} = 0$, and therefore, $k \, \tilde{v}_i(\dalloc^*_i)^{-(k+1)} = \frac{p^*_j}{\tilde{v}_{ij}}$. Hence, if $x^*_{ij} > 0$, for any $\ell \in \items$ with $\tilde{v}_{i\ell}> 0$ we have
    \begin{equation}
        \frac{p^*_j}{\tilde{v}_{ij}} = k \, \tilde{v}_i(\dalloc^*_i)^{-(k+1)} = \frac{p^*_{\ell} - \kappa^*_{i\ell}}{\tilde{v}_{i\ell}} \leq \frac{p^*_{\ell}}{\tilde{v}_{i\ell}}. \label{equation:kkt-4}
    \end{equation}

    \paragraph{Market Interpretation.} 
    Let $b^*_i \coloneqq k \, \tilde{v}_i(\dalloc_i^*)^{-k}$. We prove that $(\dalloc^*, \prices^*, \budgets^*)$ is a market equilibrium.

    We begin by showing that $p^*_j > 0$ for all $j \in \items$. Observe that, for every agent $i \in \agents$, $\tilde{v}_i(\dalloc^*_i)>0$, since, as $\tilde{v}_i(\dalloc^*_i)$ approaches $0$ for some agent $i$, the objective tends to infinity; however, we can get a finite objective by splitting each good equally among the agents. Further, for each good $j \in \items$ we know that there's an agent $i$ such that $\tilde{v}_{ij}>0$. For such a pair $i,j$, since $\tilde{v}_i(\dalloc^*_i)>0$, $\tilde{v}_{ij}>0$, and $\kappa^*_{ij}\geq 0$,~\Cref{equation:kkt-3} implies that $p^*_j = k \tilde{v}_{ij} \, \tilde{v}_i(\dalloc^*_i)^{-(k+1)} + \kappa^*_{ij} > 0$.

    Towards proving that $(\dalloc^*, \prices^*, \budgets^*)$ is a market equilibrium, first note that the \emph{market clears} (i.e., $\sum_{i=1}^n x^*_{ij} = 1$, for all $j \in \items$): this is a direct implication of~\Cref{equation:kkt-1} and the fact that $p^*_j > 0$ for all $j \in \items$. Second, $\dalloc^*$ is a \emph{maximum bang-per-buck} allocation, i.e. if $x^*_{ij}>0$, then~\Cref{equation:kkt-4} implies that $\MBB_i \coloneqq  \frac{\tilde{v}_{ij}}{p^*_j} \geq \frac{\tilde{v}_{i\ell}}{p^*_\ell}$ for all $\ell \in \items$. The last inequality is trivially satisfied if $\tilde{v}_{i\ell}=0$; otherwise if $\tilde{v}_{i\ell}>0$ then it follows from~\Cref{equation:kkt-4}. Finally, every agent $i \in \agents$ exhausts their budget $b^*_i$. The amount spent by agent $i$ is
    \begin{align*}
        \sum_{j:\ x^*_{ij}>0} x^*_{ij} p^*_j &= \sum_{j:\ x^*_{ij}>0} x^*_{ij} \tilde{v}_{ij} \cdot \left( \frac{p^*_j}{\tilde{v}_{ij}} \right) \\
        &= \sum_{j:\ x^*_{ij}>0} x^*_{ij} \tilde{v}_{ij} \cdot \left( k \, \tilde{v}_i(\dalloc^*_i)^{-(k+1)} \right) \tag{\Cref{equation:kkt-4}}\\
        &= k \, \tilde{v}_i(\dalloc^*_i)^{-(k+1)} \sum_{j:\ x^*_{ij}>0} x^*_{ij} \tilde{v}_{ij} \\
        &=  k \, \tilde{v}_i(\dalloc^*_i)^{-(k+1)} \tilde{v}_i(\dalloc^*_i) \\
        &= b^*_i.
    \end{align*}

    \paragraph{Fairness and Efficiency.}
    The fact that $(\dalloc^*, \prices^*, \budgets^*)$ forms a market equilibrium and~\Cref{theorem:first-welfare} imply that $\dalloc^*$ is $\fPO$ with respect to $\tilde{v}_i(.)$s, and therefore with respect to $v_i(.)$s as well. To show that $\dalloc^*$ is $\PROP$ for agent $i$, we begin by considering $\tilde{v}_i(\dalloc^*_r)$ for an arbitrary agent $r \in \agents$:

    \begin{align*}
        \tilde{v}_i(\dalloc^*_r) = \sum_{\ell:\ x^*_{r\ell}>0} x^*_{r\ell} \cdot \tilde{v}_{i\ell} &  = \sum_{\ell :\ x^*_{r\ell}>0} x^*_{r\ell} \, p^*_\ell \left( \frac{\tilde{v}_{i\ell}}{p^*_\ell}\right) \tag{$p^*_j>0, \forall j \in \items$}\\
        & \leq \sum_{\ell :\ x^*_{r\ell}>0} x^*_{r\ell} \, p^*_\ell \cdot \frac{\tilde{v}_i(\dalloc^*_i)^{k+1}}{k} \tag{$\frac{\tilde{v}_{i\ell}}{p^*_\ell} \leq 
        \MBB_i =^{(Eq. \ref{equation:kkt-4})}  \frac{\tilde{v}_i(\dalloc^*_i)^{k+1}}{k}$}\\
        & = b^*_r \cdot \frac{\tilde{v}_i(\dalloc^*_i)^{k+1}}{k} \\
        & = \tilde{v}_r(\dalloc^*_r)^{-k} \cdot \tilde{v}_i(\dalloc^*_i)^{k+1}.
    \end{align*}
    Summing up the above inequality for all $r \in \agents$, and using the fact that the normalized utilities sum up to one, i.e., $\sum_{r=1}^n \tilde{v}_i(\dalloc^*_r) = 1$, we have $1 = \sum_{r=1}^n \tilde{v}_i(\dalloc^*_r) \leq   \tilde{v}_i(\dalloc^*_i)^{k+1} \sum_{r=1}^n \tilde{v}_r(\dalloc^*_r)^{-k}$, or
    \[
    \tilde{v}_i(\dalloc^*_i)^{k+1} \geq \left( \sum_{r=1}^n \tilde{v}_r(\dalloc^*_r)^{-k} \right)^{-1}.
    \]
    Recalling that $\sum_{r=1}^n \tilde{v}_r(\dalloc^*_r)^{-k}$ is the optimal value of our minimization objective (in our primal convex program), and that splitting each item equally among all agents is a feasible solution with value equal to $\sum_{r=1}^n \left( \frac{1}{n} \right)^{-k} = n^{k+1}$, we have
    \[
    \tilde{v}_i(\dalloc^*_i)^{k+1} \geq \left( \sum_{r=1}^n \tilde{v}_r(\dalloc^*_r)^{-k} \right)^{-1} \geq \left( n^{k+1} \right)^{-1}.
    \]
    Since, $k+1 > 0$, we get that $\tilde{v}_i(\dalloc^*_i) \geq 1/n$, or equivalently, $v_i(\dalloc^*_i) \geq v_i(\items)/n$. This establishes that $\dalloc^*$ is $\PROP$ with respect to $\tilde{v}_i(.)$s, which also implies that $\dalloc^*$ is $\PROP$ with respect to $v_i(.)$s.   
\end{proof}

Next, we show that~\Cref{theorem:divisible-goods} does not hold for $p >0$.

\begin{theorem}\label{theorem:div-goods-negative}
    For $n \geq 2$ agents with additive valuations over divisible goods, and all $p > 0$, there exist instances such that every allocation that maximizes the normalized $p$-mean is not $\PROP$.
\end{theorem}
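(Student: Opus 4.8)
The goal is to show that for every $p>0$ and every $n\ge 2$, maximizing the normalized $p$-mean of utilities can fail proportionality. The natural approach is to exhibit a small explicit instance (two agents, two goods suffice, with the remaining $n-2$ agents given ``dummy'' items they alone value, so they are irrelevant) and argue that the maximizer is unique and violates $\PROP$ for one agent. Concretely, I would take two goods and two agents with normalized valuations that are \emph{asymmetric}: e.g. agent $1$ values the two goods $(\alpha, 1-\alpha)$ and agent $2$ values them $(1-\alpha,\alpha)$ for some $\alpha$ close to $1$, or even simpler, make the two agents have identical normalized valuations $(\alpha,1-\alpha)$ with $\alpha>1/2$. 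With identical normalized valuations, the normalized $p$-mean objective reduces, after writing $x$ for the fraction of good $1$ that agent $1$ gets (and $y$ for good $2$), to maximizing a symmetric-looking but $p$-dependent function of the two agents' normalized utilities $u_1 = \alpha x + (1-\alpha)y$ and $u_2 = \alpha(1-x) + (1-\alpha)(1-y)$, subject to $u_1+u_2 = 1$.

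For $p>1$ the $p$-mean $\left(\tfrac12(u_1^p + u_2^p)\right)^{1/p}$ on the line $u_1+u_2=1$ is \emph{convex}, so it is maximized at an extreme point of the feasible $(u_1,u_2)$ region, not at the center $(1/2,1/2)$ — and the extreme point corresponds to giving everything to one agent, which is blatantly not $\PROP$. For $0<p\le 1$ the $p$-mean is concave and peaks at $u_1=u_2=1/2$; there proportionality can hold for identical valuations, so I would instead use genuinely asymmetric normalized valuations. The cleanest uniform construction: let agent $1$ have normalized values $(\tfrac12+\delta, \tfrac12-\delta)$ and agent $2$ have $(\tfrac12-\delta,\tfrac12+\delta)$, and show that for $p>0$ the optimal allocation over-rewards the ``cheaper to satisfy'' configuration, pushing one agent below $1/n$. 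I would differentiate the objective at the proportional allocation (where each gets their proportional share) and check the gradient is nonzero in a feasible improving direction whenever $p>0$; the $p=0$ (Nash) case is exactly where this gradient vanishes, which is consistent with \Cref{theorem:divisible-goods} being tight at $p=0$.

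In more detail, the key steps in order: (i) set up the two-good, two-agent instance with parameters to be fixed, and reduce the optimization to a one- or two-variable calculus problem using additivity; (ii) locate the $\PROP$ allocations and compute the objective's gradient there; (iii) show this gradient points strictly into the feasible region in an improving direction for all $p>0$, hence no $\PROP$ allocation is optimal — and since the objective is continuous on a compact set the optimum exists and must therefore be non-$\PROP$; (iv) pad the instance with $n-2$ extra agents, each with a private good valued only by them (and zero value for everyone else's goods), so their normalized utility is forced to $1$ regardless, leaving the two-agent sub-instance's behavior and hence the $\PROP$ violation intact; (v) note this even works with exactly two goods and two agents as claimed in the abstract/\Cref{table: results}, by choosing $\delta$ appropriately (or, for $p\ge 1$, just identical valuations).

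The main obstacle is step (iii): handling the full range $p>0$ uniformly. For $p\ge 1$ convexity of the $p$-mean on the simplex line gives a clean corner-maximizer argument, but for $0<p<1$ the objective is concave and one must argue more carefully that the asymmetry in the normalized valuations shifts the unconstrained maximizer away from every $\PROP$ point — this requires a genuine first-order computation showing the partial derivative with respect to transferring a good has a sign that depends on $p$ only through a factor that is nonzero precisely when $p\ne 0$. I expect the paper picks one slick instance (most likely identical normalized valuations handling $p\ge 1$ together with a separate or parametrized construction for $0<p<1$, or a single cleverly chosen asymmetric instance) rather than the brute-force gradient analysis sketched above; either way the heart of the matter is that scale-invariance via normalization does not, by itself, restore the fairness that only the geometric mean ($p=0$) enjoys, and the calculus must pin down that the failure is strict for every positive $p$.
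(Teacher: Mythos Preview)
Your high-level plan and the $p>1$ case are essentially what the paper does: reduce to two agents via dummy agent-item pairs, then for $p>1$ use identical normalized valuations so that the objective is convex on the line $u_1+u_2=1$ and is maximized at a corner where one agent gets nothing.

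The gap is in your treatment of $0<p\le 1$. The ``genuinely asymmetric'' instance you propose, with agent $1$ valuing the goods $(\tfrac12+\delta,\tfrac12-\delta)$ and agent $2$ valuing them $(\tfrac12-\delta,\tfrac12+\delta)$, is in fact symmetric under simultaneously swapping agents and goods. That symmetry pins the concave maximizer (for $0<p<1$) at the allocation ``good $1$ to agent $1$, good $2$ to agent $2$,'' where both normalized utilities equal $\tfrac12+\delta>\tfrac12$; the same allocation is also the unique maximizer for $p=1$, since the social-welfare objective there is $1+2\delta(x-y)$. So your gradient test at a proportional point would return zero, and the instance does \emph{not} exhibit a $\PROP$ violation for any $p\in(0,1]$.

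The paper breaks this symmetry differently: both agents prefer good $1$ to good $2$, with agent $1$'s valuations $(\tfrac12+\varepsilon,\tfrac12-\varepsilon)$ and agent $2$'s $(\tfrac12+\delta,\tfrac12-\delta)$ for $\delta>\varepsilon>0$. A preliminary lemma shows that in any optimum good $2$ goes entirely to agent $1$; then an explicit first-order computation locates the optimal split of good $1$ and shows agent $1$'s normalized utility falls strictly below $\tfrac12$. The $p=1$ case is handled by yet another (simple) instance. The fix to your plan, then, is not a different calculus but a different instance: you need both agents on the \emph{same side} of the preference spectrum, with distinct intensities, so that no relabeling symmetry forces the optimum to be fair.
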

\begin{proof}   
    We construct instances for $n=2$ agents. These constructions can be easily extended to any number of agents $n$, by including for each additional agent $i$ an item $i^*$, such that, $i$ only likes $i^*$ and $i^*$ is only liked by $i$.
    We consider the following three cases based on the value of $p$.

    \paragraph{Case 1. $p>1$.} Let $v_{ij} = 1/m$ for all agents $i \in \{ 1, 2 \}$ and goods $j$, where $m$ is the number of goods. Consider a $p$-mean maximizing allocation $\dalloc$, and let $\alpha \in [0,m]$ be the total fraction of goods that agent $1$ receives in $\dalloc$; $m-\alpha$ is the total fraction of goods agent $2$ receives. Given this, the $p$-mean of agents' normalized utilities can be written as $\left(\frac{1}{2} \left( \left( \frac{\alpha}{m} \right)^p + \left( \frac{m-\alpha}{m}\right)^p \right) \right)^{1/p}$. Since, $p>1$, the function $\left( \frac{\alpha}{m} \right)^p + \left( \frac{m-\alpha}{m}\right)^p$ is convex and is maximized when $\alpha=0$ or $\alpha=m$. Thus, one of the agents does not get any goods, and hence no $p$-mean optimal allocation satisfies $\PROP$.
    
    \paragraph{Case 2. $p = 1$.} Consider an instance with two goods such that $v_{11} > v_{21} > 1/2$ and $v_{21} < v_{22} < 1/2$. The unique allocation that maximizes the normalized $p=1$-mean of utilities allocates good $1$ to agent $1$ and good $2$ to agent $2$. This allocation is not $\PROP$ since agent $2$'s utility is less than $1/2$.
    
    \paragraph{Case 3. $0 < p < 1$.} Consider an instance with two goods such that $v_{11} = 1/2 + \varepsilon$ and $v_{12} = 1/2 - \varepsilon$ and $v_{21} = 1/2 + \delta$ and $v_{22} = 1/2 - \delta$. We will show that, for all $0<p<1$ and all $1/2 \geq \delta > \varepsilon > 0$, no $p$-mean maximizing allocation is $\PROP$. 

    First, in any optimal allocation, good $2$ must be allocated fully to agent $1$; we prove this in~\Cref{lemma:div_goods_counterexample}, which is stated in Appendix~\ref{app:missing proofs goods}. Now, assume that agent $1$ receives $\alpha^*$ fraction of item $1$ and agent $2$ receives a $1-\alpha^*$ fraction of item $1$ in some normalized $p$-mean maximizing allocation $\dalloc = (\dalloc_1, \dalloc_2)$. The agents' (normalized) utilities then are $\tilde{v}_1(\dalloc_1) = (1/2+\varepsilon)\alpha^*+ 1/2 - \varepsilon$ and $\tilde{v}_2(\dalloc_2) = (1/2 + \delta)(1-\alpha^*)$. We first compute the value of $\alpha^*$, and then show that agent $1$ doesn't get her fair-share of value, i.e., $\tilde{v}_1(\dalloc_1) < 1/2$, proving that $\dalloc$ is not $\PROP$.
    
    Since $\dalloc$ is $p$-mean maximizing, the value of $f(x) = \left( (1/2+\varepsilon)x+ 1/2 - \varepsilon \right)^p + \left((1/2 + \delta)(1-x)\right)^p$ is maximized (in the interval $[0,1]$) at $x = \alpha^*$. To compute this maximum we can differentiate $f(x)$ and equate it to zero: 
    \begin{align*}
        & \frac{df}{dx} \Biggr|_{\beta^*} = p\left( \frac{1}{2}+\varepsilon \right)\left(  \left(\frac{1}{2}+\varepsilon \right)\beta^*+ \frac{1}{2} - \varepsilon \right)^{p-1} - p\left(\frac{1}{2} + \delta \right)^p(1-\beta^*)^{p-1} = 0 \\
         \implies & \left( \frac{1}{2}+\varepsilon \right)^{\frac{1}{p-1}} \left(  \left(\frac{1}{2}+\varepsilon \right)\beta^*+ \frac{1}{2} - \varepsilon \right) = \left(\frac{1}{2} + \delta \right)^{\frac{p}{p-1}} (1-\beta^*).
    \end{align*}
    Re-arranging we have:
    \begin{align}\label{equation:val-x-opt}
         \beta^* = \frac{\left(\frac{1}{2} + \delta \right)^{\frac{p}{p-1}} - \left( \frac{1}{2} - \varepsilon \right) \left( \frac{1}{2}+\varepsilon \right)^{\frac{1}{p-1}} }{\left(\frac{1}{2} + \delta \right)^{\frac{p}{p-1}} + \left( \frac{1}{2}+\varepsilon \right)^{\frac{p}{p-1}}}.
    \end{align}
    First, it is easy to verify that the value of $\beta^*$ above indeed corresponds to the maximum value of $f(x)$, since $\frac{d}{dx}f(\widehat{x})$ is positive for all $\widehat{x} < \beta^*$ and negative for all $\widehat{x} > \beta^*$. Second, $\beta^* \leq 1$ since the denominator of $\beta^*$ is at least the numerator. Therefore, the maximum in the interval $[0,1]$ occurs at $\alpha^* = \max\{\beta^*,0\}$.

    Now towards completing the proof, we will show that $\tilde{v}_1(\dalloc_1) = (1/2+\varepsilon)\alpha^*+ 1/2 - \varepsilon < \frac{1}{2}$, i.e., $\dalloc$ is not $\PROP$. This is equivalent to showing that $\alpha^* = \max\{\beta^*,0\} < \frac{\varepsilon}{1/2 + \varepsilon}$. Since $\frac{\varepsilon}{1/2+\varepsilon}>0$, this in turn, is equivalent to showing that $\beta^* <\frac{\varepsilon}{1/2 + \varepsilon}$, which can be established as follows:
    \begin{align*} 
        \beta^* - \frac{\varepsilon}{1/2 + \varepsilon} &=  \frac{\left(\frac{1}{2} + \delta \right)^{\frac{p}{p-1}} - \left( \frac{1}{2} - \varepsilon \right) \left( \frac{1}{2}+\varepsilon \right)^{\frac{1}{p-1}} }{\left(\frac{1}{2} + \delta \right)^{\frac{p}{p-1}} +  \left( \frac{1}{2}+\varepsilon \right)^{\frac{p}{p-1}}} - \frac{\varepsilon}{1/2 + \varepsilon} \tag{via (\ref{equation:val-x-opt})}\\
        &= \frac{\frac{1}{2}\left( \frac{1}{2}+\delta \right)^{\frac{p}{p-1}} - \frac{1}{2}\left( \frac{1}{2}+\varepsilon \right)^{\frac{p}{p-1}}}{\left( \left(\frac{1}{2} + \delta \right)^{\frac{p}{p-1}} +  \left( \frac{1}{2}+\varepsilon \right)^{\frac{p}{p-1}} \right)\left( \frac{1}{2}  + \varepsilon \right)}. \numberthis \label{equation:goods-no-prop}
    \end{align*}
    Finally, note that, $\delta > \varepsilon$ and $\frac{p}{p-1} < 0$ (since $0<p<1$). Therefore, we have $\frac{1}{2}\left( \frac{1}{2}+\delta \right)^{\frac{p}{p-1}} < \frac{1}{2}\left( \frac{1}{2}+\varepsilon \right)^{\frac{p}{p-1}}$. Along with~\Cref{equation:goods-no-prop}, this gives us the desired inequality, $\beta^* < \frac{\varepsilon}{1/2 + \varepsilon}$. This completes the proof.  
\end{proof}

Complementing~\Cref{theorem:divisible-goods}, the following lemma shows that maximizing the normalized $p$-mean of utilities does not produce $\EF$ allocations; its proof appears in Appendix~\ref{app:missing proofs goods}.

\begin{lemma}\label{lemma:div-goods-not-EF}
    There exist instances with $n \geq 3$ agents where maximizing the normalized $p$-mean of utilities for some $p<0$ doesn't produce $\EF$ allocations.
\end{lemma}


\subsection{Indivisible Goods: Rounding Divisible Allocations}\label{sec:divisible goods rounding}

Here, we prove that the market interpretation of $p$-mean optimal solutions for divisible goods in~\Cref{theorem:divisible-goods} can be leveraged to give new $\fPO$ and $\PROPone$ algorithms for indivisible goods. 

\begin{theorem}\label{theorem:rounding-goods}
For $n \geq 2$ agents with additive valuations over indivisible goods, and all $p \leq 0$, \Cref{algo:goods} outputs an $\fPO$ and $\PROPone$ allocation.
\end{theorem}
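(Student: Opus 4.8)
The plan is to round the divisible optimum via its market structure. First I would invoke \Cref{theorem:divisible-goods} to compute (via the associated convex program) a divisible allocation $\dalloc^*$ maximizing the normalized $p$-mean of utilities, together with prices $\prices^* > 0$ and budgets $\budgets^*$ for which $(\dalloc^*,\prices^*,\budgets^*)$ is a Fisher market equilibrium; recall from the proof of that theorem that every good is fully sold, that each agent's fractional bundle lies entirely on her maximum-bang-per-buck set under $\prices^*$, that $b^*_i = \sum_j x^*_{ij}p^*_j$, and that consequently $\tilde v_i(\dalloc^*_i) = \MBB_i\cdot b^*_i \ge 1/n$ for every agent $i$. \Cref{algo:goods} then applies the proximity/rounding procedure of~\cite{barman2019proximity} to $(\dalloc^*,\prices^*,\budgets^*)$: one first makes the bipartite ``agent--good'' fractional-assignment graph acyclic (possible without changing any value or any price), obtaining a forest, and then reassigns each fractionally owned good in full to one of its current fractional owners, processing leaves toward the root so that along the way each agent's total spending $\prices^*(\ialloc_i)$ changes from $b^*_i$ by at most the price of a single good incident to her in the forest. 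The output is an integral allocation $\ialloc$ still supported by $\prices^*$ on the MBB graph, and in which, for every agent $i$, either $\prices^*(\ialloc_i)\ge b^*_i$ or there is a good $g\in\items\setminus\ialloc_i$ with $x^*_{ig}>0$ and $\prices^*(\ialloc_i)+p^*_g\ge b^*_i$.

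The efficiency claim is then immediate: after rounding every agent still holds only MBB goods under $\prices^*$ and every good is fully allocated, so $(\ialloc,\prices^*,\budgets)$ with $b_i \coloneqq \prices^*(\ialloc_i)$ is a Fisher market equilibrium, and \Cref{theorem:first-welfare} yields that $\ialloc$ is $\fPO$ with respect to the normalized valuations $\tilde v_i(\cdot)$, hence (rescaling by the positive constants $v_i(\items)$) also with respect to the $v_i(\cdot)$. For $\PROPone$, fix an agent $i$ (the case $v_i(\items)=0$ makes the requirement vacuous, so assume $v_i(\items)>0$, whence $\MBB_i>0$). If $\prices^*(\ialloc_i)\ge b^*_i$ then $\tilde v_i(\ialloc_i) = \MBB_i\,\prices^*(\ialloc_i) \ge \MBB_i b^*_i = \tilde v_i(\dalloc^*_i)\ge 1/n$ and $i$ is already proportional. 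Otherwise take the good $g\in\items\setminus\ialloc_i$ guaranteed above; since $x^*_{ig}>0$ the proof of \Cref{theorem:divisible-goods} gives $\tilde v_{ig} = \MBB_i\, p^*_g$, so by additivity
\[
\tilde v_i(\ialloc_i\cup\{g\}) \;=\; \MBB_i\bigl(\prices^*(\ialloc_i)+p^*_g\bigr)\;\ge\;\MBB_i\,b^*_i\;=\;\tilde v_i(\dalloc^*_i)\;\ge\;\tfrac1n,
\]
and multiplying by $v_i(\items)$ gives $v_i(\ialloc_i\cup\{g\})\ge v_i(\items)/n$, as required.

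The main obstacle is the middle step: carefully establishing that a fractional Fisher equilibrium can be rounded to an integral allocation that (a) remains entirely on the MBB graph of the \emph{same} price vector and (b) changes each agent's spending by at most one good's price, so that every agent either already meets her budget integrally or is missing exactly one of her former MBB goods. This is precisely the acyclicity-plus-leaf-to-root rounding argument of~\cite{barman2019proximity}; the work here is in applying it with the \emph{unequal} budgets $b^*_i$ produced by \Cref{theorem:divisible-goods} and verifying that the MBB structure --- which is what converts the price-based spending guarantee into a value guarantee --- is preserved throughout. Everything else is bookkeeping plus the trivial rescaling between normalized and unnormalized valuations.
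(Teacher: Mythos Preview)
Your proposal is correct and follows essentially the same route as the paper: invoke \Cref{theorem:divisible-goods} to obtain the fractional optimum as a Fisher market equilibrium, round it via the Barman--Krishnamurthy procedure (stated in the paper as \Cref{theorem:rounding}), conclude $\fPO$ from the First Welfare Theorem, and establish $\PROPone$ by the same two-case spending comparison $\prices^*(\ialloc_i)\gtrless b^*_i$. The only cosmetic difference is that the paper treats \Cref{theorem:rounding} as a black box while you sketch the forest-rounding internals, and the paper phrases the missing good as lying in $\MBB_i$ whereas you phrase it as having $x^*_{ig}>0$; these are equivalent here.
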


Our proof crucially uses the following result of Barman and Krishnamurthy~\cite{barman2019proximity}:

\begin{theorem}[Theorem 4; Barman and Krishnamurthy~\cite{barman2019proximity}]\label{theorem:rounding}
    Given a Fisher market equilibrium $(\dalloc,\prices, \budgets)$, we can, in polynomial time, find an \emph{integral allocation} $\ialloc$ that satisfies the following three properties:
    \begin{enumerate}
        \item $(\ialloc, \prices, \budgets')$ is a market equilibrium, $\budgets'$ satisfies $\sum_{i \in \agents} b'_i = \sum_{i \in \agents} b_i$ and $|b_i - b'_i| \leq ||\prices||_\infty$.
        \item If $b'_i < b_i$, then there exists an item $j \in \MBB_i$ such that $j \notin \ialloc_i$ and $b_i \leq b'_i + p_j$.
        \item If $b'_i > b_i$, then there exists an item $j \in \ialloc_i \subseteq \MBB_i$ such that $b_i \geq b'_i - p_j$.
    \end{enumerate}
\end{theorem}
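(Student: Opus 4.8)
The plan is to work entirely on the bipartite \emph{maximum-bang-per-buck} (MBB) graph and recast the statement as a bounded-deviation rounding of a fractional assignment. First I would record the reformulation implied by the three equilibrium conditions: in $(\dalloc, \prices, \budgets)$ the support $\{(i,j): x_{ij}>0\}$ is contained in the MBB edges $\{(i,j): j \in \MBB_i\}$, every item is fully sold ($\sum_i x_{ij}=1$), and each agent's budget is exactly her spending, $b_i=\sum_j p_j x_{ij}$. Conversely, \emph{any} integral $\ialloc$ that gives each item to an agent holding it in her MBB set is automatically a market equilibrium for the induced budgets $b'_i := \sum_{j\in\ialloc_i} p_j$, since market-clearing and budget-exhaustion hold by definition and the MBB condition is inherited edge-by-edge. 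Moreover $\sum_i b'_i = \sum_j p_j = \sum_i b_i$ because all items are sold in both allocations, so the equality $\sum_i b'_i=\sum_i b_i$ in property~1 is free. Thus the entire theorem reduces to producing an integral MBB-assignment whose per-agent spending $b'_i$ satisfies $|b_i-b'_i|\le \lVert\prices\rVert_\infty$, with properties~2 and~3 asking that each agent's budget change be \emph{witnessed} by the price of a single MBB item (one she is not given, resp. one she keeps).

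Second, I would reduce to a fractional allocation whose support is acyclic. I would pass from $\dalloc$ to a vertex $\hat{\dalloc}$ of the polytope of fractional MBB-assignments that fully sell every item and preserve each agent's spending exactly, then eliminate the remaining support cycles by pushing mass around them: an alternating push around a cycle keeps every item fully sold, and driving it until an edge leaves the support reduces the number of fractional edges; crucially, each such push alters the budgets of only the two agents adjacent to the dropped edge, and by at most the price of a single incident item. Iterating, I obtain an acyclic (forest) support on the MBB graph while controlling the accumulated budget movement. On the resulting forest I would round from the leaves toward the root: every item shared among several agents is assigned, in a bottom-up sweep, to one chosen neighbor, so that an agent's spending is modified only by fully acquiring or fully relinquishing one incident item at a time. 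The target is that each agent's net spending changes by at most $\lVert\prices\rVert_\infty$, i.e.\ by the price of one item she is adjacent to in the forest.

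The main obstacle is making the rounding \emph{two-sided} with the explicit witnesses of properties~2 and~3, rather than merely one-sided as in classical makespan-style rounding. A naive saturating matching of shared items to agents can force an agent who fractionally held several items to lose all of them, violating the lower bound; the rounding must instead let an agent retain enough items that her spending cannot collapse below $b_i-\lVert\prices\rVert_\infty$, while no agent's spending can exceed $b_i+\lVert\prices\rVert_\infty$. I would organize the leaf-to-root sweep so that at the step where agent $i$'s spending is finalized, the discrepancy $b_i-b'_i$ (resp.\ $b'_i-b_i$) is exactly absorbed by the single boundary item connecting $i$ to the rest of its tree: if $i$ ends below her fractional budget, the item she relinquished along the tree edge is an MBB item $j\notin\ialloc_i$ with $b_i\le b'_i+p_j$ (property~2), and if she ends above it, the item she newly fully acquired is a $j\in\ialloc_i\subseteq\MBB_i$ with $b_i\ge b'_i-p_j$ (property~3). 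Verifying that this invariant can be maintained throughout the sweep, and that the acyclic-reduction phase never pushes cumulative budget movement beyond one item's price, is the delicate bookkeeping I would expect to carry the weight of the argument; polynomial running time then follows since both phases touch each edge a bounded number of times.
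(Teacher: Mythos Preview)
The paper does not prove this statement at all: it is quoted as Theorem~4 of Barman and Krishnamurthy~\cite{barman2019proximity} and used purely as a black box in Algorithm~\ref{algo:goods} and the proof of Theorem~\ref{theorem:rounding-goods}. There is therefore no in-paper argument to compare your proposal against.

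For what it is worth, your sketch is in the spirit of the original Barman--Krishnamurthy proof (reduce to an acyclic MBB spending graph, then do a tree-based rounding). One point you should tighten: you say that each cycle-cancelling push ``alters the budgets of only the two agents adjacent to the dropped edge.'' In fact you can do strictly better and eliminate cycles with \emph{zero} budget movement: work inside the polytope
\[
\Bigl\{x \ge 0 : \textstyle\sum_i x_{ij}=1\ \forall j,\ \sum_j p_j x_{ij}=b_i\ \forall i,\ x_{ij}=0\text{ for }j\notin\MBB_i\Bigr\},
\]
and move from $\dalloc$ to a vertex. The equality system has rank at most $n+m-1$, so a vertex has forest support, and every $b_i$ is preserved exactly. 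All of the budget deviation then arises in the tree-rounding phase alone, which is what makes the single-item ``witness'' bounds in properties~2 and~3 obtainable; with your version, deviations from the cycle phase and the tree phase would accumulate and the $\lVert\prices\rVert_\infty$ bound could fail. Your identification of the two-sided witness control as the crux is accurate, but as written your plan does not yet supply the invariant that pins each agent's net change to a single tree edge.
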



\begin{algorithm}[t]
\textbf{Input:} {Instance with $m$ indivisible goods, $n$ agents with valuations $\{v_i\}_i$, and parameter $p \leq 0$.}
\textbf{Output:} {$\fPO$ and $\PROPone$ integral allocation $\ialloc = (\ialloc_1, \ialloc_2, \ldots, \ialloc_n)$.}
\begin{itemize}
    \item Let $\dalloc$ be a fractional allocation that maximizes the normalized $p$-mean of utilities, and let $(\dalloc,\prices, \budgets)$ be the corresponding market equilibrium guaranteed to exist by~\Cref{theorem:divisible-goods}.
    \item Let $\ialloc$ be the allocation obtained by rounding the market equilibrium $(\dalloc,\prices, \budgets)$ using the algorithm of~\Cref{theorem:rounding}.
    \item \textbf{return } $\ialloc = \left( \ialloc_1, \ialloc_2, \ldots, \ialloc_n \right)$
\end{itemize}
\caption{$\fPO$ and $\PROPone$ for indivisible goods} \label{algo:goods}
\end{algorithm}

\begin{proof}{Proof of~\Cref{theorem:rounding-goods}}
Let $\dalloc$ be a fractional allocation that maximizes the normalized $p$-mean of utilities, and let $(\dalloc,\prices, \budgets)$ be the corresponding market equilibrium guaranteed to exist by~\Cref{theorem:divisible-goods}. Also, let $\ialloc$ and $\budgets'$ be the integral allocation and budgets guaranteed in~\Cref{theorem:rounding}. First, since $(\ialloc, \prices, \budgets')$ is a market equilibrium, $\ialloc$ is an $\fPO$ allocation (\Cref{theorem:first-welfare}). Towards showing that $\ialloc$ is $\PROPone$, consider the following two exhaustive cases:

    \paragraph{Case 1.} Agents $i$ for which $b'_i < b_i$. We can lower bound the utility of such agents as,
    \begin{align*}
        v_i(\ialloc_i) & = b'_i \cdot \MBB_i \tag{$(\ialloc, \prices,\budgets')$ is a market equilibrium}\\
        & \geq (b_i - p_j) \cdot \MBB_i \tag{for some $j \in \MBB_i \setminus \ialloc_i$,  \Cref{theorem:rounding} property $2$}\\
        & = v_i(\dalloc_i) - v_{ij} \tag{$v_i(\dalloc_i) = b_i \cdot \MBB_i$, and $v_{ij} = p_j \cdot \MBB_i$ since $j \in \MBB_i$}\\
        & \geq \frac{v_i(\items)}{n} - v_{ij}. \tag{$\dalloc$ is $\PROP$}
    \end{align*}
    \paragraph{Case 2.} Agents $i$ for which $b'_i \geq b_i$. For this case, consider the following,
    \begin{align*}
        v_i(\ialloc_i) & = b'_i \cdot \MBB_i \tag{$(\ialloc, \prices, \budgets')$ is a market equilibrium}\\
        & \geq b_i \cdot \MBB_i\\
        & = v_i(\dalloc_i) \geq \frac{v_i(\items)}{n}. \tag{$\dalloc$ is $\PROP$}
    \end{align*}
    Hence, $\ialloc$ is $\PROPone$ in both the cases.  
\end{proof}

\subsection{Indivisible Goods: Optimizing $p$-means}\label{sec:indivisible-goods}

In this section, we prove that directly maximizing the normalized $p$-mean of utilities for indivisible goods gives $\EFone$ and $\PO$ allocations, for the case of $n=2$ agents, for all $p \leq 0$. This is not the case, however, for $n \geq 3$ agents, except, of course, for the case of Nash welfare (i.e., $p = 0$). We use the following technical lemma, which proves that if the minimum and the product of a pair of numbers is higher than another pair, then for any $p\leq 0$, its $p$-mean will also be higher.


\begin{lemma} \label{lemma:indiv_goods_algebra}
    For positive real numbers $a$, $b$, $\alpha$, and $\beta$, if $\min\{a, b\} \leq \min\{\alpha, \beta\}$ and $ab < \alpha\beta$, then $\left(\frac{a^p + b^p}{2}\right)^{1/p} < \left(\frac{\alpha^p + \beta^p}{2}\right)^{1/p}$ for any $p \leq 0$.
\end{lemma}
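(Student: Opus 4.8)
The plan is to translate the claimed inequality between $p$-means into an equivalent, cleaner inequality between power sums, and then settle it by a short case analysis. The case $p=0$ is immediate: $\left(\frac{a^p+b^p}{2}\right)^{1/p}$ is then the geometric mean $\sqrt{ab}$, and the hypothesis $ab<\alpha\beta$ already gives $\sqrt{ab}<\sqrt{\alpha\beta}$. So fix $p<0$ and write $k=-p>0$. Since $t\mapsto (t/2)^{-1/k}$ is strictly decreasing on $(0,\infty)$, the desired $\left(\frac{a^p+b^p}{2}\right)^{1/p}<\left(\frac{\alpha^p+\beta^p}{2}\right)^{1/p}$ is equivalent to $a^{-k}+b^{-k}>\alpha^{-k}+\beta^{-k}$, and this is what I would prove. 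Using the symmetry of both the hypotheses and the conclusion in $\{a,b\}$ and in $\{\alpha,\beta\}$, I may assume $a\le b$ and $\alpha\le\beta$, so the hypotheses become $a\le\alpha\le\beta$ and $ab<\alpha\beta$.

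I would then split on how $b$ compares to $\beta$. If $b\le\beta$, then $a^{-k}\ge\alpha^{-k}$ and $b^{-k}\ge\beta^{-k}$, hence $a^{-k}+b^{-k}\ge\alpha^{-k}+\beta^{-k}$; and since $ab<\alpha\beta$ forbids $a=\alpha$ and $b=\beta$ holding simultaneously, at least one of these two inequalities is strict, which gives the claim. The interesting case is $b>\beta$. Here I would pass to the ratios $s=a/\alpha$ and $u=b/\beta$, so that $s\le 1<u$ and $ab<\alpha\beta$ becomes $su<1$ (in particular $su<1$ together with $s\le 1$ forces $s<1$). Writing $a^{-k}=\alpha^{-k}s^{-k}$ and $b^{-k}=\beta^{-k}u^{-k}$, the target inequality rearranges to $\alpha^{-k}(s^{-k}-1)>\beta^{-k}(1-u^{-k})$. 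Since $\alpha\le\beta$ gives $\alpha^{-k}\ge\beta^{-k}$ and $s^{-k}-1>0$, it suffices to show $s^{-k}+u^{-k}>2$; and this follows from AM-GM, because $s^{-k}+u^{-k}\ge 2\sqrt{(su)^{-k}}=2(su)^{-k/2}>2$ using $su<1$ and $k/2>0$.

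The only genuine subtlety — the step I expect to be the main obstacle — is the case $b>\beta$: the naive idea of moving from $(\alpha,\beta)$ to $(a,b)$ one coordinate at a time fails, since an intermediate pair need not have the right minimum or product, so one cannot invoke a termwise comparison. The resolution is exactly the normalization to the ratios $s,u$ and the bound $s^{-k}+u^{-k}>2$ driven by $su<1$, with $\alpha\le\beta$ used only to discard the heavier of the two weights $\alpha^{-k},\beta^{-k}$. (One could alternatively package both cases via the observation that, for a fixed product, the power sum $x^{-k}+y^{-k}$ is a decreasing function of $\min\{x,y\}$, but the two-case argument above seems the most transparent.)
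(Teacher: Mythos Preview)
Your proof is correct and takes a genuinely different route from the paper's. After the same reduction to $a\le b$, $\alpha\le\beta$ (hence $a\le\alpha$), the paper avoids any case split: it introduces nonnegative shifts $x=a^p-\alpha^p$, $y=a^p-b^p$, $z=\alpha^p-\beta^p$, expands $a^pb^p>\alpha^p\beta^p$ (which is equivalent to $ab<\alpha\beta$ for $p<0$) in these variables, and obtains $-y>-2x-z$, which rearranges directly to $a^p+b^p>\alpha^p+\beta^p$. Your argument instead splits on whether $b\le\beta$; the first case is a termwise comparison, and for $b>\beta$ you pass to the ratios $s=a/\alpha$, $u=b/\beta$ and close with AM-GM on $s^{-k}+u^{-k}$ using $su<1$. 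The paper's computation is more uniform and purely algebraic; yours is arguably more transparent about \emph{why} the inequality holds, isolating the one nontrivial configuration and making explicit that $\alpha\le\beta$ is only needed to replace the weight $\alpha^{-k}$ by the smaller $\beta^{-k}$. Both are short and neither dominates the other.
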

\begin{proof}   
    The lemma trivially holds for $p = 0$ because $\lim_{p \rightarrow 0} \left(\frac{a^p + b^p}{2}\right)^{1/p} = ab$. Henceforth, we assume that $p < 0$. Without loss of generality, let $a \leq b$ and $\alpha \leq \beta$. Thus, $\min\{a, b\} \leq \min\{\alpha, \beta\}$ implies that $a \leq \alpha$. 
    
    Define $x, y, z \geq 0$ such that
    $\alpha^p = a^p - x$, $b^p = a^p - y$, and
    $\beta^p = \alpha^p - z$, which further gives us $\beta^p = a^p - x - z$. Now, using the fact that $ab < \alpha \beta$ we will derive an inequality relating $x, y$, and $z$. Towards this, note that $ab < \alpha \beta$ implies that $a^pb^p > \alpha^p \beta^p$. Simplifying, we get,

    \begin{align*}
        a^p b^p & > \alpha^p \beta^p \\
        \implies a^p (a^p - y)& > (a^p-x)(a^p - x - z)\\
        \implies  - y a^p & > -a^p(x+z) -x a^p \\
        \implies  - y & > - 2x - z. \tag{$a^p > 0$}
    \end{align*}
    Adding $2 a^p$ on both sides of the above inequality gives us, $a^p + (a^p - y) > (a^p - x - z) + (a^p - x)$, i.e., $a^p + b^p > \beta^p + \alpha^p$. Since $p<0$, this gives us our desired inequality $(a^p + b^p)^{1/p} < (\alpha^p + \beta^p)^{1/p}$.  
\end{proof}

\begin{theorem}\label{theorem:indiv-goods-EF1}
For $n=2$ agents with additive valuations over indivisible goods, and all $p \leq 0$, every allocation that maximizes the normalized $p$-mean of utilities is $\EFone$ and $\PO$.
\end{theorem}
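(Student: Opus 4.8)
The plan is to fix $n=2$ and a normalized $p$-mean maximizing integral allocation $\ialloc = (\ialloc_1, \ialloc_2)$ for some $p \le 0$, and to argue by contradiction that it must be $\EFone$ (Pareto optimality comes essentially for free: any Pareto improvement keeps both normalized utilities weakly larger and one strictly larger, hence strictly increases the $p$-mean, contradicting optimality — this uses monotonicity of $w_p$ in each coordinate for $p \le 0$). For the fairness part, suppose without loss of generality that agent $1$ envies agent $2$ even after removing her most valuable (to agent $1$) good from $\ialloc_2$; that is, for every $g \in \ialloc_2$ we have $v_1(\ialloc_1) < v_1(\ialloc_2 \setminus \{g\})$. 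Passing to normalized valuations $\tilde v_1$ (and $\tilde v_2$), the goal is to exhibit a new integral allocation obtained by moving a single good $g$ from $\ialloc_2$ to $\ialloc_1$ that strictly increases the normalized $p$-mean, i.e. has strictly larger $w_p\big(\tilde v_1(\cdot), \tilde v_2(\cdot)\big)$.

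The key tool is Lemma~\ref{lemma:indiv_goods_algebra}: to beat the current pair $\big(\tilde v_1(\ialloc_1), \tilde v_2(\ialloc_2)\big)$ with the new pair $\big(\tilde v_1(\ialloc_1) + \tilde v_{1g}, \tilde v_2(\ialloc_2) - \tilde v_{2g}\big)$, it suffices to show (i) the minimum of the new pair is at least the minimum of the old pair, and (ii) the product strictly increases. I would pick $g \in \ialloc_2$ to be the good maximizing $\tilde v_{1g}$ (equivalently $v_{1g}$); since the envy is strict even up to this good, $\tilde v_1(\ialloc_1) + \tilde v_{1g} \le \tilde v_1(\ialloc_2)$, and in fact the strict-envy hypothesis forces $\tilde v_1(\ialloc_1)$ to be small relative to $\tilde v_1(\ialloc_2)$ — in particular $\tilde v_1(\ialloc_1) < 1/2 < \tilde v_1(\ialloc_2 \setminus\{g\}) \le \tilde v_1(\ialloc_1) + \tilde v_{1g}$, so after the move agent $1$'s normalized utility is still below agent $1$'s value for what remains with agent $2$, and comfortably the minimum of the new pair is governed by agent $1$ both before and after. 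Establishing (i) then amounts to checking $\tilde v_1(\ialloc_1) < \min\{\tilde v_1(\ialloc_1)+\tilde v_{1g},\, \tilde v_2(\ialloc_2)-\tilde v_{2g}\}$; the first term is immediate, and the second needs care. For (ii), the product goes from $\tilde v_1(\ialloc_1)\tilde v_2(\ialloc_2)$ to $(\tilde v_1(\ialloc_1)+\tilde v_{1g})(\tilde v_2(\ialloc_2)-\tilde v_{2g})$, and one shows this is strictly larger using that $\tilde v_1(\ialloc_1)$ is much smaller than $\tilde v_{1g}+\tilde v_1(\ialloc_1)$ while $\tilde v_{2g}$ is a relatively small fraction of $\tilde v_2(\ialloc_2)$ — concretely, expanding the product, the increase is $\tilde v_{1g}\tilde v_2(\ialloc_2) - \tilde v_{2g}\tilde v_1(\ialloc_1) - \tilde v_{1g}\tilde v_{2g}$, which I would show is positive by combining $\tilde v_1(\ialloc_1) < \tilde v_1(\ialloc_2) - \tilde v_{1g}$ with the normalization $\tilde v_2(\ialloc_1) + \tilde v_2(\ialloc_2) = 1$ and $\tilde v_1(\ialloc_1)+\tilde v_1(\ialloc_2)=1$.

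The main obstacle I anticipate is case (i) together with the product computation when agent $2$'s good $g$ is also very valuable to agent $2$ (so $\tilde v_{2g}$ is not small) — in that regime moving $g$ could, a priori, push agent $2$'s normalized utility below agent $1$'s and also shrink the product. I expect this is handled by choosing $g$ more carefully: rather than always taking the $v_1$-maximizer, one should take a good $g \in \ialloc_2$ that is a maximum-bang-per-buck item for agent $1$ among $\ialloc_2$, or argue that if every good in $\ialloc_2$ that agent $1$ likes is also heavily liked by agent $2$ then a different single-good swap (possibly moving a good from $\ialloc_1$ to $\ialloc_2$, or swapping a pair) already contradicts optimality; alternatively, one leverages that strict envy up to one good gives a quantitative gap $\tilde v_1(\ialloc_2) - \tilde v_1(\ialloc_1) > \tilde v_{1g}$ for the top good, which can be amplified to control the cross terms. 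Wrapping up, once the existence of a strictly improving single-good transfer is established in all cases, it contradicts the assumed optimality of $\ialloc$, so $\ialloc$ must be $\EFone$; combined with the Pareto-optimality observation this proves the theorem.
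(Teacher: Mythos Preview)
Your high-level strategy matches the paper's: assume $\ialloc$ is not $\EFone$, invoke \Cref{lemma:indiv_goods_algebra}, and exhibit a modification that weakly raises the minimum normalized utility and strictly raises the product. However, there is a genuine gap in the plan as stated, precisely at the obstacle you flag but do not resolve.

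First, a smaller issue: taking $g$ to be the $v_1$-maximizer in $\ialloc_2$ does not guarantee that the product increases. For instance, with normalized valuations $\tilde v_1=(0.35,0.35,0.3)$, $\tilde v_2=(0.6,0.35,0.05)$ and $\ialloc_1=\{3\}$, $\ialloc_2=\{1,2\}$, moving good $1$ (a $v_1$-maximizer) drops the product from $0.3\cdot 0.95$ to $0.65\cdot 0.35$. Your fallback to the bang-per-buck good (i.e., the good in $\ialloc_2$ maximizing $\tilde v_{1g}/\tilde v_{2g}$) is exactly right and is what the paper uses to guarantee the product strictly increases.

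The real gap is that even with this choice of $g$, a bare transfer from agent $2$ to agent $1$ can \emph{decrease} the minimum. Take $\tilde v_1=(0.35,0.35,0.3)$, $\tilde v_2=(0.26,0.26,0.48)$, $\ialloc_1=\{3\}$, $\ialloc_2=\{1,2\}$. This is not $\EFone$ ($0.3<0.7-0.35$), $\tilde v_2(\ialloc_2)=0.52>1/2$, and the ratio-maximizer is good $1$ (or $2$). Moving good $1$ yields normalized utilities $(0.65,0.26)$, so the minimum drops from $0.3$ to $0.26$ and \Cref{lemma:indiv_goods_algebra} does not apply. The paper's remedy, which your sketch does not locate, is to transfer $g$ \emph{and then swap the two bundles}: in the example, agent $1$ ends up with $\{2\}$ and agent $2$ with $\{1,3\}$, giving utilities $(0.35,0.74)$, which beats $(0.3,0.52)$ in both minimum and product. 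The paper cleanly splits on whether $\tilde v_1(\ialloc_1) \le \tilde v_2(\ialloc_2)-\tilde v_{2g}$ (then the plain transfer works) or $\tilde v_1(\ialloc_1) > \tilde v_2(\ialloc_2)-\tilde v_{2g}$ (then transfer-and-swap works), and verifies both min and product in each case. Your proposal would be complete once you add this transfer-and-swap branch; you should also handle the easy boundary cases $\tilde v_2(\ialloc_2)\le 1/2$ (swap outright) and $\tilde v_1(\ialloc_1)=0$ separately, as the paper does. Finally, note that the chain of inequalities you wrote, $\tilde v_1(\ialloc_1) < 1/2 < \tilde v_1(\ialloc_2\setminus\{g\}) \le \tilde v_1(\ialloc_1)+\tilde v_{1g}$, is not generally valid and should be dropped.
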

\begin{proof}   
    Pareto optimality holds trivially. 
    Let $\ialloc = (\ialloc_1, \ialloc_2)$ be an allocation that is not $\EFone$. We will show that, for all $p \leq 0$, there always exists another allocation $\ialloc'$ that has a strictly larger normalized $p$-mean of utilities than $\ialloc$; the theorem follows.

    Since $\ialloc$ is not $\EFone$, one of the agents must have normalized utility less than $1/2$. Without loss of generality, suppose that $\tilde{v}_1(\ialloc_1) < \frac{1}{2}$. If it is also the case that $\tilde{v}_2(\ialloc_2) \leq \frac{1}{2}$ then the allocation $\ialloc'$ that is constructed by simply swapping the bundles of the agents, i.e., $\ialloc' = (\ialloc_2, \ialloc_1)$, has a higher normalized $p$-mean of utilities than $\ialloc$: the normalized utility of agent $1$ strictly increases, $\tilde{v}_1(\ialloc'_1) =  \tilde{v}_1(\ialloc_2) = 1 - \tilde{v}_1(\ialloc_1) > \frac{1}{2} > \tilde{v}_1(\ialloc_1)$, and the normalized utility of agent $2$ doesn't decrease $\tilde{v}_2(\ialloc'_1) = \tilde{v}_2(\ialloc_1) = 1 - \tilde{v}_2(\ialloc_2) \geq \frac{1}{2} \geq \tilde{v}_2(\ialloc_2)$. We can therefore assume that $\tilde{v}_2(\ialloc_2) > \frac{1}{2}$. Additionally, since $\ialloc$ is not EF1, we have that for all items $g \in \ialloc_2$, $v_1(\ialloc_1) < v_1(\ialloc_2) - v_1(g)$, which implies that for all items $g \in \ialloc_2$, $\tilde{v}_1(\ialloc_1) < \tilde{v}_1(\ialloc_2) - \tilde{v}_1(g)$. 
    
    We will first consider the case when $\tilde{v}_1(\ialloc_1) = 0$; note that, here the normalized $p$-mean of utilities is also zero. Define $V = \{j \in \ialloc_2 \ | \ \tilde{v}_1(j) > 0\}$ to be the set of items in $\ialloc_2$ that are positively valued by agent $1$. Since $\ialloc$ is not $\EFone$, agent $1$ must value at least $2$ items in $\ialloc_2$, i.e., $|V| \geq 2$. Transferring an item $g \in \arg \min\limits_{j \in V} \{\tilde{v}_2(j)\}$ which is valued least by agent $2$ from $\ialloc_2$ to $\ialloc_1$ will $(1)$ make agent $1$'s normalized utility positive, and $(2)$ will ensure that agent $2$'s normalized utility stays positive, since $g$ is her least-valued item in $V$ and $|V| \geq 2$. Thus, the transfer strictly increases the normalized $p$-mean of utilities: both agents' normalized utilities are positive after this transfer, resulting in a positive normalized $p$-mean of utilities, whereas, before the transfer, the $p$-mean was zero. In the subsequent proof, we will consider the remaining case where $\tilde{v}_1(\ialloc_1) > 0$.

    Define $S \subseteq \ialloc_2$ to be the following set: 
    $S \coloneqq \{g \in \ialloc_2 \: | \: \tilde{v}_1(\ialloc_1)\cdot \tilde{v}_2(\ialloc_2) < \tilde{v}_1(\ialloc_1 \cup \{ g \} )\cdot\tilde{v}_2(\ialloc_2 \setminus \{ g \})\}$.
    Intuitively, $S$ is the set of all goods $g$ in $\ialloc_2$ such that transferring $g$ from agent $2$ to agent $1$ strictly increases the (normalized) Nash Welfare, i.e. the product of agents' normalized utilities. 
    
    Since $\ialloc$ is not EF1, we know that transferring an item from $\ialloc_2$ to $\ialloc_1$ will increase the Nash welfare~\cite{caragiannis2019unreasonable}, implying that $S$ is nonempty. In particular, every good $g \in \arg \max_{j \in \ialloc_2} \frac{\tilde{v}_1(j)}{\tilde{v}_2(j)}$ must be in $S$. To see this, notice that for such a good $g$, by definition we have $\frac{\tilde{v}_1(g)}{\tilde{v}_2(g)} \geq \frac{\tilde{v}_1(\ialloc_2)}{\tilde{v}_2(\ialloc_2)}$, which along with the fact that $\tilde{v}_1(\ialloc_1) < \tilde{v}_1(\ialloc_2) - \tilde{v}_1(g)$ implies that $\tilde{v}_1(g)\tilde{v}_2(\ialloc_2) > (\tilde{v}_1(\ialloc_1) + \tilde{v}_1(g))\tilde{v}_2(g)$. Adding $\tilde{v}_1(\ialloc_1)\tilde{v}_2(\ialloc_2)$ to both sides of this inequality and re-arranging gives us $(\tilde{v}_1(\ialloc_1) + \tilde{v}_1(g))(\tilde{v}_2(\ialloc_2) - \tilde{v}_2(g)) > \tilde{v}_1(\ialloc_1)\tilde{v}_2(\ialloc_2)$, showing that $g \in S$, i.e, $S$ is non-empty.

    Consider an arbitrary good $g \in S$. We consider two exhaustive cases based on whether $\tilde{v}_1(\ialloc_1) > \tilde{v}_2(\ialloc_2) - \tilde{v}_2(g)$ is true or not. In both cases, we construct an allocation $\ialloc'$ having a higher normalized $p$-mean for every $p \leq 0$.

    \paragraph{Case 1. $\tilde{v}_1(\ialloc_1) > \tilde{v}_2(\ialloc_2) - \tilde{v}_2(g)$.} In this case, we will show that the normalized $p$-mean of $\ialloc' = (\ialloc_2 \setminus \{g\}, \ialloc_1 \cup \{g\})$, obtained by transferring $g$ from $\ialloc_2$ to $\ialloc_1$ and then swapping the bundles, is higher than the normalized $p$-mean of $\ialloc$. Towards this, we will show that $(i)$ the normalized Egalitarian welfare (i.e., the minimum utility) of agents in $\ialloc'$ is larger than $\ialloc$, and $(ii)$ the normalized Nash welfare of $\ialloc'$ is strictly larger than $\ialloc$. Note that $(i)$ and the fact that $\tilde{v}_2(\ialloc_2) > 1/2 > \tilde{v}_1(\ialloc_1) > 0$ implies that $\tilde{v}_1(\ialloc_1),\tilde{v}_2(\ialloc_2),\tilde{v}_1(\ialloc_2\setminus \{g\})$, $\tilde{v}_2(\ialloc_1\cup \{g\})$ are all strictly positive. Thus, given $(i)$ and $(ii)$, we can use~\Cref{lemma:indiv_goods_algebra} to conclude that, for all $p \leq 0$, the normalized $p$-mean of utilities in $\ialloc'$ is strictly larger than in $\ialloc$.

    To prove $(i)$, first note that $\tilde{v}_2(\ialloc_2) > 1/2 > \tilde{v}_1(\ialloc_1)$, i.e., $\min\{\tilde{v}_1(\ialloc_1), \tilde{v}_2(\ialloc_2)\} = \tilde{v}_1(\ialloc_1)$. Second, we have $\tilde{v}_1(\ialloc_1) < \tilde{v}_1(\ialloc_2 \setminus \{g\})$, since $\ialloc$ is not $\EFone$. Additionally, we can show that $\tilde{v}_1(\ialloc_1) < \tilde{v}_2(\ialloc_1 \cup \{g\})$ as follows:
    \begin{align*}
        \tilde{v}_2(\ialloc_1 \cup \{g\}) & = 1 - (\tilde{v}_2(\ialloc_2) - \tilde{v}_2(g))\tag{$\tilde{v}_1(\ialloc_1) + \tilde{v}_1(\ialloc_2) = 1$}\\
        & > 1 - \tilde{v}_1(\ialloc_1) \tag{$\tilde{v}_1(\ialloc_1) > \tilde{v}_2(\ialloc_2) - \tilde{v}_2(g)$}\\
        & > \tilde{v}_1(\ialloc_1). \tag{$1/2 > \tilde{v}_1(\ialloc_1)$}
    \end{align*}
    The inequalities proved above imply $(i)$ since $\min\{\tilde{v}_1(\ialloc_1), \tilde{v}_2(\ialloc_2)\} = \tilde{v}_1(\ialloc_1) < \min\{\tilde{v}_1(\ialloc_2\setminus \{g\}), \tilde{v}_2(\ialloc_1 \cup \{g\})\}$. For proving $(ii)$, we consider the following sequence of inequalities,

    \begin{align*}
        &\tilde{v}_1(\ialloc_2 \setminus \{g\}) \cdot \tilde{v}_2(\ialloc_1 \cup \{g\}) - \tilde{v}_1(\ialloc_1)\cdot \tilde{v}_2(\ialloc_2)\\
        &= (1 - \tilde{v}_1(\ialloc_1 \cup \{g\})(1 - \tilde{v}_2(\ialloc_2 \setminus \{g\})) - \tilde{v}_1(\ialloc_1)\cdot \tilde{v}_2(\ialloc_2) \tag{$\tilde{v}_i(\ialloc_1) + \tilde{v}_i(\ialloc_2) = 1$}\\
        &= \Big( \tilde{v}_1(\ialloc_1 \cup \{g\})\cdot \tilde{v}_2(\ialloc_2 \setminus \{g\}) - \tilde{v}_1(\ialloc_1)\cdot \tilde{v}_2(\ialloc_2) \Big) + 1 - \tilde{v}_1(\ialloc_1 \cup \{g\}) - \tilde{v}_2(\ialloc_2 \setminus \{g\})\\
        &> 0 + 1 - \tilde{v}_1(\ialloc_1 \cup \{g\}) - \tilde{v}_2(\ialloc_2 \setminus \{g\}) \tag{$g \in S$}\\
        & = (\tilde{v}_1(\ialloc_2) - \tilde{v}_1(g)) - (\tilde{v}_2(\ialloc_2) - \tilde{v}_2(g)) \tag{$\tilde{v}_1(\ialloc_1) + \tilde{v}_1(\ialloc_2) = 1$}\\
        & > \tilde{v}_1(\ialloc_1) - (\tilde{v}_2(\ialloc_2) - \tilde{v}_2(g)) \tag{$\tilde{v}_1(\ialloc_1) < \tilde{v}_1(\ialloc_2) - \tilde{v}_1(g)$}\\
        & > 0. \tag{$\tilde{v}_1(\ialloc_1) > \tilde{v}_2(\ialloc_2) - \tilde{v}_2(g)$}
    \end{align*}


    \paragraph{Case 2. $\tilde{v}_1(\ialloc_1) \leq \tilde{v}_2(\ialloc_2) - \tilde{v}_2(g)$.} We show that the allocation $\ialloc' = (\ialloc_1 \cup \{g\}, \ialloc_2 \setminus \{g\})$ has a higher Nash and normalized Egalitarian welfare than $\ialloc$; applying~\Cref{lemma:indiv_goods_algebra} completes the argument. 

    Since $g \in S$, by definition, transferring $g$ from $\ialloc_2$ to $\ialloc_1$ strictly increases the (normalized) Nash welfare. Considering the normalized Egalitarian welfare, we again have $0 < \tilde{v}_1(\ialloc_1)  = \min\{\tilde{v}_1(\ialloc_1), \tilde{v}_2(\ialloc_2)\}$. Additionally, since $\tilde{v}_1(g) > 0$, we have $\tilde{v}_1(\ialloc_1) < \tilde{v}_1(\ialloc_1 \cup \{g\})$ and $\tilde{v}_1(\ialloc_1) \leq \tilde{v}_2(\ialloc_2 \setminus \{g\})$. Chaining these inequalities gives us $0 < \min\{\tilde{v}_1(\ialloc_1), \tilde{v}_2(\ialloc_2)\} = \tilde{v}_1(\ialloc_1) \leq \min\{\tilde{v}_1(\ialloc_1 \cup \{g\}), \tilde{v}_2(\ialloc_2 \setminus \{g\})\}$. Invoking~\Cref{lemma:indiv_goods_algebra} concludes the proof.  
\end{proof}

Next, we prove that, even for $n=2$ agents,~\Cref{theorem:indiv-goods-EF1} does not hold for $p > 0$, and that for $n > 2$ agents, maximizing the normalized $p$-mean of utilities is not even $\PROPone$ (hence not $\EFone$) for any $p \neq 0$.  

\begin{theorem}\label{theorem:indiv-goods-tight-wrt-n}
For $n > 2$ agents with additive valuations over indivisible goods, and all  $p < 0$, there exist instances such that every allocation that maximizes the normalized $p$-mean of utilities is not $\PROPone$. Furthermore, for all $p > 0$ and $n \geq 2$ agents, there exist instances such that every allocation that maximizes the normalized $p$-mean of utilities is not $\PROPone$.
\end{theorem}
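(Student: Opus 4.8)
The claim splits into (a) $p>0$, $n\ge 2$, and (b) $p<0$, $n>2$. Part~(b) genuinely needs the indivisibility: by \Cref{theorem:divisible-goods}, every normalized-$p$-mean maximizer over \emph{divisible} goods is $\PROP$ (hence $\PROPone$) for all $p\le 0$, so no divisible-to-indivisible reduction can work there. Part~(a), by contrast, I would derive from the divisible negative result \Cref{theorem:div-goods-negative} by a fine-discretization argument --- the same mechanism as in the proof of \Cref{lemma:div-to-indiv}. (That lemma cannot be cited verbatim since the normalized $p$-mean is not a welfarist rule, but its proof applies because the objective depends only on the \emph{normalized} utility profile, and normalized values are invariant under replacing an item by identical copies.) Concretely: \Cref{theorem:div-goods-negative} gives, for each $p>0$ and each $n$, a divisible instance $\calI$ every normalized-$p$-mean maximizer of which fails $\PROP$; failing $\PROP$ with $n$ agents means some agent is below $\tilde v=1/n$, and since the maximizer set is finite in each of the three constructions there, there is a uniform $\epsilon>0$ with every maximizer leaving some agent at normalized utility $\le 1/n-\epsilon$. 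Replace each item of $\calI$ by $K$ identical copies, with $K$ large enough that every copy is worth less than $\epsilon/2$ (normalized) to every agent. Every integral allocation of the new instance projects to a fractional allocation of $\calI$ with the same normalized profile, and each fractional optimum of $\calI$ is exactly realized (in profile) by an integral allocation; since the normalized $p$-mean depends only on that profile, the optimal value is unchanged and every integral maximizer corresponds to a fractional maximizer of $\calI$. Hence in every integral maximizer some agent $i$ has $\tilde v_i(\ialloc_i)\le 1/n-\epsilon$, and adding any single item keeps her below $1/n-\epsilon/2<1/n$, so $\PROPone$ fails.

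\textbf{The $p<0$ construction.} For $p<0$ the normalized $p$-mean is strictly more equality-seeking than Nash welfare; the instance I would build is one (per $p$) where this extra egalitarian pull, \emph{together with} integrality, drives one agent below $1/n$ with no single item able to rescue her. For $n=3$: one ``large'' item $a$ and $N$ identical ``small'' items $t_1,\dots,t_N$; agents $1$ and $2$ share the valuation $v_{\ell a}=\mu$, $v_{\ell t_i}=(1-\mu)/N$, while agent $3$ has $v_{3a}=0$, $v_{3 t_i}=1/N$. Pick $\mu=\mu(p)\in(0,1)$ close enough to $1$ that $\tfrac{1}{1+(1-\mu)^{p/(1-p)}}<\tfrac1{2n}$ (possible since $p/(1-p)<0$, so $(1-\mu)^{p/(1-p)}\to\infty$ as $\mu\to1$), then take $N$ large. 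The steps: (i) no maximizer gives $a$ to agent $3$ (or, for $n>3$, to any agent $i\ge 4$), since moving $a$ to agent $1$ strictly increases $\tilde v_1$ and changes nothing else as $v_{3a}=0$; by the $1\!\leftrightarrow\!2$ symmetry assume $a\to$ agent $1$. (ii) By a KKT/marginal analysis of the continuous relaxation (as in \Cref{theorem:divisible-goods}), the recipient of $a$ gets \emph{none} of the $t$'s --- her marginal gain $|p|\mu^{p-1}(1-\mu)$ at zero $t$'s is smaller than agents $2$ and $3$'s marginals, which blow up near $0$ since $p-1<0$ and $\mu>1/2$ --- and the $t$'s split between agents $2$ and $3$ so that $\tilde v_2/\tilde v_3=(1-\mu)^{1/(1-p)}<1$, giving $\tilde v_3=\tfrac1{1+(1-\mu)^{p/(1-p)}}$. (iii) For $N$ large the integral optima match this up to $O(1/N)$, so every maximizer has $\tilde v_3(\ialloc_3)<\tfrac1{2n}+O(1/N)<\tfrac1n$; since agent $3$'s only positively-valued items outside her bundle are $t$'s (each worth $1/N$), $\tilde v_3(\ialloc_3\cup\{g\})<\tfrac1n$ for every $g\notin\ialloc_3$, so $\PROPone$ fails for agent $3$. (iv) For $n>3$, add a private item $d_i$ for each extra agent (worth $1$ to $i$, $0$ to everyone else): every maximizer gives $d_i$ to $i$ (else agent $i$ has utility $0$ and $\mean_p=0$), these agents are decoupled from the core, and agent $3$ still fails $\PROPone$.

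\textbf{The main obstacle.} The crux is step~(iii): showing that \emph{every} maximizer --- not just one --- leaves agent $3$ below $1/n$ by more than a single item's value. This needs a full structural description of the maximizer set (who receives $a$; that the $a$-recipient gets no $t$'s; the near-optimal $t$-split), which I would extract from the convex-program analysis, plus a choice of $\mu$ as a function of $p$ (and $n$) keeping $\tilde v_3$ uniformly below $1/n$. The delicate point is that as $p\to 0^-$ the normalized $p$-mean tends to Nash welfare, whose maximizers are always $\PROPone$, so $\mu$ must tend to $1$; one must then check that the structural claims survive in that regime --- they do, since the domination argument ruling out $a\to$ agent $3$ is unconditional, and ``the $a$-recipient gets no $t$'s'' needs only $\mu\ge 1/2$. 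The only subtlety in part~(a) is, as noted, that the reduction is by re-running the splitting argument of \Cref{lemma:div-to-indiv} rather than invoking it as a black box, since the normalized $p$-mean is not welfarist.
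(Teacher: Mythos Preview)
Your part (a) matches the paper's argument: the paper also derives the $p>0$ case directly from \Cref{theorem:div-goods-negative} via \Cref{lemma:div-to-indiv}, and your remark that the normalized $p$-mean is not literally a welfarist rule (so one re-runs the splitting argument rather than citing the lemma as a black box) is a fair caveat the paper glosses over.

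For part (b) your construction is correct but takes a much longer road than the paper's. The paper uses a single explicit $3$-agent, $7$-good instance: agent~1 values every good at $1/7$; agent~2 values goods $1$--$6$ at $\varepsilon/6$ and good $7$ at $1-\varepsilon$; agent~3 values only good $7$. For any $p<0$, agent~3 must receive good~7 (else her utility is zero and the $p$-mean is zero), and then for $\varepsilon$ small enough agent~2's normalized utility is so tiny that the egalitarian pull of the $p$-mean forces five of goods $1$--$6$ to agent~2, leaving agent~1 with a single good and utility $1/7$; adding any one good gives $2/7<1/3$, so \emph{agent~1} fails $\PROPone$. No continuous relaxation, no KKT, no asymptotics in $N$ --- the whole argument is a two-line case check on a fixed finite instance. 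Your approach instead builds a parametric family with one large item and $N$ small ones, analyzes the continuous optimum via first-order conditions, and then argues integral maximizers are $O(1/N)$-close; this works, but the step ``every integral maximizer matches the continuous optimum up to $O(1/N)$'' needs the uniqueness/strict-convexity argument you allude to, and your claim that ``the $a$-recipient gets no $t$'s needs only $\mu\ge 1/2$'' is not quite right (you need $\mu$ close enough to $1$, which is what you choose anyway). The paper's instance also makes the extension to $n>3$ identical to yours (add dummy agent--item pairs). So both approaches succeed; the paper's buys simplicity and avoids the discretization step entirely, while yours is more systematic and makes the dependence on $p$ explicit through the choice of $\mu(p)$.
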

\begin{proof}   
The second part of the theorem, for $p > 0$, follows from~\Cref{lemma:div-to-indiv} and~\Cref{theorem:div-goods-negative}, since if all allocations that maximize the normalized $p$-mean of utilities are $\PROPone$, then~\Cref{lemma:div-to-indiv} would imply that allocations of divisible goods maximizing the $p$-mean for $p>0$ would be $\PROP$ as well, violating~\Cref{theorem:div-goods-negative}.

For the first part of the theorem, consider the following instance with $n=3$ agents and $m=7$ indivisible goods, where $\varepsilon > 0$ is a small number:
    $$\begin{pmatrix}
        1/7 ~~& 1/7 ~~& 1/7 ~~& 1/7 ~~& 1/7 ~~& 1/7 ~~& 1/7 \\
        \varepsilon/6 & \varepsilon/6 & \varepsilon/6 & \varepsilon/6 & \varepsilon/6 & \varepsilon/6 & 1-\varepsilon\\
        0 & 0 & 0 & 0 & 0 & 0 & 1
    \end{pmatrix}.$$
    For every $p < 0$ and any allocation that maximizes the normalized $p$-mean of utilities, agent 3 must get the last item, since she will otherwise have zero utility, and thus the $p$-mean will be equal to zero. Additionally, it is easy to verify that for every $p < 0$, one can pick $\varepsilon > 0$ sufficiently small so that agent $1$ will only get a single good in any $p$-mean maximizing allocation, resulting in a utility of $1/7$ for agent $1$. Such an allocation is not $\PROPone$, since agent $1$'s utility after receiving one more good is $2/7$, which is strictly less than her proportional share, $1/3$. 
    One can extend the above construction to any $n > 3$ by adding a new item $i^*$ and a corresponding agent $i$ such that $i^*$ is the only item $i$ likes, and $i^*$ is liked only by agent $i$.  
\end{proof}

\section{Allocating Chores}\label{SEC:CHORES}

In this section, we present our results for the case of chores. For ease of notation, for the remainder of this section, we use $\tilde{c}_i$ to refer to the \emph{normalized} cost of agent $i$, i.e. $\tilde{c}_{ij} = c_{ij}/c_i(\items)$ and $\tilde{c}_{i}(\dalloc_i) = c_{i}(\dalloc_i)/c_i(\items)$.
For ease of exposition, we say that we minimize the \emph{normalized $p$-mean} of disutilities to mean that we minimize the $p$-mean of normalized disutilities, i.e., find the allocation $\dalloc$ with minimum $\mean_p( \tilde{c}_{i}(\dalloc_i), \ldots,  \tilde{c}_{n}(\dalloc_n) )$.

In~\Cref{sec:chores negative result}, we begin by proving impossibility results for welfarist rules. For divisible chores, we show that welfarist rules (see~\Cref{sec:chores negative result} for the definition) cannot result in $\bEF$ or $\bPROP$ allocations; this sharply contrasts with the case of goods, where maximizing Nash welfare produces $\EF$ allocations. Similar impossibility is established for indivisible chores: optimizing a welfarist rule cannot result in a $\bEFk$ or $\bPROPk$ allocation; this again contrasts with the case of goods, where optimizing the Nash social welfare is $\EFone$ (and, in fact, this is the unique welfarist rule with this property~\cite{yuen2023extending}). 

In~\Cref{sec: divisible chores} we prove that, for $p \geq 1$, minimizing the normalized $p$-mean of disutilities for divisible chores gives $n^{1/p}$-$\PROP$ and $\fPO$ allocations, and show that, similar to the case of goods, the KKT conditions of the corresponding convex program allow for a market interpretation of the optimal solution; we prove that our approximation on proportionality is tight.  Building upon this, in~\Cref{sec:chores rounding} we show that it is possible to round the fractional allocation that minimizes the normalized $p$-mean, for $p \geq 1$, to get integral allocations that are $\fPO$ and approximately $\PROPone$. Finally, in~\Cref{sec:indivisible chores} we prove that, for the case of two agents and $p \geq 2$, integral allocations that minimize the normalized $p$-mean of disutilities are $\EFone$ and $\PO$ (and this is not true for more than two agents or $p < 2$).

\subsection{Chore Division: Impossibilities for Welfarist Rules}\label{sec:chores negative result}

In this section, we prove that in sharp contrast with the case of goods, welfarist rules for divisible chores cannot achieve $\bEF$ for any $\beta \geq 1$ or $\bPROP$ for any $\beta \in [1,n)$; indeed, given $n$ agents, $n$-$\PROP$ is trivially achieved by every divisible allocation. Along with~\Cref{lemma:div-to-indiv}, this impossibility result directly implies that for indivisible chores, welfarist rules cannot result in $\bEFk$ allocations for any $\beta \geq 1$ and $k \geq 1$ or $\bPROPk$ allocations for any $\beta \in [1, n)$ and $k \geq 1$.


 \begin{definition}[Welfarist rule]
 Let $\calC:\mathbb{R}^n_{\geq 0} \mapsto \mathbb{R}$ be a weakly-increasing cost function.\footnote{A function $g$ is weakly-increasing if $g(x) > g(y)$ when $x_i > y_i$ for all $i \in [n]$.} Given an instance, a welfarist rule with cost function $\calC$ chooses an allocation $\ialloc$ that minimizes $\calC( c_1(\ialloc_1), \allowbreak \dots, c_n(\ialloc_n) )$. If there are multiple such allocations, the rule may choose an arbitrary one.
\end{definition}

A welfarist rule is $\bEF$ ($\bPROP$) if it always outputs a $\bEF$ ($\bPROP$) allocation. We now state and prove our impossibility result for $\bEF$; the result for $\bPROP$ follows as a corollary.

\begin{theorem}\label{theorem:non-norm-neg-res}
Even for $n=2$ agents with additive costs, there does not exist a weakly-increasing cost function $\calC:\mathbb{R}^n_{\geq 0} \mapsto \mathbb{R}$ such that, for all instances with divisible chores, the welfarist rule with cost function $\calC$ is $\bEF$, for any $\beta \geq 1$. 
\end{theorem}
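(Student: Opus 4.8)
Here is my plan.

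\medskip
\noindent\textbf{Step 1: Reformulate what the hypothesis gives us.} The plan is to argue by contradiction: suppose that for some $\beta \ge 1$ there is a weakly-increasing $\calC:\mathbb{R}^2_{\ge 0}\to\mathbb{R}$ whose welfarist rule is $\bEF$ on every two-agent divisible-chore instance. First I would exploit that with two agents each agent's bundle is the complement of the other's, so $c_1(\ialloc_2)=c_1(\items)-c_1(\ialloc_1)$ and symmetrically; hence $\ialloc$ is $\bEF$ if and only if its \emph{cost vector} $(c_1(\ialloc_1),c_2(\ialloc_2))$ lies in the axis-aligned box $\mathrm{Box}(\items):=[0,\tfrac{\beta}{1+\beta}c_1(\items)]\times[0,\tfrac{\beta}{1+\beta}c_2(\items)]$. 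In particular $\bEF$-ness of an allocation depends \emph{only} on its cost vector and on the instance totals $c_1(\items),c_2(\items)$. Because the rule breaks ties arbitrarily, "the $\calC$-rule is always $\bEF$'' is equivalent to: for every instance, \emph{every} minimizer of $\calC$ over the achievable cost vectors lies in $\mathrm{Box}(\items)$; equivalently, for every instance and every achievable cost vector $\mathbf{w}\notin\mathrm{Box}(\items)$ there is a $\bEF$ (box) cost vector achievable in the same instance with \emph{strictly} smaller $\calC$-value.

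\medskip
\noindent\textbf{Step 2: Build conflicting constraints from a short family of instances.} For a single divisible chore with costs $(a,b)$ the achievable cost vectors form the segment $\{(ta,(1-t)b):t\in[0,1]\}$, and the hypothesis forces the rule's choice on it to have first coordinate in $[\tfrac{a}{1+\beta},\tfrac{\beta a}{1+\beta}]$; with two chores the achievable set is a parallelogram (a zonotope) whose $\bEF$ part is again its intersection with the corresponding box. The leverage is that one cost vector $\mathbf{w}$ can fail to be $\bEF$ in one instance \emph{because agent $1$ is overloaded there} and fail in another instance \emph{because agent $2$ is overloaded there}, since the two boxes sit differently (the totals $c_i(\items)$ differ). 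Choosing instances $I_1,\dots,I_k$ and cost vectors $v_1,\dots,v_k$ so that $v_{j+1}$ is achievable in $I_j$ but lies outside $I_j$'s box — while $v_j$ is the cost vector of an allocation the box-constraint on $I_j$ forces the rule to weakly prefer to $v_{j+1}$ — produces a cyclic chain $\calC(v_1)<\calC(v_2)<\cdots<\calC(v_k)<\calC(v_1)$, the desired contradiction. (When $\beta=1$ the box of a single-chore instance degenerates to the single midpoint cost vector $(a/2,b/2)$, which pins the rule's choice exactly and makes the chain most transparent; for $\beta>1$ the boxes are genuine intervals/regions, so the instances used to close the cycle must be chosen more carefully, e.g.\ as symmetric two-chore "opposite-preference'' instances, but the mechanism is the same.)

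\medskip
\noindent\textbf{Step 3: The hard part.} The crux is \emph{closing} the cyclic chain. Weak monotonicity of $\calC$ is essentially useless along the Pareto frontier of a single instance — on a strictly decreasing segment no point coordinatewise-dominates another — so every bit of force must come from the "$\calC$-minimizer lies in $\mathrm{Box}$'' constraints chained across instances whose achievable polytopes are engineered to overlap at exactly $v_1,\dots,v_k$. Concretely I would take the instances in symmetric pairs (swapping chores/agents to match totals) and tune the chore costs so that the crossing points of the achievable polytopes land just outside each instance's box while the boxes jointly cover both coordinate directions; the delicate point is to verify that the resulting $\calC(v_j)$-inequalities are genuinely cyclic rather than all pointing the same way (which is what defeats a naive two-instance attempt, where two crossing segments only yield $\calC(\text{something})<\calC(P)$ twice). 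Finally, the impossibility for $\bPROP$ with $\beta\in[1,n)$ follows by the same template with $\mathrm{Box}(\items)$ replaced by the larger box $\prod_i[0,\tfrac{\beta}{n}c_i(\items)]$, which is a proper sub-box precisely when $\beta<n$; and the indivisible-chore corollaries ($\bEFk$, $\bPROPk$) follow from \Cref{lemma:div-to-indiv}.
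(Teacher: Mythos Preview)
Your proposal is a plan, not a proof: you identify the right overall shape (force a cyclic chain of strict $\calC$-inequalities by varying the instance) and even mention the right symmetry move (swap the agents), but you explicitly leave ``the hard part'' open and never construct any instance or verify any inequality. As written, the proposal does not establish the theorem.

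The missing idea is how to actually close the cycle with only two instances, and the reason your ``naive two-instance attempt'' fails is that you are trying to extract inequalities of the form $\calC(\text{some box point})<\calC(v)$ and then chain them --- but you have no control over \emph{which} box point appears, so nothing links up. The paper's trick is to design a single instance in which \emph{every} $\bEF$ cost vector is coordinatewise large: concretely, with $m=2\beta+2$ chores, agent~$1$ costing $1$ each and agent~$2$ costing $1$ for one chore and $m$ for the rest, one checks that any $\bEF$ allocation has cost vector $\ge(1.5,\,m+1)$ componentwise. This is exactly where weak monotonicity of $\calC$ becomes useful (contrary to your remark that it is ``essentially useless''): since the achievable vector $(m-1,1)$ must satisfy $\calC(m-1,1)\ge\min_{\bEF}\calC\ge\calC(1.5,m+1)$, monotonicity gives $\calC(1.5,m+1)>\calC(1,m-1)$, hence $\calC(m-1,1)>\calC(1,m-1)$. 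Swapping the two agents' roles yields $\calC(1,m-1)>\calC(m-1,1)$ by the identical argument, and the contradiction closes. So the substantive gap in your proposal is the instance-design step that forces a uniform coordinatewise lower bound on the $\bEF$ region; once you have that, two instances suffice and no elaborate cycle is needed.
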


\begin{proof}   
Towards a contradiction, assume that there exists a weakly-increasing cost function $\calC$ such that for any indivisible chore division instance and agents with additive costs, one of the allocations that minimizes $\calC$ is $\bEF$, for a given $\beta \geq 1$. Since $\bEF$ implies $\beta'\text{-}\mathrm{EF}$ for $\beta' > \beta$, without loss of generality we can assume that $\beta$ is an integer. In the subsequent proof we analyze two divisible chore division instances, $\calI_1$ and $\calI_2$, and show that assuming that the aforementioned cost function $\calC$ exists leads to a contradiction. 

\paragraph{Instance $\calI_1$ and its analysis.}
Instance $\calI_1$ consists of $n=2$ agents and $m$ divisible chores where $m = 2\beta+2$. Agent $1$ has a cost function $c_1$ such that $c_{1j} = 1$ for all chores $j\in \items$ and agent $2$ has cost function $c_2$ satisfying $c_{21} = 1$ and $c_{2j} = m$ for all chores $j \in \items \setminus \{1\}$. We first prove the following claim about instance $\calI_1$.

\begin{claim}\label{claim:non-norm-EF1}
The disutilities of the agents in any $\bEF$ allocation $\dalloc = (\dalloc_1, \dalloc_2)$ of the instance $\calI_1$ satisfy $c_1(\dalloc_1) \geq 1.5$ and $c_2(\dalloc_2) \geq m + 1$.
\end{claim}
\begin{proof}   
Let $\dalloc = (\dalloc_1, \dalloc_2)$ be a $\bEF$ allocation of instance $\calI_1$. The $\bEF$ constraint for agent $1$ implies that $c_1(\dalloc_1) \leq \beta c_1(\dalloc_2)$. Since $c_1(\dalloc_1) + c_1(\dalloc_2) = c_1(\items) = m$, we have $m-c_1(\dalloc_2) \leq \beta c_1(\dalloc_2)$. Therefore, $c_1(\dalloc_2) \geq \frac{m}{\beta + 1} = \frac{2\beta + 2}{\beta + 1} = 2$. Since agent $1$ values all chores at $1$, $c_1(\dalloc_2) \geq 2$ implies that $\dalloc_2$ satisfies $\sum_{j \in \items} x_{2,j} \geq 2$. Since agent $2$ values one chore at $1$ and the rest at $m$, we have $c_2(\dalloc_2) \geq m + 1$. 

It remains to show that $c_1(\dalloc_1) \geq 1.5$. To this end, we consider agent $2$'s $\bEF$ guarantee, $c_2(\dalloc_2) \leq \beta c_2(\dalloc_1)$. Along with the fact that $c_2(\dalloc_1) + c_2(\dalloc_2) = m(m-1)+1$, this implies that $c_2(\dalloc_1) \geq \frac{m(m-1)+1}{\beta + 1} = \frac{m(m-1)+1}{m/2} = 2(m-1) + \frac{2}{m}$. Since the cost of any chore for agent $2$ is at most $m$, the previous inequality implies that agent $1$ must be getting at least $\frac{c_2(\dalloc_1)}{m} \geq 2 - \left( \frac{2}{m} - \frac{2}{m^2} \right)$ unit of chores. The maximum value of $2 - \left( \frac{2}{m} - \frac{2}{m^2} \right)$ is reached when its derivative is zero, i.e., when $\frac{2}{m^2} - \frac{4}{m^3} = 0$ or at $m=2$. Thus, agent $1$ gets at least $2-\left( \frac{2}{2} - \frac{2}{2^2} \right) = 1.5$ units of chores, which in turn implies that, $c_1(\dalloc_1) \geq 1.5$.  
\end{proof}

Consider the allocation $\dalloc = (\dalloc_1, \dalloc_2)$ where $\dalloc_1 = \items \setminus \{1\}$ and $\dalloc_2 = \{1\}$, for which $c_1(\dalloc_1) = m-1$ and $c_2(\dalloc_2) = 1$. Let $\mathcal{F} = \{ \dalloc = (\dalloc_1, \dalloc_2)  : \ \dalloc \text{ is } \bEF \}$ denote the set of all $\bEF$ allocations of instance $\calI_1$. By definition, the global minimum of $\calC$ must occur at some $\bEF$ allocation. Thus, we have
\begin{align*}
    \calC(m-1, 1) &\geq \min_{(\dalloc_1, \dalloc_2) \in \mathcal{F}}  \calC(c_1(\ialloc_1), c_2(\ialloc_2)) \\
    &\geq \calC(1.5, m + 1) \tag{\Cref{claim:non-norm-EF1}} \\
    &> C(1,m-1). \tag{$\calC$ is a weakly-increasing function}
\end{align*}

\paragraph{Instance $\calI_2$ and its analysis.}
Consider an instance $\calI_2$ with $n=2$ agents and $m=2\beta + 2$ chores such that agent $1$ has the cost function $c_2$ and agent $2$ has the cost function $c_1$ (i.e., $\calI_2$ is the instance obtained by swapping the names of agents $1$ and $2$ in $\calI_1$). Then, by using exactly the same arguments as in $\calI_1$, we can derive the inequality $\calC(1,m-1) > \calC(m-1,1)$; this contradicts the previously obtained inequality.  
\end{proof}

\begin{corollary}\label{corollary:neg-result-bPROP}
Even for $n=2$ agents with additive costs, there does not exist a weakly-increasing cost function $\calC:\mathbb{R}^n_{\geq 0} \mapsto \mathbb{R}$ such that, for all instances with divisible chores, the welfarist rule with cost function $\calC$ is $\bPROP$, for any $\beta \in [1,2)$. 
\end{corollary}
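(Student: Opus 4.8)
The plan is to obtain \Cref{corollary:neg-result-bPROP} as a short corollary of \Cref{theorem:non-norm-neg-res}, rather than re-running the two-instance construction. The key observation is that, for two agents dividing chores, a $\bPROP$ guarantee is already strong enough to force an approximate envy-freeness guarantee: any $\bPROP$ allocation of divisible chores with $n=2$ is automatically $\beta'\text{-}\mathrm{EF}$ for $\beta' \coloneqq \beta/(2-\beta)$, and when $\beta \in [1,2)$ this $\beta'$ lies in $[1,\infty)$. So a welfarist rule that always returns a $\bPROP$ allocation would, on every instance, return a $\beta'\text{-}\mathrm{EF}$ allocation, i.e.\ it would be a $\beta'\text{-}\mathrm{EF}$ welfarist rule for some $\beta' \ge 1$ --- which \Cref{theorem:non-norm-neg-res} rules out.

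Carrying this out: first, assume for contradiction that some weakly-increasing $\calC : \mathbb{R}^2_{\ge 0} \mapsto \mathbb{R}$ has a welfarist rule that is $\bPROP$ on all two-agent divisible chore instances, for some fixed $\beta \in [1,2)$. Fix such an instance, let $\dalloc = (\dalloc_1,\dalloc_2)$ be an allocation the rule outputs, and note $c_i(\dalloc_i) \le \tfrac{\beta}{2}\,c_i(\items)$ for $i = 1,2$ (this is vacuously true, hence still valid, when $\dalloc_i$ is empty, since then $c_i(\dalloc_i) = 0$). Second, use additivity and the fact that every chore is fully allocated: for the two agents $i \ne j$, $c_i(\dalloc_j) = c_i(\items) - c_i(\dalloc_i) \ge \tfrac{2-\beta}{2}\,c_i(\items)$. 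Combining,
\[
c_i(\dalloc_i) \;\le\; \frac{\beta}{2}\,c_i(\items) \;=\; \frac{\beta}{2-\beta}\cdot\frac{2-\beta}{2}\,c_i(\items) \;\le\; \frac{\beta}{2-\beta}\,c_i(\dalloc_j),
\]
so $\dalloc$ is $\beta'\text{-}\mathrm{EF}$ with $\beta' = \beta/(2-\beta)$. Third, since this holds for every instance, the welfarist rule with cost function $\calC$ is $\beta'\text{-}\mathrm{EF}$; and $\beta \in [1,2)$ implies $\beta' = \beta/(2-\beta) \ge 1$, contradicting \Cref{theorem:non-norm-neg-res}.

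I don't anticipate a genuinely hard step here; the only points needing care are (i) getting the direction of the reduction right --- we are \emph{strengthening} a $\bPROP$ allocation into an envy-freeness statement, which is possible precisely because $n = 2$ lets us write $c_i(\dalloc_j) = c_i(\items) - c_i(\dalloc_i)$, and this is also why the statement is confined to two agents / $\beta < 2$; (ii) verifying $\beta' = \beta/(2-\beta) \ge 1$ so that \Cref{theorem:non-norm-neg-res} actually applies; and (iii) the trivial degenerate case of an empty bundle. As an alternative, one could instead imitate the proof of \Cref{theorem:non-norm-neg-res} directly, replacing \Cref{claim:non-norm-EF1} with a $\bPROP$-analogue on an instance with $m$ divisible chores --- one cheap chore of cost $1$ for agent $2$ versus costly chores of cost $M$ --- taking $m$ and $M$ large enough that every $\bPROP$ allocation has $c_1(\dalloc_1) > 1$ and $c_2(\dalloc_2) > m-1$, and then mirroring the instance; but the reduction above is cleaner and uses the theorem as a black box.
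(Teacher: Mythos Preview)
Your proposal is correct and follows essentially the same approach as the paper: show that for two agents any $\bPROP$ allocation is automatically $\tfrac{\beta}{2-\beta}$-$\EF$, note that $\tfrac{\beta}{2-\beta}\in[1,\infty)$ when $\beta\in[1,2)$, and invoke \Cref{theorem:non-norm-neg-res}. The paper's derivation substitutes $c_i(\items)=c_i(\dalloc_i)+c_i(\dalloc_{-i})$ and rearranges, whereas you bound $c_i(\dalloc_j)$ from below first, but the argument is the same.
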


\begin{proof}   
    We will show that for any $2$ agent instance, if an allocation $\dalloc$ is $\bPROP$ for $\beta \in [1,2)$, then is $\frac{\beta}{2-\beta}$-$\EF$ as well. Note that, for any fixed $\beta \in [1,2)$, the ratio $\frac{\beta}{2-\beta}$ lies is in the range $[1, \infty)$. The corollary directly follows from this implication, since if $\calC$ achieves $\bPROP$, then it would also achieve $\frac{\beta}{2-\beta}$-$\EF$, contradicting~\Cref{theorem:non-norm-neg-res}.
    Towards proving the implication, consider a $\bPROP$ allocation $\dalloc$. Since $\dalloc$ is $\bPROP$, for agent $i$ we have, $c_i(\dalloc_i) \leq \beta \cdot \frac{c_i(\items)}{2} = \frac{\beta}{2} \cdot (c_i(\dalloc_i) + c_i(\dalloc_{-i}))$, where $-i$ is the remaining agent. On rearranging, we get $c_i(\dalloc_i) \leq \frac{\beta}{2-\beta} \cdot c_i(\dalloc_{-i})$, implying that $\dalloc$ is $\frac{\beta}{2-\beta}$-$\EF$. This completes the proof.  
\end{proof}

The following corollary directly follows from~\Cref{lemma:div-to-indiv},~\Cref{theorem:non-norm-neg-res}, and~\Cref{corollary:neg-result-bPROP}.

\begin{corollary}\label{corollary:neg-results-indivisible-chores}
Even for $n=2$ agents with additive costs, there does not exist a weakly-increasing cost function $\calC:\mathbb{R}^n_{\geq 0} \mapsto \mathbb{R}$ such that, for all instances with indivisible chores, the welfarist rule with cost function $\calC$ is $\bEFk$, for any $\beta \geq 1$ and $k \geq 1$ (resp. $\bPROPk$ for any $\beta \in [1, 2)$ and $k \geq 1$).
\end{corollary}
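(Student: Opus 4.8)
The plan is to obtain this corollary purely by combining results already in hand — the divisible-to-indivisible reduction of \Cref{lemma:div-to-indiv} with the divisible-chores impossibilities of \Cref{theorem:non-norm-neg-res} and \Cref{corollary:neg-result-bPROP} — via a short contradiction argument; no new combinatorial construction should be needed.

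First I would handle the $\bEFk$ claim. Fix arbitrary $\beta \geq 1$ and an integer $k \geq 1$, and suppose for contradiction that there is a weakly-increasing $\calC : \mathbb{R}^n_{\geq 0} \mapsto \mathbb{R}$ such that, on every indivisible chore instance, minimizing $\calC$ yields a $\bEFk$ allocation. Applying \Cref{lemma:div-to-indiv} with $f = \calC$ then shows that, on every divisible chore instance, minimizing $\calC$ yields a $\bEF$ allocation; that is, the welfarist rule with cost function $\calC$ is $\bEF$ for divisible chores even with $n = 2$ agents. This directly contradicts \Cref{theorem:non-norm-neg-res}. Since $\beta$ and $k$ were arbitrary, this rules out $\bEFk$ welfarist rules for all $\beta \geq 1$ and $k \geq 1$.

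The $\bPROPk$ claim follows by the same template: a putative weakly-increasing $\calC$ whose welfarist rule is $\bPROPk$ on all indivisible chore instances would, by the $\bPROPk$ half of \Cref{lemma:div-to-indiv}, give a welfarist rule that is $\bPROP$ on all divisible chore instances, contradicting \Cref{corollary:neg-result-bPROP} whenever $\beta \in [1,2)$. Hence no such $\calC$ exists for $\beta \in [1,2)$ and $k \geq 1$.

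I do not expect a genuine obstacle here; the only thing to be careful about is quantifier/tie-breaking bookkeeping, namely ensuring that the phrase ``the welfarist rule with cost function $\calC$ is $\bEFk$'' is read in the way \Cref{lemma:div-to-indiv} consumes it — in that reduction one exhibits a specific $\calC$-optimal integral allocation violating $\bEFk$, so one must make sure this violating profile is forced (up to zero-cost chores and fairness-neutral relabelings) and cannot be dodged by a clever choice among $\calC$-minimizers. This is exactly the same mild point already present in the proof of \Cref{lemma:div-to-indiv}, so it costs nothing extra here, and the corollary is otherwise immediate.
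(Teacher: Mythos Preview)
Your proposal is correct and matches the paper's own approach exactly: the paper simply states that the corollary follows directly from \Cref{lemma:div-to-indiv}, \Cref{theorem:non-norm-neg-res}, and \Cref{corollary:neg-result-bPROP}, which is precisely the contradiction argument you spell out.
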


\subsection{Divisible Chores}\label{sec: divisible chores}

We first prove our positive result for divisible chores.

\begin{theorem} \label{theorem:div-chore-prop}
For any number of agents with additive costs over divisible chores, and all $p \geq 1$, every allocation $x^*$ that minimizes the normalized $p$-mean of disutilities is $n^{1/p}$-$\PROP$ and $\fPO$. Furthermore, there exist rewards $r^*_j \geq 0$ for each item $j \in \items$ and earnings $e^*_i$ for each agent $i \in \agents$, such that $(\dalloc^*, \rewards^*, \earnings^*)$ is a market equilibrium.
\end{theorem}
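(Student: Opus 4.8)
The plan is to mirror the proof of \Cref{theorem:divisible-goods}, replacing the goods market by its chores analogue, but with the pleasant surprise that the approximate-proportionality claim comes out almost for free and does not need the market structure at all. Since $p \geq 1$, the map $t\mapsto (\tfrac1n\,t)^{1/p}$ is increasing, so minimizing the normalized $p$-mean of disutilities is equivalent to minimizing $\sum_{i=1}^n \tilde c_i(\dalloc_i)^p$, and this is a convex program: each summand is the composition of the affine map $\dalloc_i\mapsto \tilde c_i(\dalloc_i)$ with the convex function $t\mapsto t^p$, and the feasible set (a polytope) is convex. The allocation splitting every chore equally among the $n$ agents is feasible with objective value $\sum_i (1/n)^p = n^{1-p}$, so any minimizer $\dalloc^*$ satisfies $\sum_{r}\tilde c_r(\dalloc^*_r)^p \le n^{1-p}$. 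As every summand is nonnegative, $\tilde c_i(\dalloc^*_i)^p \le n^{1-p}$ for each agent $i$, i.e.\ $\tilde c_i(\dalloc^*_i)\le n^{1/p-1}$, which is exactly $c_i(\dalloc^*_i)\le n^{1/p}\, c_i(\items)/n$; this is $n^{1/p}$-$\PROP$. Fractional Pareto efficiency will then follow either directly (any Pareto improvement would strictly decrease $\sum_i\tilde c_i(\dalloc_i)^p$, by strict monotonicity of $t\mapsto t^p$) or, more in keeping with the rest of the section, from the market interpretation below together with the first welfare theorem for chores markets (\Cref{appendix:market-prelims}).

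For the market interpretation I would analyze the KKT conditions of
\[
\min \ \sum_{i=1}^n \Big(\sum_{j=1}^m \tilde c_{ij} x_{ij}\Big)^p \quad \text{s.t.}\quad \sum_{i=1}^n x_{ij} \geq 1 \ \ \forall j, \qquad -x_{ij}\le 0 \ \ \forall i,j,
\]
which has the same optima over valid allocations (over-allocating a chore only weakly increases the objective, so at an optimum $\sum_i x^*_{ij}=1$). Let $r^*_j\ge 0$ and $\mu^*_{ij}\ge 0$ be the duals of the two families of constraints. Stationarity gives $p\,\tilde c_{ij}\,\tilde c_i(\dalloc^*_i)^{p-1} = r^*_j + \mu^*_{ij}$ for all $i,j$, and complementary slackness gives $r^*_j\big(1-\sum_i x^*_{ij}\big)=0$ and $\mu^*_{ij}x^*_{ij}=0$. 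Setting earnings $e^*_i \coloneqq \sum_j r^*_j x^*_{ij}$, I claim $(\dalloc^*,\rewards^*,\earnings^*)$ is a chores-market equilibrium: (i) market clearing is complementary slackness plus $r^*_j>0$; (ii) if $x^*_{ij}>0$ then $\mu^*_{ij}=0$, so $r^*_j = p\,\tilde c_{ij}\,\tilde c_i(\dalloc^*_i)^{p-1}$, hence $\tilde c_{ij}/r^*_j = 1/\big(p\,\tilde c_i(\dalloc^*_i)^{p-1}\big)$, while for every chore $\ell$ the relation $r^*_\ell \le p\,\tilde c_{i\ell}\,\tilde c_i(\dalloc^*_i)^{p-1}$ gives $\tilde c_{i\ell}/r^*_\ell \ge 1/\big(p\,\tilde c_i(\dalloc^*_i)^{p-1}\big)$, so $\dalloc^*$ is a minimum-pain-per-buck allocation; (iii) $e^*_i = \sum_{j:x^*_{ij}>0} x^*_{ij} r^*_j = p\,\tilde c_i(\dalloc^*_i)^{p-1}\sum_j x^*_{ij}\tilde c_{ij} = p\,\tilde c_i(\dalloc^*_i)^p$, so the earning identity holds. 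Then $\fPO$ of $\dalloc^*$ with respect to the $\tilde c_i$, hence with respect to the $c_i$, follows from the first welfare theorem for chores markets.

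The only delicate points are the positivity of rewards and the degenerate cases, and I expect this bookkeeping — not the KKT computation, which is a direct sign-flipped transcription of \Cref{theorem:divisible-goods} — to be where the work is. As in the goods proof, for each chore $j$ there is an agent $i$ with $\tilde c_{ij}>0$, and if moreover $\tilde c_i(\dalloc^*_i)>0$ then $r^*_j = p\,\tilde c_{ij}\,\tilde c_i(\dalloc^*_i)^{p-1}+\mu^*_{ij}>0$; what remains is to rule out $\tilde c_i(\dalloc^*_i)=0$, which for $p>1$ would also make $\tilde c_i(\dalloc^*_i)^{p-1}$ vanish and corrupt the pain-per-buck identity. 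Under the (without loss of generality) assumption that every chore has positive cost for every agent, I would argue that for $p>1$ no agent has zero disutility at the optimum: otherwise transferring an infinitesimal fraction of a chore held by a positive-disutility agent to the zero-disutility agent gives a first-order decrease of $\sum_i\tilde c_i(\dalloc_i)^p$ (the zero-disutility agent's marginal term is $p\cdot 0^{p-1}=0$), contradicting optimality. For $p=1$ the pain-per-buck value $1/\big(p\,\tilde c_i(\dalloc^*_i)^{p-1}\big)$ is simply the constant $1$, so no division by zero arises and the whole argument goes through verbatim (and here $n^{1/p}$-$\PROP$ is anyway the trivial $n$-$\PROP$ guarantee). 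Agents or chores with zero total cost are handled by removing them, exactly as in the goods setting.
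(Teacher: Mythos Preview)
Your proposal is correct and follows essentially the same route as the paper: the $n^{1/p}$-$\PROP$ bound via comparison with the equal-split allocation, $\fPO$ from optimality (or the first welfare theorem), and the market interpretation via the KKT conditions of the same convex program, with earnings $e^*_i=p\,\tilde c_i(\dalloc^*_i)^p$.

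One small correction in the reward-positivity step: in the chores case stationarity gives $r^*_j = p\,\tilde c_{ij}\,\tilde c_i(\dalloc^*_i)^{p-1}-\mu^*_{ij}$, not $+\mu^*_{ij}$, so the goods-style argument (pick any agent with $\tilde c_{ij}>0$) does not directly yield $r^*_j>0$. The paper instead picks an agent $i$ with $x^*_{ij}>0$, which forces $\mu^*_{ij}=0$ and, under the WLOG assumption $c_{ij}>0$ for all $i,j$, automatically gives $\tilde c_i(\dalloc^*_i)>0$; this renders your perturbation argument unnecessary. Relatedly, the right WLOG here is $c_{ij}>0$ for every pair $(i,j)$ (chores with a zero-cost agent are pre-assigned), not merely positive totals, since a single $c_{ij}=0$ already forces $r^*_j=0$ via stationarity.
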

\begin{proof}   
    Fix an arbitrary $p \geq 1$, and let $\dalloc^* = (\dalloc^*_1, \ldots, \dalloc^*_n)$ be any allocation that minimizes the $p$-mean of agents' normalized disutilities. The fact that $\dalloc^*$ is $\fPO$ is clear; next, we prove that it is $n^{1/p}$-$\PROP$. Since $\dalloc^*$ minimizes the $p$-mean of normalized disutilities, we have that $\sum_{i=1}^n \tilde{c}_i(\dalloc^*_i)^p$ must be at most $\sum_{i=1}^n \tilde{c}_i(\dallocy_i)^p$, where $y_{ij} = \frac{1}{n}$ for all agent $i \in \agents$ and items $j \in \items$. Therefore,
    \begin{align*}
         \frac{n}{n^p} = \sum_{i=1}^n \tilde{c}_i(\dallocy_i)^p \geq \sum_{i=1}^n \tilde{c}_i(\dalloc^*_i)^p \geq \tilde{c}_i(\dalloc^*_i)^p.
    \end{align*}
    Taking the $p^{\text{th}}$ root on both sides gives us the desired inequality: for every agent $i \in \agents$, we get $\tilde{c}_i(\dalloc^*_i) \leq \frac{n^{1/p}}{n}$, and thus $c_i(\dalloc^*_i) \leq n^{1/p} \, \frac{\sum_{j \in \items} c_{ij}}{n} $, i.e., $\dalloc$ is $n^{1/p}$-$\PROP$.

    The proof of the second part of the theorem, i.e. the fact that one can find rewards and earnings to get a market equilibrium, is similar to the arguments in~\Cref{theorem:divisible-goods}. The proof is deferred to Appendix~\ref{app: missing from chores}.  
\end{proof}

The following corollary simply observes that by picking $p \in \Omega(\log{n})$,~\Cref{theorem:div-chore-prop} implies that minimizing the normalized $p$-mean of disutilities gives a constant approximation to proportionality; it follows from the fact that $n^{1/\log{n}} = \Theta(1)$.

\begin{corollary}
For any number of agents with additive costs over divisible chores, and $p \in \Omega(\log{n})$, every allocation that minimizes the normalized $p$-mean of disutilities is $\fPO$ and guarantees a constant factor approximation of $\PROP$.
\end{corollary}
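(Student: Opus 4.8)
The plan is to derive the corollary directly from Theorem~\ref{theorem:div-chore-prop}, which already does all the heavy lifting. By that theorem, for any $p \geq 1$ and any number of agents with additive costs over divisible chores, every allocation $\dalloc^*$ that minimizes the normalized $p$-mean of disutilities is simultaneously $\fPO$ and $n^{1/p}$-$\PROP$. The $\fPO$ conclusion requires no further argument; it transfers verbatim to any choice of $p$, including $p \in \Omega(\log n)$. What remains is purely to observe that the approximation factor $n^{1/p}$ collapses to a universal constant once $p$ grows at least logarithmically in $n$.

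For this, suppose $p \in \Omega(\log n)$, i.e., there is a constant $c > 0$ with $p \geq c \log n$ (which, by shrinking $c$ if necessary, we may assume holds for all $n \geq 2$). Then $n^{1/p} \leq n^{1/(c \log n)}$, and since $n^{1/\log n}$ is a fixed constant — equal to $e$ if $\log$ denotes the natural logarithm, or $2$ if $\log$ is base $2$ — we get $n^{1/p} \leq e^{1/c}$ (resp.\ $2^{1/c}$), a quantity that depends only on the growth rate $c$ and not on $n$, the cost values, or the number of items $m$. Combining with Theorem~\ref{theorem:div-chore-prop}, every minimizer of the normalized $p$-mean of disutilities is then $\fPO$ and $e^{1/c}$-$\PROP$, i.e., a constant-factor approximation of proportionality. (For a concrete clean statement one can simply take $p = \lceil \ln n \rceil$, for which $n^{1/p} \leq e$.)

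There is essentially no technical obstacle; the only thing to be careful about is matching the base of the logarithm implicit in the $\Omega(\cdot)$ notation to the resulting constant, and stressing that this constant is genuinely instance-independent (it is a function of the asymptotic rate at which $p$ is chosen to grow, nothing else). It is also worth remarking that this route is essentially tight: Theorem~\ref{theorem:chores-div-neg-many-agents} shows the $n^{1/p}$ bound of Theorem~\ref{theorem:div-chore-prop} cannot be improved beyond a $1 - \Theta(\tfrac{\log p}{p})$ factor, so a sub-logarithmic choice of $p$ cannot bring the guarantee down to a constant.
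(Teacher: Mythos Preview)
Your proof is correct and follows exactly the paper's approach: invoke Theorem~\ref{theorem:div-chore-prop} for the $\fPO$ and $n^{1/p}$-$\PROP$ guarantees, then note that $n^{1/\log n} = \Theta(1)$ so that $p \in \Omega(\log n)$ makes $n^{1/p}$ a constant. The additional remarks on the logarithm base and on tightness via Theorem~\ref{theorem:chores-div-neg-many-agents} are accurate but not needed for the corollary itself.
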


The proportionality guarantee in~\Cref{theorem:div-chore-prop} was established by a rather simple argument. Therefore, one would perhaps expect that it is possible to improve our analysis and get a better approximation to proportionality. Surprisingly, we prove that our guarantee is essentially tight, i.e., there are instances where all the allocations that minimize the normalized $p$-mean of disutilities cannot be $\beta$-$\PROP$ for $\beta = n^{1/p}(1-\omega(\frac{\log{p}}{p}))$.

\begin{theorem} \label{theorem:chores-div-neg-many-agents}
For all $p \in \mathbb{R}$, there exists $n \geq 2$ and an instance with $n$ agents with additive costs over divisible chores, such that every allocation that minimizes the normalized $p$-mean of disutilities is not $\PROP$. 
Additionally, for $p>1$, there exist instances with $n>p$ agents where every $p$-mean minimizing allocation is not $\beta$-$\PROP$, for $\beta <  (n^{1/p}\cdot f(n,p))$, where the function $f(n,p)$ satisfies $\lim\limits_{n \rightarrow \infty} f(n,p) = 1 - \Theta(\frac{\log{p}}{p})$.
\end{theorem}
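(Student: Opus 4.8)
The plan is to prove the two halves separately. For the first half (no \emph{exact} $\PROP$ for any $p\in\mathbb{R}$), I would split into three regimes. For $p<1$ with $p\neq 1$ — in particular all $p\le 0$ — a single chore split between two agents of equal cost already works: there the normalized disutilities of any allocation are $(\alpha,1-\alpha)$, and $\alpha\mapsto \mean_p(\alpha,1-\alpha)$ is minimized only at $\alpha\in\{0,1\}$ (for $0<p<1$ because $\alpha^p+(1-\alpha)^p$ is concave on $[0,1]$; for $p\le 0$ because the $p$-mean vanishes whenever a coordinate is $0$ and is strictly positive in the interior), and at either endpoint one agent receives the whole chore, i.e.\ twice her proportional share. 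For $p=1$ the $\mean_1$-minimizer simply assigns each chore to an agent of minimum normalized cost, so it suffices to exhibit a small generic instance (e.g.\ two agents and three chores, one agent uniform and the other having a single very cheap chore) where this unique assignment hands some agent strictly more than her fair share. For $p>1$ the first half is an immediate consequence of the second half, taking $n$ large enough that $n^{1/p}f(n,p)>1$.

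For the second half I would construct, for each $p>1$ and each sufficiently large $n>p$, an instance whose $p$-mean minimizer is forced to concentrate a fixed amount of ``hard'' normalized cost onto a set of about $p$ agents. Concretely, partition the agents into roughly $p$ ``absorber'' agents, for whom a block of hard chores has only moderate normalized cost, and roughly $n-p$ ``saturated'' agents, each tied to a private chore that fills up most of its own total cost so that handing such an agent any hard chore is normalized-expensive. The free parameters — the cost of a hard chore relative to the agents' totals, and the size of the hard block — are tuned so that (i) the equal split is \emph{not} a minimizer (evenly spreading the hard block over the $\approx p$ absorbers is strictly better), and (ii), the crux, every allocation whose $p$-mean is at most that of this spread allocation must leave all $\approx n-p$ saturated agents at near-zero normalized disutility, hence must place $\approx \tfrac1n(n/p)^{1/p}$ normalized disutility on some absorber, which strictly exceeds $1/n$. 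Evaluating $\mean_p$ on the resulting profile — about $p$ coordinates equal to $\tfrac1n(n/p)^{1/p}$ and the rest near $0$ — gives value $\approx 1/n$, which is consistent with the upper bound of Theorem~\ref{theorem:div-chore-prop}; and the proportionality factor of this minimizer is $n^{1/p}\cdot p^{-1/p}(1+o(1)) = n^{1/p}\big(1-\Theta(\tfrac{\log p}{p})\big)$, since $p^{-1/p}=1-\tfrac{\log p}{p}+O\big((\tfrac{\log p}{p})^2\big)$. This identifies $f(n,p)$ with a quantity tending to $p^{-1/p}$.

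The main obstacle is step (ii): showing the $p$-mean minimizer is genuinely forced into this near-concentrated configuration. Since $\mean_p$ is convex for $p\ge 1$ it ``wants'' to balance normalized disutilities, and any residual comparative advantage among agents is an escape valve that tends to keep everyone below her fair share, so the instance must be engineered to kill such advantages for the hard block while keeping all costs finite and comparable (large costs inflate the absorbers' proportional shares and destroy the violation; costs that are too cheap let the $p$-mean collapse to $0$). I would handle this via the market/KKT characterization developed in the proof of Theorem~\ref{theorem:div-chore-prop}: exhibit the (essentially unique) market equilibrium — rewards and earnings — supporting the claimed allocation, and use it to rule out any allocation of strictly smaller $p$-mean; the lower bound on the proportionality factor would then follow from an explicit $\mean_p$ estimate over the feasible region, and it is precisely the slack needed to make the concentrated allocation \emph{strictly} optimal (rather than merely tied with the equal split) that produces the $\Theta(\tfrac{\log p}{p})$ loss.
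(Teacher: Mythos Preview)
Your treatment of the first half (no exact $\PROP$ for $p\le 1$) is fine and essentially the same as the paper's: identical chores for $p<1$, and for $p=1$ a small asymmetric instance where the unique cost-minimizing assignment overloads one agent. The substantive divergence is in the second half. The paper's construction is far simpler than what you outline: only \emph{two} chores and a \emph{single} distinguished agent. Agents $2,\dots,n$ are identical with $c_{j1}=0$, $c_{j2}=1$, so chore~1 goes to them for free and, by symmetry, they share chore~2 equally. The problem collapses to a one-variable calculus exercise in $x$, the fraction of chore~2 given to agent~1, whose costs are chosen as $c_{12}=\big(\tfrac{p-1}{n-1}\big)^{(p-1)/p}$ and $c_{11}=1-c_{12}$; differentiating $\big(x\,c_{12}\big)^p+(n-1)\big(\tfrac{1-x}{n-1}\big)^p$ gives $x=1/p$ directly, and $\tilde c_1(\dalloc_1)=\tfrac{1}{p}\big(\tfrac{p-1}{n-1}\big)^{(p-1)/p}$ yields $f(n,p)=\tfrac{n}{n-1}\cdot\tfrac{p-1}{p}\cdot\big(\tfrac{n-1}{n(p-1)}\big)^{1/p}\to \tfrac{(p-1)^{1-1/p}}{p}=1-\Theta\!\big(\tfrac{\log p}{p}\big)$. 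No KKT or market argument is needed.

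Your route---$\approx p$ absorbers, $\approx n-p$ saturated agents with private chores, and a block of hard chores---aims at the right target ($n^{1/p}p^{-1/p}$ matches the paper's limit up to lower-order terms), but as written it has a real gap precisely where you flag it. You never specify the costs, and the informal picture is delicate: for the hard chores to be normalized-cheap for absorbers but normalized-expensive for saturated agents, the two groups' \emph{totals} $c_i(\items)$ must differ in a way that is hard to reconcile with ``saturated agents sit at near-zero normalized disutility'' (their private chores must then be tiny relative to their totals, which forces the hard block to dominate their totals too, equalizing normalized hard-chore costs across groups). Invoking the market/KKT characterization does not by itself resolve this; you would still need to exhibit the equilibrium and verify it, which is exactly the step you defer. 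The paper sidesteps all of this by making the ``free'' part literally free ($c_{j1}=0$) and reducing to a single scalar optimization---I'd recommend you try that simplification before pursuing the multi-block construction.
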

\begin{proof}   
    We consider three cases, based on the value of $p$.

    \paragraph{Case 1. $p < 1$.} Consider a chore division instance with two agents and $m$ chores such that $c_{ij} = 1/m$ for all $i,j$. Note that for all $p<1$, the normalized $p$-mean of disutilities is uniquely minimized when all the chores are assigned to one of the agents, i.e., either $c_1(\dalloc_1) = 1$ and $c_2(\dalloc_2) = 0$ or $c_1(\dalloc_1) = 0$ and $c_2(\dalloc_2) = 1$. Clearly, in both these allocations, $\dalloc$ does not satisfy $\PROP$.
    
    
    \paragraph{Case 2. $p = 1$.} Consider a chore division instance with two agents and two chores such that $c_{11} < c_{21} < 1/2$, $c_{12} > c_{22} > 1/2$, and $ \sum_{j=1}^2 c_{ij} = 1$ for every agent $i$. Indeed, any fractional allocation minimizing the normalized $1$-mean of disutilities must allocate chore $1$ entirely to agent $1$ and chore $2$ to agent $2$. This allocation is not proportional for agent $2$ since $c_{22} > v(\items)/2 = 1/2$.

    \paragraph{Case 3. $p > 1$.} We now show the second part of the theorem, i.e. that if $p > 1$, there exist instances with $n>p$ agents where every $p$-mean minimizing allocation is not $\beta$-$\PROP$ for $\beta < (n^{1/p}\cdot f(n,p))$.
    Consider the following instance with $m=2$ chores: the costs of agent $1$ are $c_{11} = 1 - \left( \frac{p-1}{n-1} \right)^{\frac{p-1}{p}}$ and $c_{12} = \left( \frac{p-1}{n-1} \right)^{\frac{p-1}{p}}$. The remaining $n-1$ agents are identical; agent $j \in \agents \setminus \{1\}$ has costs $c_{j1} = 0$ and $c_{j2} = 1$. Note that, since $n > p$, $c_{11} = 1 - \left( \frac{p-1}{n-1} \right)^{\frac{p-1}{p}} \geq 0$ (i.e. the costs are non-negative and the instance is valid).

    Let $\dalloc = (\dalloc_1, \ldots, \dalloc_n)$ be an allocation that minimizes the $p$-mean of normalized disutilities. First, note that chore $1$ must be fully allocated among agents in $\agents \setminus \{1\}$, since they all have $0$ cost for it. Additionally, in $\dalloc$ all agents $j \in \agents \setminus \{1\}$ will receive an equal amount of chore $2$, i.e., $x_{j2} = x_{k2}$ for all $j,k \in \agents \setminus \{1\}$. Otherwise if $x_{j2} > x_{k2}$, then transferring a small amount of chore $2$ from $j$ to $k$ will decrease the normalized $p$-mean of disutilities since $\frac{d}{d \epsilon} (x_{j2} - \epsilon)^p + (x_{k2} + \epsilon)^p < 0$ if $p>1$.
     Therefore, without loss of generality we can assume $x_{12} = x$ and $x_{j2} = \frac{1-x}{n-1}$ for all $j \in \agents \{1\}$. This implies that $\tilde{c}_1(\dalloc_1) = x \left( \frac{p-1}{n-1} \right)^{\frac{p-1}{p}}$ and $\tilde{c}_j(\dalloc_j) = \frac{1-x}{n-1}$ for all $j \in \agents \setminus \{1\}$. Since $\dalloc$ minimizes the $p$-mean of normalized disutilities, we can find $x$ by computing $\frac{d}{dx} \sum_{i=1}^n \tilde{c}_i(\dalloc_i)^p$ and setting it to zero. Note that,
    
    \begin{align*}
        \frac{d}{dx} \left( \sum_{i=1}^n \tilde{c}_i(\dalloc_i)^p \right) & = px^{p-1} \left( \frac{p-1}{n-1} \right)^{p-1} - (n-1) \cdot \frac{p(1-x)^{p-1}}{(n-1)^{p}}\\
        & = \frac{p}{(n-1)^{p-1}} \left(x^{p-1} (p-1)^{p-1} - (1-x)^{p-1} \right).
    \end{align*}
    Setting this to zero, gives us $x(p-1) - (1-x) = 0$ or $x = 1/p$. Furthermore, the derivative is negative for $x < 1/p$ and positive for $x > 1/p$, and thus $x=1/p$ corresponds to the minimum value of the $p$-mean. Thus, the cost of agent $1$ in allocation $\dalloc$ is $\tilde{c}_1(\dalloc_1) = \frac{1}{p}\left( \frac{p-1}{n-1} \right)^{\frac{p-1}{p}} = \frac{n^{1/p}}{n} \cdot f(n,p)$ where $f(n,p) = \frac{n}{n-1} \cdot \frac{p-1}{p} \cdot \left( \frac{n-1}{n(p-1)} \right)^{1/p}$. 
    First, $\tilde{c}_1(\dalloc_1) > 1/n$, i.e. $\dalloc$ is not $\PROP$: this is equivalent to proving that $\frac{n}{(n-1)^{1-1/p}} > \frac{p}{(p-1)^{1-1/p}}$, which follows from the fact that $(i)$ the function $g(y) = \frac{y}{(y-1)^{1-1/p}}$ is increasing for $y \geq p$ and $(ii)$ $n>p$, and therefore $g(n) > g(p)$ or $\frac{n}{(n-1)^{1-1/p}} > \frac{p}{(p-1)^{1-1/p}}$.
    Second, for any fixed $p > 1$, the limit $\lim\limits_{n \rightarrow \infty} f(n,p) = \frac{(p-1)^{1-1/p}}{p} = 1 - \frac{p-(p-1)^{1-1/p}}{p} = 1 - \Theta(\frac{\log{p}}{p})$, where the value $\frac{(p-1)^{1-1/p}}{p} \geq 0.5$ for all $p>1$, and approaches $1$ as $p$ becomes larger. Thus, every normalized $p$-mean maximizing allocation is exactly $n^{1/p} \left(1 - \Theta(\frac{\log{p}}{p})\right)$-$\PROP$ in this instance.   
\end{proof}

\Cref{theorem:chores-div-neg-many-agents} implies that, for any fixed value of $p > 1$, as $n$ becomes larger, all allocations that minimize the normalized $p$-mean of disutilities cannot be $\beta$-$\PROP$ for $\beta = n^{1/p}(1-\omega(\frac{\log{p}}{p}))$, i.e., the bound in \Cref{theorem:div-chore-prop} is tight. Note that, for the case of $n=2$ agents and $p \geq 2$, allocations that minimize the normalized $p$-mean of disutilities are $\PROP$, as a corollary to~\Cref{theorem:chore-EF1-result} and~\Cref{lemma:div-to-indiv}; therefore, the ``$n > p$'' part of~\Cref{theorem:chores-div-neg-many-agents} also cannot be significantly improved upon either.

\subsection{Indivisible Chores: Rounding Divisible Allocations}\label{sec:chores rounding}

Here, we prove that the market interpretation of $p$-mean optimal solutions for divisible chores (\Cref{theorem:div-chore-prop}) can be leveraged to give new $\fPO$ and $n^{1/p}$-$\PROPone$ algorithms for indivisible chores, for all $p \geq 1$. We use the following result of Br{\^a}nzei and Sandomirskiy~\cite{branzei2023algorithms}.

\begin{theorem}[Theorem 7.2; \cite{branzei2023algorithms}, modification of~\cite{barman2019proximity}]\label{theorem:rounding-for-chores}
Given any market equilibrium for divisible chores $(\dalloc, \rewards, \earnings)$, we can, in polynomial time, compute a market equilibrium $(\ialloc, \rewards, \earnings')$ where $\ialloc$ is an integral allocation such that if $e'_i > e_i$, then there exists a chore $j \in \ialloc_i$ such that $e'_i - r_j \leq e_i$.
\end{theorem}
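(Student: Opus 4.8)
The plan is to recognize this as the chores counterpart of the Barman--Krishnamurthy rounding theorem (\Cref{theorem:rounding}); in fact the stated property is exactly the chores analogue of condition (3) there, so the approach is to port that argument. Throughout, keep the reward vector $\rewards$ frozen. Then the relevant combinatorial object is fixed: the bipartite ``MBB graph'' $G^*$ on $\agents \cup \items$ with an edge between agent $i$ and chore $j$ iff $j$ maximizes the earning-to-disutility ratio $r_j/c_{ij}$ for $i$. Any allocation whose support lies inside $G^*$ and that fully assigns every chore is automatically a market equilibrium for chores with rewards $\rewards$ (market clearing is immediate, the MBB condition holds because we only ever use $G^*$-edges, and the earnings vector is then just $e_i = \sum_{j} r_j x_{ij}$). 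The input equilibrium $(\dalloc,\rewards,\earnings)$ is supported in $G^*$, so the whole task is to reach an \emph{integral} allocation still supported in $G^*$ while controlling how earnings can grow.

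Step one is to make the support acyclic. While the support of the current allocation contains a cycle in $G^*$, alternately raise and lower the fractions along the cycle by a common amount; since the cycle alternates between agents and chores, every chore's total fraction is preserved (the market still clears), and every edge used is still an MBB edge (still an equilibrium). Pushing until some edge's fraction hits $0$ strictly decreases the number of support edges, so after finitely many cancellations the support is a forest. Step two is the rounding on that forest: root each tree at an agent and eliminate leaves bottom-up. A leaf that is a chore already has a single owner, so it is integrally allocated and removed for free; a leaf that is an agent $\ell$ does a proper fraction $x_{\ell j}\in(0,1)$ of its unique incident chore $j$, and we either hand all of $j$ to $\ell$ --- which only increases $\ell$'s earnings, and since $\ell$ already did part of $j$ this leaves $e'_\ell - r_j \le e_\ell$ with $j\in\ialloc_\ell$ --- or strip $j$ from $\ell$ and push its leftover fraction onto another neighbor of $j$. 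As in \cite{barman2019proximity}, the choice is dictated by whether handing $\ell$ the whole chore would push its earnings past its original level. Either way the leaf is removed, the ``equilibrium supported in $G^*$'' invariant is maintained, and we iterate until every chore is integrally allocated, producing $(\ialloc, \rewards, \earnings')$.

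Finally, verify the earnings inequality: an agent's earnings rise only when it is handed an entire chore it was already partly doing, and the procedure is arranged (exactly as for goods) so that, relative to the original allocation, at most one such surplus is charged to any agent and the responsible chore survives in its final bundle, giving $e'_i - r_j \le e_i$ for some $j\in\ialloc_i$ whenever $e'_i > e_i$. The main obstacle is precisely this bookkeeping: one must re-run the Barman--Krishnamurthy invariants --- market clearing under fixed rewards, preservation of the cost-minimizing MBB condition, and the one-chore surplus bound --- with rewards in place of prices and chores' minimum-pain-per-buck condition in place of goods' maximum-bang-per-buck condition, and the cycle-cancellation phase must be interleaved with leaf elimination carefully enough that no agent ends up accumulating surplus attributable to more than one chore.
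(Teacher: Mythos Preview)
The paper does not prove this statement: it is quoted verbatim as Theorem~7.2 of \cite{branzei2023algorithms} (itself an adaptation of \cite{barman2019proximity}) and then used as a black box inside the proof of \Cref{theorem:rounding-chores}. So there is no ``paper's own proof'' to compare against; the intended content here is just the citation.

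Your sketch is a faithful high-level reconstruction of how the cited argument goes: freeze $\rewards$, work inside the maximum-reward-per-cost bipartite graph, cancel cycles to get a forest, then round leaves. That said, as a standalone proof it has a real gap at exactly the point you flag yourself. The assertion that ``at most one such surplus is charged to any agent and the responsible chore survives in its final bundle'' is the entire content of the theorem's conclusion, and you do not establish it --- you only say the procedure ``is arranged (exactly as for goods)'' to make it true. In the actual argument this requires a careful choice of which way to round each leaf (based on comparing the leaf agent's current earnings against its original $e_i$) together with an invariant maintained across iterations, and one also has to check that later cycle cancellations or later leaf removals cannot retroactively violate the bound for an agent already processed. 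Your cycle-cancellation step, as written, can change earnings of non-leaf agents, so without the precise order and invariant it is not clear the one-chore bound survives. If you want a self-contained proof rather than a citation, that bookkeeping must be spelled out; otherwise, matching the paper, a citation to \cite{branzei2023algorithms} suffices.
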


The proof of the following theorem is similar to the proof of~\Cref{theorem:rounding-goods}.

\begin{theorem}\label{theorem:rounding-chores}
For $n \geq 2$ agents with additive costs over indivisible chores, and all $p \geq 1$,~\Cref{algo:chores} outputs an $\fPO$ and $n^{1/p}$-$\PROPone$ allocation.
\end{theorem}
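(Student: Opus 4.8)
The plan is to mirror the proof of~\Cref{theorem:rounding-goods}, replacing the goods market machinery with its chores analogue. By~\Cref{theorem:div-chore-prop}, the fractional allocation $\dalloc$ produced in the first step of~\Cref{algo:chores} is $n^{1/p}$-$\PROP$ and $\fPO$, and is supported by a divisible-chores market equilibrium $(\dalloc, \rewards, \earnings)$. Feeding this equilibrium into~\Cref{theorem:rounding-for-chores} yields an integral allocation $\ialloc$ and earnings $\earnings'$ such that $(\ialloc, \rewards, \earnings')$ is again a market equilibrium and, whenever $e'_i > e_i$, there is a chore $j \in \ialloc_i$ with $e'_i - r_j \le e_i$. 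Pareto efficiency of $\ialloc$ is then immediate from the First Welfare Theorem for chores (the analogue of~\Cref{theorem:first-welfare}).

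For the fairness guarantee I would use the defining property of a chores market equilibrium that each agent is only assigned minimum-pain-per-buck chores, so that, writing $\mathrm{MPB}_i = \min_j c_{ij}/r_j$ and $e_i = \sum_j x_{ij} r_j$, we have $c_i(\dalloc_i) = \mathrm{MPB}_i \cdot e_i$ and $c_i(\ialloc_i) = \mathrm{MPB}_i \cdot e'_i$; moreover every chore in $\ialloc_i$ attains the ratio $\mathrm{MPB}_i$. Now split into two exhaustive cases. If $e'_i \le e_i$, then $c_i(\ialloc_i) = \mathrm{MPB}_i e'_i \le \mathrm{MPB}_i e_i = c_i(\dalloc_i) \le n^{1/p}\, c_i(\items)/n$, where the last inequality is because $\dalloc$ is $n^{1/p}$-$\PROP$; taking $S = \emptyset$ witnesses the $n^{1/p}$-$\PROPone$ condition for agent $i$. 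If $e'_i > e_i$, pick the chore $j \in \ialloc_i$ guaranteed by~\Cref{theorem:rounding-for-chores}; since $j$ is a minimum-pain-per-buck chore for $i$, $c_i(\ialloc_i \setminus \{j\}) = \mathrm{MPB}_i(e'_i - r_j) \le \mathrm{MPB}_i e_i = c_i(\dalloc_i) \le n^{1/p}\, c_i(\items)/n$, so $S = \{j\}$ witnesses the condition. Hence $\ialloc$ is $n^{1/p}$-$\PROPone$, and also $\fPO$.

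The only points requiring a little care are: (i) justifying the identities $c_i(\dalloc_i) = \mathrm{MPB}_i e_i$ and $c_i(\ialloc_i) = \mathrm{MPB}_i e'_i$, which follow from the fact that in a chores market equilibrium an agent's bundle is contained in her minimum-pain-per-buck set and her budget (required earning) is met exactly --- a property that~\Cref{theorem:rounding-for-chores} explicitly preserves; and (ii) agents with $\ialloc_i = \emptyset$, for whom the $\PROPone$ requirement is vacuous since $c_i(\emptyset) = 0$ (formally, take $S = \emptyset$). Neither is a genuine obstacle: the substantive work --- the convex-program/KKT analysis giving the market interpretation and the $n^{1/p}$-$\PROP$ bound --- is already contained in~\Cref{theorem:div-chore-prop}, and the rounding is a black box from~\Cref{theorem:rounding-for-chores}. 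If anything, the mild subtlety is simply remembering that for chores the dangerous direction is an \emph{increase} in earnings (more chores, hence more disutility), which is exactly the case that~\Cref{theorem:rounding-for-chores} controls.
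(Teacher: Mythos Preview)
Your proposal is correct and follows essentially the same argument as the paper's proof: invoke \Cref{theorem:div-chore-prop} for the market equilibrium and the $n^{1/p}$-$\PROP$ bound, round via \Cref{theorem:rounding-for-chores}, deduce $\fPO$ from the First Welfare Theorem, and split into the cases $e'_i \le e_i$ and $e'_i > e_i$. The only cosmetic difference is that the paper uses $\MRC_i = \max_j r_j/c_{ij}$ (so $c_i(\ialloc_i) = e'_i/\MRC_i$) rather than your reciprocal ``minimum-pain-per-buck'' ratio, and in Case~2 the paper writes $c_i(\ialloc_i) \le n^{1/p} c_i(\items)/n + c_{ij}$ instead of your equivalent $c_i(\ialloc_i \setminus \{j\}) \le n^{1/p} c_i(\items)/n$.
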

\begin{proof}   
Let $\dalloc$ be a fractional allocation that minimizes the $p$-mean of normalized disutilities, and let $(\dalloc,\rewards, \earnings)$ be the corresponding market equilibrium guaranteed to exist by~\Cref{theorem:div-chore-prop}. Let $\ialloc$ and $\earnings'$ be the integral allocation and budgets guaranteed in~\Cref{theorem:rounding-for-chores}. First, since $(\ialloc, \rewards, \earnings')$ is a market equilibrium, $\ialloc$ is an $\fPO$ allocation (\Cref{theorem:first-welfare}). Towards showing that $\ialloc$ is $\alpha$-$\PROPone$, consider the following two cases:

    \paragraph{Case 1.} Agents $i \in \agents$ for which $e'_i \leq e_i$. We can upper bound the costs of such agents as,
    \begin{align*}
        c_i(\ialloc_i) & = \frac{e'_i}{\MRC_i} \tag{$(\ialloc, \rewards, \earnings')$ is a market equilibrium}\\
        & \leq \frac{e_i}{\MRC_i} \tag{$e'_i \leq e_i$} \\
        & = c_i(\dalloc_i) \tag{$(\dalloc, \rewards, \earnings)$ is a market equilibrium}\\
        & \leq \frac{n^{1/p} \, c_i(\items)}{n}. \tag{$\dalloc$ is $n^{1/p}$-$\PROP$}
    \end{align*}
    \paragraph{Case 2.} Agents $i \in \agents$ for which $e'_i > e_i$. Here, we will use the fact that there exists a chore $j \in \ialloc_i$ such that $e'_i - r_j \leq e_i$ (\Cref{theorem:rounding-for-chores}). Using this we get,

    \begin{align*}
        c_i(\ialloc_i) & = \frac{e'_i}{\MRC_i}  \tag{$(\ialloc, \rewards, \earnings)$ is a market equilibrium}\\
        & \leq \frac{(e_i + r_j) }{\MRC_i} \tag{$e'_i -r_j \leq e_i$}\\
        & = c_i(\dalloc_i) + c_{ij} \tag{$c_i(\dalloc_i) = e_i/ \MRC_i$ and $c_{ij} = r_j/ \MRC_i$}\\
        & \leq \frac{n^{1/p} \, c_i(\items)}{n} + c_{ij}. \tag{$\dalloc$ is $n^{1/p}$-$\PROP$}
     \end{align*}
     Hence, $n^{1/p}$-$\PROPone$.  
\end{proof}

\begin{algorithm}[t]
\textbf{Input:} Instance with $m$ indivisible chores, $n$ agents with costs $\{c_i\}_i$, and parameter $p \geq 1$.\\
\textbf{Output:} $\fPO$ and $n^{1/p}$-$\PROPone$ chore allocation $\ialloc = (\ialloc_1, \ialloc_2, \ldots, \ialloc_n)$.
\begin{itemize}
    \item Let $\dalloc$ be a fractional allocation that minimizes the $p$-mean of normalized disutilities, and let $(\dalloc,\rewards,\earnings)$ be the corresponding market equilibrium guaranteed to exist by~\Cref{theorem:div-chore-prop}.
    \item Let $\ialloc$ be the allocation obtained by rounding the market equilibrium $(\dalloc,\rewards, \earnings)$ using the algorithm of~\Cref{theorem:rounding-for-chores}.
    \item \textbf{return } $\ialloc = \left( \ialloc_1, \ialloc_2, \ldots, \ialloc_n \right)$
  \end{itemize}
  \caption{$\fPO$ and $\PROPone$ for indivisible chores} \label{algo:chores}
\end{algorithm}

\subsection{Indivisible Chores: Optimizing $p$-means}\label{sec:indivisible chores}

In this section, we consider directly optimizing the normalized $p$-mean of disutilities for indivisible chores. We start by proving two technical lemmas, which are crucially used in the proof of~\Cref{theorem:chore-EF1-result}. 

\begin{lemma} \label{lemma:indiv-chores-squeeze}
    Given any non-negative real numbers $a,b,\alpha, \beta > 0$, if $\max \{a, b\} \geq \max\{\alpha, \beta\}$ and $a^2 + b^2 > \alpha^2 + \beta^2$, then, for all $p > 2$, it holds that $a^p + b^p > \alpha^p + \beta^p$.
\end{lemma}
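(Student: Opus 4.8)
The plan is to first reduce the claim to a cleaner statement with exponent larger than $1$, via the substitution $x = a^2,\ y = b^2,\ u = \alpha^2,\ v = \beta^2$. Since $t \mapsto t^2$ is increasing on $[0,\infty)$, the hypotheses $\max\{a,b\}\ge\max\{\alpha,\beta\}$ and $a^2+b^2>\alpha^2+\beta^2$ become $\max\{x,y\}\ge\max\{u,v\}$ and $x+y>u+v$; and writing $s = p/2 > 1$, the desired conclusion $a^p+b^p>\alpha^p+\beta^p$ is exactly $x^s+y^s>u^s+v^s$. So it suffices to show: for positive reals $x,y,u,v$ with $\max\{x,y\}\ge\max\{u,v\}$ and $x+y>u+v$, one has $x^s+y^s>u^s+v^s$ for every $s>1$.

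Next I would relabel so that $x\ge y$ and $u\ge v$; then $\max\{x,y\}\ge\max\{u,v\}$ says precisely $x\ge u$, and I would split into two cases. If $y\ge v$, then $x^s\ge u^s$ and $y^s\ge v^s$ termwise, and since $x+y>u+v$ at least one of $x>u$, $y>v$ holds, so the sum is strict — done. The substantive case is $y<v$: combined with $x+y>u+v$ this forces $x>u$ (if $x=u$ then $y>v$, a contradiction), hence $x>u\ge v>y$ and the key inequality $x-u>v-y>0$.

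In that case I would compare the two ``power gaps'' against the common reference height $v^{s-1}$. Using $x^s-u^s=\int_u^x s\,t^{s-1}\,dt$ and $v^s-y^s=\int_y^v s\,t^{s-1}\,dt$ together with $s-1>0$: on $[u,x]$ every $t\ge u\ge v$, so $t^{s-1}\ge v^{s-1}$ and $x^s-u^s\ge s\,v^{s-1}(x-u)$; on $[y,v]$ every $t\le v$, so $t^{s-1}\le v^{s-1}$ and $v^s-y^s\le s\,v^{s-1}(v-y)$. Since $x-u>v-y$, we get $x^s-u^s\ge s\,v^{s-1}(x-u)>s\,v^{s-1}(v-y)\ge v^s-y^s$, i.e. $x^s+y^s>u^s+v^s$, which unwinds to the statement. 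The only real idea — and hence the ``main obstacle'' — is to realize that one should bound both gaps against the single height $v^{s-1}$ rather than trying to compare $x^s-u^s$ with $v^s-y^s$ directly; the hypothesis $\beta>0$ (so $v=\beta^2>0$) is exactly what makes $v^{s-1}$ well defined. Everything else is routine ordering and monotonicity bookkeeping.
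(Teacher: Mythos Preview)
Your proof is correct and takes a genuinely different (and cleaner) route than the paper's.

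The paper normalizes by $b$, setting $x=\beta/b$, $y=\alpha/\beta$, $z=a/b$, rewrites the target inequality as $z > f(p) \coloneqq (x^p + (xy)^p - 1)^{1/p}$, and then shows $f$ is non-increasing in $p$ for $p\ge 2$ via two rounds of differentiation: first reducing $f'(p)\le 0$ to an auxiliary inequality $g(p)\le 0$, then checking $g(0)=0$ and $g'(p)\le 0$. Your approach instead squares the variables to reduce to the statement ``$\max\{x,y\}\ge\max\{u,v\}$ and $x+y>u+v$ imply $x^s+y^s>u^s+v^s$ for $s>1$,'' and then proves this directly by bounding both power gaps $x^s-u^s$ and $v^s-y^s$ against the common tangent-slope height $s\,v^{s-1}$. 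This is essentially a two-variable strict majorization/Karamata argument carried out by hand, and it avoids all the intricate calculus of the paper's proof. What you gain is brevity and conceptual clarity; what the paper's approach buys is a monotonicity-in-$p$ statement that is slightly stronger in form (the whole function $f(p)$ is non-increasing), though that extra strength is never used. One minor remark: in your Case~2 you actually have $y=b^2>0$, so $v>y>0$ already forces $v>0$ without appealing separately to $\beta>0$.
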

\begin{proof}   
    Without loss of generality, suppose that $a \geq b$ and $\alpha \geq \beta$. The given inequality, $\max \{a, b\} \geq \max\{\alpha, \beta\}$, implies that $a \geq \alpha$. If $b > \beta$, then the desired inequality easily follows (since $b > \beta$ and $a \geq \alpha$ imply that $a^p + b^p > \alpha^p + \beta^p$ for all $p > 2$). Thus, we assume that $\beta \geq b$.


    Define $x \coloneqq \beta/b$, $y \coloneqq \alpha/\beta$, and $z \coloneqq a/b$, and notice that $x, y, z \geq 1$. Substituting these in the desired inequality and rearranging give us
    \begin{align*}
        a^p + b^p > \alpha^p + \beta^p
        &\Longleftrightarrow b^p\left(1+\left(\frac{a}{b}\right)^p\right) > \beta^p\left(1+\left(\frac{\alpha}{\beta}\right)^p\right)  \\
        &\Longleftrightarrow 1+\left(\frac{a}{b}\right)^p > \left(\frac{\beta}{b}\right)^p\left(1+\left(\frac{\alpha}{\beta}\right)^p\right) \\
        &\Longleftrightarrow z > \left(x^p + (xy)^p - 1\right)^{\frac{1}{p}}.
    \end{align*}

    Define $f(p) \coloneqq \left(x^p + (xy)^p - 1\right)^{\frac{1}{p}}$.
    By rearranging the given inequality $a^2 + b^2 > \alpha^2 + \beta^2$, we have that $z > f(2)$, and our goal is to show that $z>f(p)$ for all $p>2$. To this end, it is sufficient to show that $f(p)$ is a non-increasing function of $p$, or equivalently that $f'(p) \leq 0$, for $p \geq 2$. 
    
    For notational convenience, let $u \coloneqq x$ and $v \coloneqq xy$, noting that $u,v \geq 1$. $f'(p)$ can then be expressed as
    \begin{align*}
        f'(p) &= (u^p + v^p - 1)^{\frac{1}{p}}\left(\frac{u^p \ln u + v^p \ln v}{p(u^p + v^p - 1)} - \frac{\ln (u^p + v^p - 1)}{p^2}\right) \\
        &= \frac{f(p)}{p^2(u^p + v^p - 1)}\left(p(u^p \ln u + v^p \ln v) - (u^p + v^p - 1) \ln (u^p + v^p - 1)\right). 
    \end{align*}
    Since $u, v \geq 1$ and $p \geq 2$, it holds that $u^p+v^p-1 \geq 0$, and therefore $f(p) \geq 0$ and $p^2(u^p + v^p - 1) \geq 0$. Thus, to show that $f'(p) \leq 0$ for all $p\geq 2$, it suffices to show that $p(u^p \ln u + v^p \ln v) - (u^p + v^p - 1) \allowbreak \ln (u^p + \allowbreak v^p - 1) \leq 0$, or, equivalently, that $g(p)\leq 0$ for all $p \geq 2$, where $g(p) \coloneqq p(u^p \ln u + v^p \ln v) - (u^p + v^p - 1) \ln (u^p + v^p - 1)$.
    Since $g(0) = 0$, it suffices to show that $g'(p) \leq 0$ for all $p \geq 0$.
    \begin{align*}
        g'(p) &= p(u^p (\ln u)^2 + v^p (\ln v)^2) - (u^p \ln u + v^p \ln v)\ln(u^p + v^p - 1) \\
        &= u^p\ln u(p\ln u - \ln(u^p + v^p - 1)) + v^p\ln v(p\ln v - \ln(u^p + v^p - 1)).
    \end{align*}
Since $u , v \geq 1$, we have that $p\ln u \leq \ln(u^p + v^p - 1)$ and $p\ln v \leq \ln(u^p + v^p - 1)$, while $u^p\ln u, v^p\ln v \geq 0$, and therefore $g'(p) \leq 0$ for all $p \geq 0$, which concludes the proof.  
\end{proof}

The proof of the following lemma has been deferred to Appendix~\ref{app: missing from chores}.
\begin{lemma}\label{lemma:indiv-chores-algebra}
    If $a$, $\alpha$, and $\beta$ are real numbers such that $a < \alpha < 1$ and $\beta < 1-2\alpha$, then the following two inequalities hold: (1) $\left(a + \frac{1-a}{1-\alpha}\cdot \beta \right)^2 + (1 - \alpha - \beta)^2 < a^2 + (1-\alpha)^2$, and (2) $a + \frac{1-a}{1-\alpha}\cdot \beta < 1 - \alpha$.
\end{lemma}

We now prove our main positive result for this section: for the case of two agents, the integral allocation that minimizes the $p$-mean of normalized disutilities is $\EFone$ and $\PO$, for all $p \geq 2$.

\begin{theorem} \label{theorem:chore-EF1-result}
For $n=2$ agents with additive costs over indivisible chores, and all $p \geq 2$, every allocation that minimizes the normalized $p$-mean of disutilities is $\EFone$ and $\PO$.
\end{theorem}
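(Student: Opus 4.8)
The plan is to mirror the structure of the proof of \Cref{theorem:indiv-goods-EF1}, but using \Cref{lemma:indiv-chores-squeeze} (the ``squeeze'' lemma for $p\ge 2$) in place of \Cref{lemma:indiv_goods_algebra}, and \Cref{lemma:indiv-chores-algebra} to supply the clean algebraic inequalities comparing the two candidate allocations. Pareto optimality for two agents is immediate (with two agents, any allocation with a Pareto improvement would contradict cost-minimality of some agent while weakly helping the other, and in fact $\PO$ holds trivially as in the goods case). So fix $p\ge 2$, let $\ialloc=(\ialloc_1,\ialloc_2)$ be an allocation that minimizes the normalized $p$-mean of disutilities, and suppose for contradiction that $\ialloc$ is \emph{not} $\EFone$. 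I will exhibit an allocation $\ialloc'$ with strictly smaller $\sum_i \tilde c_i(\ialloc'_i)^p$, contradicting minimality.

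First I would normalize so that $\tilde c_i(\ialloc_i)+\tilde c_i(\ialloc_{-i}) = 1$ for each agent (this is exactly $\tilde c_i(\items)=1$), and identify which agent is ``overloaded.'' Since $\ialloc$ is not $\EFone$, some agent---say agent $1$---has $\tilde c_1(\ialloc_1\setminus\{t\}) > \tilde c_1(\ialloc_2)$ for every $t\in\ialloc_1$; in particular $\tilde c_1(\ialloc_1) > 1/2$, hence $\tilde c_1(\ialloc_2) < 1/2$. If $\tilde c_2(\ialloc_1)\le 1/2$ as well, swapping bundles weakly helps agent $2$ and strictly helps agent $1$ (both normalized disutilities move to $\le 1/2$ from the current $>1/2$ for agent $1$), which decreases every $\tilde c_i(\cdot)^p$ term---done. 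So assume $\tilde c_2(\ialloc_1) > 1/2$, i.e. agent $2$ also does not want agent $1$'s (entire) bundle. Now I want to move a single chore $g$ out of $\ialloc_1$. The natural choice, by analogy with the goods proof, is to pick $g\in\ialloc_1$ maximizing $\tilde c_2(g)/\tilde c_1(g)$ (the chore that is ``most costly to agent $2$ relative to agent $1$''), or more robustly the chore witnessing that transferring it from agent $1$ to agent $2$ strictly decreases the (normalized) product $\tilde c_1\cdot \tilde c_2$; such a chore exists because $\ialloc$ is not $\EFone$ (the analogue of the Nash-welfare fact used in \Cref{theorem:indiv-goods-EF1}, now with the roles of overloaded/underloaded agent swapped for chores). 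Then, exactly as in the goods proof, split into two cases according to whether $\tilde c_2(\ialloc_1\setminus\{g\})$ stays above $1/2$ or drops to $\le 1/2$, and in each case take $\ialloc'$ to be either $(\ialloc_1\setminus\{g\},\ \ialloc_2\cup\{g\})$ or its swap $(\ialloc_2\cup\{g\},\ \ialloc_1\setminus\{g\})$.

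In each case I would verify the two hypotheses of \Cref{lemma:indiv-chores-squeeze} for the pair $(a,b)=(\text{new normalized disutilities})$ versus $(\alpha,\beta)=(\tilde c_1(\ialloc_1),\tilde c_2(\ialloc_2))$: (i) $\max\{a,b\}\ge\max\{\alpha,\beta\}$ fails --- wait, I need the \emph{new} max to be \emph{smaller}, so I should apply the lemma in the contrapositive direction, i.e. I want $\max\{a,b\} \le \max\{\alpha,\beta\}$ and $a^2+b^2 < \alpha^2+\beta^2$ to conclude $a^p+b^p<\alpha^p+\beta^p$ for $p\ge 2$; re-examining \Cref{lemma:indiv-chores-squeeze}, it is stated with strict ``$\geq$''/``$>$''/``$>$'', so I would invoke it with the primed allocation playing the role of $(\alpha,\beta)$ and $\ialloc$ playing the role of $(a,b)$, concluding $\tilde c_1(\ialloc_1)^p + \tilde c_2(\ialloc_2)^p > \tilde c_1(\ialloc'_1)^p + \tilde c_2(\ialloc'_2)^p$ --- i.e. I must show (i) the larger of the two current normalized disutilities is $\ge$ the larger of the two new ones, and (ii) the sum of squares strictly decreases. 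Claim (ii) is precisely inequality (1) of \Cref{lemma:indiv-chores-algebra} (after substituting $a=\tilde c_1(\ialloc_2)$, $\alpha = \tilde c_2(\ialloc_1)$, $\beta = \tilde c_2(g)$, and using additivity plus $\tilde c_i(\items)=1$ to rewrite the post-transfer disutilities), and claim (i) reduces to inequality (2) of \Cref{lemma:indiv-chores-algebra} together with the case hypothesis and the bound $\tilde c_1(\ialloc_1)>1/2$. The condition ``$\beta < 1-2\alpha$'' needed by \Cref{lemma:indiv-chores-algebra} is where the not-$\EFone$ hypothesis gets used: $\tilde c_1(\ialloc_1\setminus\{g\}) > \tilde c_1(\ialloc_2)$ translates, via $\tilde c_1(\ialloc_1)+\tilde c_1(\ialloc_2)=1$, into a slack that dominates $\tilde c_2(g)$ after the appropriate rescaling. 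I also need to separately dispatch the degenerate case $\tilde c_1(\ialloc_1)$ or $\tilde c_2(\ialloc_2)$ equal to $1$ (equivalently some agent having zero cost for its whole share), where the $p$-mean is $0$ or the maximum and the algebra above is vacuous; there the argument is direct (an agent with disutility $1$ for its bundle is handled by moving any chore it owns that the other agent values less).

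The main obstacle I anticipate is \textbf{the choice of the transferred chore $g$ and the verification of hypothesis (i)} of \Cref{lemma:indiv-chores-squeeze} in the two cases --- in the goods proof this used a delicate chain showing $\tilde v_1(\ialloc_1)$ remained the minimum; for chores the dual statement is that the \emph{new} maximum normalized disutility does not exceed the \emph{old} maximum $\tilde c_1(\ialloc_1)$, and making this work simultaneously with the case split (whether $\tilde c_2(\ialloc_1\setminus\{g\})$ exceeds $1/2$) and with the ``$p=2$ is exactly the threshold'' phenomenon (cf. \Cref{theorem:indiv-chores-negative-wrt-p}) is the technically tight part. The sum-of-squares inequality (ii) is comparatively mechanical once \Cref{lemma:indiv-chores-algebra} is in hand, and the $p\ge 2$ exponent is used only through the squeeze lemma --- which is why the result genuinely requires $p\ge 2$ and not merely $p\ge 1$.
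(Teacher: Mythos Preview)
Your high-level strategy is correct --- transfer one chore from the overloaded agent, show that both the maximum normalized cost and the sum of squares strictly decrease, then invoke \Cref{lemma:indiv-chores-squeeze} --- but the concrete choices you commit to make \Cref{lemma:indiv-chores-algebra} inapplicable. With your labeling (agent $1$ overloaded) and your post-swap assumption $\tilde c_2(\ialloc_1)>1/2$, setting $\alpha=\tilde c_2(\ialloc_1)$ forces $1-2\alpha<0$, so the hypothesis $\beta<1-2\alpha$ fails for every $\beta\ge 0$. The correct substitution is $a=\tilde c_2(\ialloc_2)$, $\alpha=\tilde c_1(\ialloc_2)$, $\beta=\tilde c_1(g)$: then $1-\alpha=\tilde c_1(\ialloc_1)$, $1-\alpha-\beta=\tilde c_1(\ialloc_1\setminus\{g\})$, the condition $\beta<1-2\alpha$ becomes exactly the not-$\EFone$ inequality $\tilde c_1(\ialloc_2)<\tilde c_1(\ialloc_1\setminus\{g\})$, and the condition $a<\alpha$ becomes $\tilde c_2(\ialloc_2)<\tilde c_1(\ialloc_2)$. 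This last inequality is what the initial swap must secure, and it is \emph{not} implied by your criterion $\tilde c_2(\ialloc_1)\le 1/2$. The paper instead swaps whenever $\tilde c_2(\ialloc_2)>\tilde c_1(\ialloc_2)$ (in your labels), because then swapping strictly decreases \emph{both} agents' normalized costs and the $p$-mean drops for every $p$ with no further argument; your one-line justification ``every $\tilde c_i(\cdot)^p$ term decreases'' is false, since agent $2$'s cost can go up under your swap.

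Two further points. First, the transferred chore $g\in\ialloc_1$ should \emph{minimize} $\tilde c_2(g)/\tilde c_1(g)$, not maximize it: you want the chore cheapest for the receiver per unit of relief for the giver. This choice gives $\tilde c_2(g)/\tilde c_1(g)\le \tilde c_2(\ialloc_1)/\tilde c_1(\ialloc_1)$, which is precisely what lets you introduce a slack $\delta\ge 0$ with $\tilde c_2(\ialloc_2)+\tilde c_2(g)\le a+\tfrac{1-a}{1-\alpha}\beta$, so that \Cref{lemma:indiv-chores-algebra}(1) actually bounds the new sum of squares. Second, the case split in the chores proof is not the goods-style ``does the post-transfer cost cross $1/2$, and if so swap-after-transfer''; there is no swap-after-transfer branch. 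The split is simply strict versus equality in $\tilde c_2(\ialloc_2)\le\tilde c_1(\ialloc_2)$: the strict case uses \Cref{lemma:indiv-chores-algebra} as above, and the equality case (where $\tfrac{1-a}{1-\alpha}=1$) is a short direct computation.
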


\begin{proof}   
    Towards a contradiction, assume that $\ialloc = (\ialloc_1, \ialloc_2)$ is an allocation that minimizes the $p$-mean of agents' normalized disutilities such that $\ialloc$ is not $\EFone$. Without loss of generality assume that agent $2$ envies agent $1$ even after the removal of any one chore. We show that starting from $\ialloc$, we can construct another allocation $\ialloc'$ that has a smaller $p$-mean, contradicting the optimality of $\ialloc$.
    
    Consider the case when $\tilde{c}_1(\ialloc_1) > \tilde{c}_2(\ialloc_1)$. On rearranging we get $1-\tilde{c}_2(\ialloc_1) > 1 - \tilde{c}_1(\ialloc_1)$, or equivalently, $\tilde{c}_2(\ialloc_2) > \tilde{c}_1(\ialloc_2)$, since by normalization we have $\tilde{c}_i(\ialloc_1) + \tilde{c}_i(\ialloc_2) = 1$. Together, $\tilde{c}_1(\ialloc_1) > \tilde{c}_2(\ialloc_1)$ and $\tilde{c}_2(\ialloc_2) > \tilde{c}_1(\ialloc_2)$ imply that the $p$-mean of the allocation $\ialloc' = (\ialloc_2, \ialloc_1)$ obtained by swapping the agents' bundles will be lower than the $p$-mean of $\ialloc = (\ialloc_1, \ialloc_2)$, a contradiction. 
    
    Henceforth, we consider the case when $\tilde{c}_1(\ialloc_1) \leq \tilde{c}_2(\ialloc_1)$. We prove that the allocation $\ialloc' = (\ialloc_1 \cup \{t^*\}, \ialloc_2 \setminus \{t^*\})$ obtained by transferring a specific chore $t^* \in \ialloc_2$ from agent $2$ to agent $1$ has a strictly lower $p$-mean. Specifically, the chore $t^*$ is such that $t^* \in arg \min_{t \in \ialloc_2} \ \frac{\tilde{c}_1(t)}{\tilde{c}_2(t)}$.
    By definition, we have that
    \begin{equation} \label{equation:indiv-chores-1}
        \frac{\tilde{c}_1(t^*)}{\tilde{c}_2(t^*)} \leq \frac{\tilde{c}_1(\ialloc_2) - \tilde{c}_1(t^*)}{\tilde{c}_2(\ialloc_2) - \tilde{c}_2(t^*)}.
    \end{equation}
    
    Let $\delta \geq 0$ be defined as per the following inequality:
    \begin{equation}\label{equation:definition-of-k}
        \frac{\tilde{c}_1(t^*) + \delta}{\tilde{c}_2(t^*)} = \frac{\tilde{c}_1(\ialloc_2) - \tilde{c}_1(t^*) - \delta}{\tilde{c}_2(\ialloc_2) - \tilde{c}_2(t^*)}.
    \end{equation}
    Note that,~\Cref{equation:indiv-chores-1} implies that such a $\delta \geq 0$ must exist. 
    Furthermore, we have the following: $1 = \tilde{c}_1(\ialloc_1) + \tilde{c}_1(\ialloc_2) = \tilde{c}_1(\ialloc_1) + (\tilde{c}_1(t^*) + \delta) + (\tilde{c}_1(\ialloc_2)-\tilde{c}_1(t^*) - \delta) = \tilde{c}_1(\ialloc_1) + (\tilde{c}_1(t^*) + \delta) + \frac{\tilde{c}_1(t^*) + \delta}{\tilde{c}_2(t^*)} \cdot (\tilde{c}_2(\ialloc_2) - \tilde{c}_2(t^*))$, which implies that 
    \begin{equation}\label{eq: delta useful fact}
        \frac{\tilde{c}_1(t^*) + \delta}{\tilde{c}_2(t^*)} = \frac{\tilde{c}_1(\ialloc_2)}{\tilde{c}_2(\ialloc_2)}.
    \end{equation}

In the remainder of the proof we show that $\left(\tilde{c}_1(\ialloc_1 \cup \{t^*\})\right)^2 + \tilde{c}_2(\ialloc_2 \setminus \{t^*\})^2 < \tilde{c}_1(\ialloc_1)^2 + \tilde{c}_2(\ialloc_2)^2$ and $\max\{\tilde{c}_1(\ialloc_1 \cup \{t^*\}), \tilde{c}_2(\ialloc_2 \setminus \{t^*\})\} < \max\{\tilde{c}_1(\ialloc_1), \tilde{c}_2(\ialloc_2)\}$, i.e., both the $2$-mean and the ``$\infty$-mean'' are lower in $\ialloc'$. Then, we can readily apply~\Cref{lemma:indiv-chores-squeeze} to obtain that $\left(\tilde{c}_1(\ialloc_1 \cup \{t^*\})\right)^p + \tilde{c}_2(\ialloc_2 \setminus \{t^*\})^p < \tilde{c}_1(\ialloc_1)^p + \tilde{c}_2(\ialloc_2)^p$ for all $p \geq 2$, contradicting the optimality of $\ialloc$, which would conclude the proof of~\Cref{theorem:chore-EF1-result}. We consider two cases: $\tilde{c}_1(\ialloc_1) < \tilde{c}_2(\ialloc_1)$ and $\tilde{c}_1(\ialloc_1) = \tilde{c}_2(\ialloc_1)$.
    
\paragraph*{Case 1. $\tilde{c}_1(\ialloc_1) < \tilde{c}_2(\ialloc_1)$:} 
Since $\ialloc$ is not $\EFone$ and agent 2 is envious, we have $c_2(\ialloc_1) < c_2(\ialloc_2 \setminus \{t^*\})$, or, equivalently, $\tilde{c}_2(\ialloc_1) < 1 - \tilde{c}_2(\ialloc_1) - \tilde{c}_2(t^*)$. Additionally, $\tilde{c}_2(\ialloc_1) < 1$ since agent $2$ envies agent $1$. Setting $a = \tilde{c}_1(\ialloc_1)$, $\alpha = \tilde{c}_2(\ialloc_1)$, and $\beta = \tilde{c}_2(t^*)$, both conditions $a<\alpha < 1$ and $\beta < 1-2\alpha$ of~\Cref{lemma:indiv-chores-algebra} are satisfied. Thus, by invoking~\Cref{lemma:indiv-chores-algebra}, Inequality~(1) implies that
\begin{align*}
	& \left(\tilde{c}_1(\ialloc_1) + \frac{1-\tilde{c}_1(\ialloc_1)}{1-\tilde{c}_2(\ialloc_1)} \tilde{c}_2(t^*)\right)^2 + (1 - \tilde{c}_2(\ialloc_1) - \tilde{c}_2(t^*))^2 < \tilde{c}_1(\ialloc_1)^2 + (1-\tilde{c}_2(\ialloc_1))^2 \\
 \implies & \left(\tilde{c}_1(\ialloc_1) + \frac{\tilde{c}_1(\ialloc_2))}{\tilde{c}_2(\ialloc_2)} \tilde{c}_2(t^*) \right)^2 + \tilde{c}_2(\ialloc_2 \setminus \{t^*\})^2 < \tilde{c}_1(\ialloc_1)^2 + \tilde{c}_2(\ialloc_2)^2 \\
	\implies & \left(\tilde{c}_1(\ialloc_1 \cup \{t^*\}) + \delta \right)^2 + \tilde{c}_2(\ialloc_2 \setminus \{t^*\})^2 < \tilde{c}_1(\ialloc_1)^2 + \tilde{c}_2(\ialloc_2)^2. \tag{\Cref{eq: delta useful fact}}
\end{align*}
This shows that the $2$-mean of normalized costs in allocation $\ialloc'$ is strictly less than that of $\ialloc$. 

We now show that $\max\{\tilde{c}_1(\ialloc'_1), \tilde{c}_2(\ialloc'_2)\} = \max\{\tilde{c}_1(\ialloc_1 \cup \{t^*\}), \tilde{c}_2(\ialloc_2 \setminus \{t^*\})\} < \max\{\tilde{c}_1(\ialloc_1), \tilde{c}_2(\ialloc_2)\}$. Since $\ialloc$ is not $\EFone$ and agent 2 is envious, we have $c_2(\ialloc_1) < c_2(\ialloc_2)$, and therefore $\tilde{c}_2(\ialloc_1) < \tilde{c}_2(\ialloc_2)$. Additionally, as per our assumption, $\tilde{c}_1(\ialloc_1) < \tilde{c}_2(\ialloc_1)$. Thus we have $\max\{\tilde{c}_1(\ialloc_1), \tilde{c}_2(\ialloc_2)\} = \tilde{c}_2(\ialloc_2)$. By invoking~\Cref{lemma:indiv-chores-algebra} as earlier, Inequality~(2) implies that $\tilde{c}_1(\ialloc_1) + \frac{1-\tilde{c}_1(\ialloc_1)}{1-\tilde{c}_2(\ialloc_1)} \tilde{c}_2(t^*) < 1-\tilde{c}_2(\ialloc_1)$. Via~\Cref{eq: delta useful fact}, the LHS of this inequality is equal to $\tilde{c}_1(\ialloc_1 \cup \{t^*\}) + \delta \geq \tilde{c}_1(\ialloc_1 \cup \{t^*\})$, while the RHS is equal to $\tilde{c}_2(\ialloc_2)$, overall implying that $\tilde{c}_1(\ialloc_1 \cup \{t^*\}) < \tilde{c}_2(\ialloc_2)$. 
By the definition of $t^*$ we have that $\tilde{c}_2(t^*)>0$, and therefore $\tilde{c}_2(\ialloc_2 \setminus \{t^*\}) < \tilde{c}_2(\ialloc_2)$. Therefore, we overall have $\max\{\tilde{c}_1(\ialloc_1 \cup \{t^*\}), \tilde{c}_2(\ialloc_2 \setminus \{t^*\})\} < \tilde{c}_2(\ialloc_2) = \max\{\tilde{c}_1(\ialloc_1), \tilde{c}_2(\ialloc_2)\}$. 


\paragraph*{Case 2. $\tilde{c}_1(\ialloc_1) = \tilde{c}_2(\ialloc_1)$:} 
Note that in this case,~\Cref{eq: delta useful fact} implies that $\tilde{c}_1(t^*) + \delta = \tilde{c}_2(t^*)$.
Furthermore, since $\ialloc$ is not $\EFone$ and agent 2 is envious, we have $c_2(\ialloc_2 \setminus \{t^*\}) > c_2(\ialloc_1)$, or, equivalently, $\tilde{c}_2(\ialloc_2 \setminus \{t^*\}) > \tilde{c}_2(\ialloc_1) = \tilde{c}_1(\ialloc_1)$, or, $\tilde{c}_2(\ialloc_2) > \tilde{c}_1(\ialloc_1 \cup \{t^*\})$.
We begin by showing that the $2$-mean of normalized disutilities in allocation $\ialloc'$ is strictly less than that of $\ialloc$:
\begin{align*}
    &\tilde{c}_1(\ialloc_1)^2 + \tilde{c}_2(\ialloc_2)^2 - \tilde{c}_1(\ialloc_1 \cup \{t^*\})^2 - \tilde{c}_2(\ialloc_2\setminus \{t^*\})^2\\
     & \geq \tilde{c}_1(\ialloc_1)^2 + (\tilde{c}_2(\ialloc_2 \setminus \{t^*\}) + \tilde{c}_2(t^*))^2 - (\tilde{c}_1(\ialloc_1) + (\tilde{c}_1(t^*)+\delta))^2 - \tilde{c}_2(\ialloc_2\setminus \{t^*\})^2 \tag{$\delta \geq 0$}\\
     & = \tilde{c}_1(\ialloc_1)^2 + (\tilde{c}_2(\ialloc_2 \setminus \{t^*\}) + \tilde{c}_2(t^*))^2 - (\tilde{c}_1(\ialloc_1) + \tilde{c}_2(t^*))^2 - \tilde{c}_2(\ialloc_2\setminus \{t^*\})^2 \tag{$\tilde{c}_1(t^*) + \delta = \tilde{c}_2(t^*)$}\\
     & = 2 \, \tilde{c}_2(\ialloc_2 \setminus \{t^*\}) \, \tilde{c}_2(t^*) - 2 \, \tilde{c}_1(\ialloc_1) \, \tilde{c}_2(t^*)\\
     & > 0. \tag{$\tilde{c}_2(\ialloc_2 \setminus \{t^*\}) > \tilde{c}_1(\ialloc_1)$}
\end{align*}

Next we show that $\max\{\tilde{c}_1(\ialloc_1 \cup \{t^*\}), \tilde{c}_2(\ialloc_2 \setminus \{t^*\})\} < \max\{\tilde{c}_1(\ialloc_1), \tilde{c}_2(\ialloc_2)\}$. Noticing that $\max\{\tilde{c}_1(\ialloc_1), \allowbreak \tilde{c}_2(\ialloc_2)\}\allowbreak = \max\{\tilde{c}_2(\ialloc_1), \tilde{c}_2(\ialloc_2)\} = \tilde{c}_2(\ialloc_2) > \tilde{c}_2(\ialloc_2 \setminus \{t^*\})$ we have
    \begin{align*}
    &\tilde{c}_2(\ialloc_2) - \tilde{c}_1(\ialloc_1 \cup \{t^*\})\\
    & \geq \tilde{c}_2(\ialloc_2) - \tilde{c}_1(\ialloc_1) - (\tilde{c}_1(t^*) + \delta) \tag{$\delta \geq 0$}\\
    & \geq \tilde{c}_2(\ialloc_2) - \tilde{c}_1(\ialloc_1) - \tilde{c}_2(t^*)    \tag{$\tilde{c}_1(t^*) + \delta = \tilde{c}_2(t^*)$}\\
    & > \tilde{c}_2(\ialloc_1) - \tilde{c}_1(\ialloc_1) \tag{$\tilde{c}_2(\ialloc_1) < \tilde{c}_2(\ialloc_2 \setminus \{t^*\})$}\\
    & = 0. \tag{$\tilde{c}_1(\ialloc_1) = \tilde{c}_2(\ialloc_2)$}
    \end{align*}
This concludes the proof of~\Cref{theorem:chore-EF1-result}.  
\end{proof}

Note that, using~\Cref{lemma:div-to-indiv} and~\Cref{theorem:chore-EF1-result} we obtain the following corollary which tightens the result of~\Cref{theorem:div-chore-prop} for the case of $n=2$ agents and divisible chores.
\begin{corollary}\label{corollary:divisible-chores-2-EF}
For $n=2$ agents with additive costs over divisible chores, and all $p \geq 2$, any allocation that minimizes the normalized $p$-mean of disutilities is envy-free.
\end{corollary}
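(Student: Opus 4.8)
The plan is to deduce this as a corollary of Theorem~\ref{theorem:chore-EF1-result} via the divisible-to-indivisible reduction of Lemma~\ref{lemma:div-to-indiv}. Theorem~\ref{theorem:chore-EF1-result} states that for $n=2$ and any $p\geq 2$, every allocation of \emph{indivisible} chores minimizing the normalized $p$-mean of disutilities is $\EFone$, i.e.\ satisfies $\bEFk$ with $\beta=1$ and $k=1$. Lemma~\ref{lemma:div-to-indiv} is precisely the device that lifts a ``$\bEFk$ for indivisible chores'' guarantee of a minimization rule to a ``$\bEF$ for divisible chores'' guarantee, and instantiating it with $\beta=k=1$ yields exactly envy-freeness for divisible chores.

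The one point needing care is that Lemma~\ref{lemma:div-to-indiv} is phrased for a genuine function $f:\mathbb{R}^n\to\mathbb{R}$ of the disutility profile, whereas the normalized $p$-mean also depends on the constants $c_i(\items)$. I would reconcile this in either of two equivalent ways. (i) Since the normalized $p$-mean objective is invariant under rescaling each agent's cost function, fix an instance and rescale so that $c_i(\items)=1$ for all $i$; then the objective is literally $\mean_p\bigl(c_1(\dalloc_1),\dots,c_n(\dalloc_n)\bigr)$, a bona fide function of the disutilities, and neither the set of minimizers nor the $\EF$/$\EFone$ property is affected by the rescaling, so Lemma~\ref{lemma:div-to-indiv} applies verbatim with $f=\mean_p$. (ii) Alternatively, inspect the proof of Lemma~\ref{lemma:div-to-indiv}: the constructed indivisible instances are obtained by splitting each chore into pieces whose costs sum back to the original, so each agent's total cost $c_\ell(\hat\items)=c_\ell(\items)$ is preserved; consequently the normalized $p$-mean value of any indivisible allocation of the constructed instance equals that of the corresponding divisible allocation of the original, which is all that the argument uses.

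With this in hand, the argument is the one in Lemma~\ref{lemma:div-to-indiv}: assuming some normalized-$p$-mean-minimizing divisible allocation $\dalloc^*$ is not $\EF$, say $c_i(\dalloc^*_i)>c_i(\dalloc^*_j)$, pick $\epsilon>0$ with $c_i(\dalloc^*_i)-\epsilon=c_i(\dalloc^*_j)$, split each chore among the two agents according to $\dalloc^*$ and then refine until agent $i$'s cost for every indivisible chore is below $\epsilon/2$; the induced integral allocation is a minimizer of the normalized $p$-mean (by the cost-preservation observation above), hence $\EFone$ by Theorem~\ref{theorem:chore-EF1-result}, yet removing any one chore from agent $i$'s bundle (which is nonempty, since $c_i(\dalloc^*_i)>0$) leaves disutility at least $c_i(\dalloc^*_i)-\epsilon/2>c_i(\dalloc^*_j)$, so $i$ still envies $j$ --- a contradiction. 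I do not expect any genuine obstacle: the content is entirely carried by Theorem~\ref{theorem:chore-EF1-result} and Lemma~\ref{lemma:div-to-indiv}, and the only thing to be careful about is the book-keeping that makes Lemma~\ref{lemma:div-to-indiv} applicable to the (non-welfarist) normalized $p$-mean objective.
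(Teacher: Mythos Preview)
Your proposal is correct and takes essentially the same approach as the paper, which simply states that the corollary follows from Theorem~\ref{theorem:chore-EF1-result} and Lemma~\ref{lemma:div-to-indiv}. Your additional care in reconciling the fact that Lemma~\ref{lemma:div-to-indiv} is phrased for a function $f:\mathbb{R}^n\to\mathbb{R}$ while the normalized $p$-mean depends on the constants $c_i(\items)$ is well taken---the paper applies the lemma without comment, but your observation (ii) that the construction in Lemma~\ref{lemma:div-to-indiv} preserves each agent's total cost is precisely what justifies its use here.
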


We conclude by presenting some impossibility results for allocations that optimize the normalized $p$-means for indivisible chores. 

\begin{theorem}\label{theorem:indiv-chores-negative-wrt-p}
    For $n=2$ agents with additive costs over indivisible chores, and all $p < 2$, there exist instances such that every allocation that minimizes the normalized $p$-mean of disutilities is not $\PROPone$ (and hence not $\EFone$). 
\end{theorem}

\begin{proof}

    
We consider the following three cases based on the value of $p$.
    \paragraph{Case 1. $p < 1$.} Consider a instance with $n = 2$ agents and $m$ indivisible chores such that $c_{ij} = 1/m$ for all agents $i$ and chores $j$. Suppose that the $p$-mean minimizing allocation gives $x\in \{0,1,2, \ldots, m\}$ chores to agent 1 and $m-x$ chores to agent $2$. Since $x^p + (m-x)^p$ is concave in the interval $[0,m]$ for $p<1$, it is minimized (i.e., the $p$-mean $(\frac{1}{2}(x^p + (1-x)^p))^{1/p}$ is minimized), when $x$ is either $0$ or $m$, which corresponds to allocating all chores to either agent $1$ or $2$, violating $\PROPone$ (and $\EFone$).

    \paragraph{Case 2. $p = 1$.} Consider an instance with $n = 2$ agents and $m$ indivisible chores such that $c_{1j} = 1/m$ for all chores $j$, and $c_{2j} = 1/m+\epsilon$, for all $j < m$, and $c_{2m} = 1/m - (m-1)\epsilon$, for some small $\epsilon > 0$. The allocation which minimizes the normalized $p$-mean of disutilities allocates all chores except chore $m$ to agent $1$, which indeed violates $\PROPone$ (and $\EFone$).

    \paragraph{Case 3. $1 < p < 2$.} For this case, we will show that minimizing the $p$-mean of normalized disutilities for \emph{divisible} chores does not result in $\PROP$ allocations even for $n=2$ agents. Then, using~\Cref{lemma:div-to-indiv}, we can conclude that minimizing the normalized $p$-mean of disutilities for indivisible chores cannot result in $\PROPone$ (and hence $\EFone$) allocations.
    
    Consider an instance with $2$ agents and $2$ divisible chores such that $c_{11} = \frac{1}{2} + \delta$, $c_{12} = \frac{1}{2} - \delta$, $c_{21} = \frac{1}{2} + \varepsilon$ and $c_{22} = \frac{1}{2} - \varepsilon$. We will show that if $0 < \varepsilon < \delta < \varepsilon + (2\varepsilon + 1)(1 - p/2)$, then any allocation that minimizes the $p$-mean of normalized disutilities (for $p \in (1,2))$ is not $\PROP$.
    
    First, one can show that in such an instance, every allocation that minimizes the normalized $p$-mean allocates chore 2 entirely to agent 1; we prove this as~\Cref{lemma:div_chores_counterexample} in Appendix~\ref{app: missing from chores}. Now suppose that a normalized $p$-mean minimizing allocation allocates an $x \in [0,1]$ fraction of chore $1$ to agent 1 and $1-x$ of chore 1 to agent 2. Then, the $p$-mean of normalized disutilities can be written as $w_p(\dalloc) = \tilde{c}_2(\dalloc_2)^p + \tilde{c}_1(\dalloc_1)^p = \left(\frac{1}{2} + \varepsilon\right)^p(1-x)^p + \left(\frac{1}{2} - \delta + \left(\frac{1}{2} + \delta\right)x\right)^p$.
    In order to find the $x$ that minimizes the normalized $p$-mean, we compute the derivative with respect to $x$ and set it equal to $0$.
    \begin{align*}
        \frac{d w_p(x)}{d x} &= p\left(-\left(\frac{1}{2} + \varepsilon\right)^p(1-x)^{p-1} + \left(\frac{1}{2} - \delta + \left(\frac{1}{2} + \delta\right)x\right)^{p-1}\left(\frac{1}{2} + \delta\right)\right) = 0\\
        \implies & 0 = -\left(\frac{1}{2} + \varepsilon\right)^p(1-x)^{p-1} + \left(\frac{1}{2} - \delta + \left(\frac{1}{2} + \delta\right)x\right)^{p-1}\left(\frac{1}{2} + \delta\right)\\
        \implies & \left(\frac{1}{2} + \varepsilon\right)^p(1-x)^{p-1} = \left(\frac{1}{2} - \delta + \left(\frac{1}{2} + \delta\right)x \right)^{p-1}\left(\frac{1}{2} + \delta\right).
    \end{align*}
    Simplifying and rearranging, we get:
    \begin{align*}
        x = \frac{\left(\frac{1}{2} + \varepsilon\right)^{\frac{p}{p-1}} - \left(\frac{1}{2} - \delta\right)\left(\frac{1}{2}+\delta\right)^{\frac{1}{p-1}}}{\left(\frac{1}{2} + \delta\right)^{\frac{p}{p-1}} + \left(\frac{1}{2} + \varepsilon\right)^{\frac{p}{p-1}}}.
    \end{align*}    
    We wish to show that when $\delta < \varepsilon + (2\varepsilon + 1)(1 - p/2)$, the aforementioned optimal value of $x$ is strictly smaller than $\varepsilon/(1/2 + \varepsilon)$. This would imply that $\tilde{c}_2(\dalloc_2) = (1/2 + \varepsilon)(1-x) > 1/2$, i.e. the allocation is not $\PROP$. To this end, we simplify $x$ as follows, 
    \begin{align*}
        x &= \frac{\left(\frac{1}{2} + \varepsilon\right)^{\frac{p}{p-1}} - \left(\frac{1}{2} - \delta\right)\left(\frac{1}{2}+\delta\right)^{\frac{1}{p-1}}}{\left(\frac{1}{2} + \delta\right)^{\frac{p}{p-1}} + \left(\frac{1}{2} + \varepsilon\right)^{\frac{p}{p-1}}} \\
        &= \frac{\left(\frac{1}{2} + \varepsilon\right)^{\alpha + 1} - \left(\frac{1}{2} - \delta\right)\left(\frac{1}{2}+\delta\right)^{\alpha}}{\left(\frac{1}{2} + \delta\right)^{\alpha + 1} + \left(\frac{1}{2} + \varepsilon\right)^{\alpha + 1}} \tag{$\alpha \coloneqq \frac{1}{p-1}$} \\
        &= \frac{\left(\frac{1}{2} + \varepsilon\right) - \left(\frac{1}{2} - \delta\right)\left(1 + \Delta\right)^\alpha}{\left(\frac{1}{2} + \varepsilon\right) + \left(\frac{1}{2} + \delta\right)\left(1 + \Delta\right)^\alpha} \tag{$\Delta \coloneqq \frac{\delta - \varepsilon}{\frac{1}{2} + \varepsilon}$} \\
        &\leq \frac{\left(\frac{1}{2} + \varepsilon\right) - \left(\frac{1}{2} - \delta\right)\left(1 + \Delta\alpha\right)}{\left(\frac{1}{2} + \varepsilon\right) + \left(\frac{1}{2} + \delta\right)\left(1 + \Delta\alpha\right)} \tag{$(1 + \Delta)^\alpha \geq 1 + \Delta\alpha$, since $\alpha > 1$}\\
        &= \frac{\varepsilon + \delta - \Delta\alpha\left(\frac{1}{2} - \delta\right)}{1 + \varepsilon + \delta + \Delta\alpha\left(\frac{1}{2} + \delta\right)}. \tag{simplifying}
    \end{align*}
    
    Finally, to show that $\varepsilon/(1/2 + \varepsilon) > x$, we consider the  following difference,
    \begin{align*}
        &\frac{\varepsilon}{\frac{1}{2} + \varepsilon} - \frac{\varepsilon + \delta - \Delta\alpha\left(\frac{1}{2} - \delta\right)}{1 + \varepsilon + \delta + \Delta\alpha\left(\frac{1}{2} + \delta\right)} \\ &= \frac{\varepsilon}{\phi} - \frac{\varepsilon + \delta - \Delta\alpha\left(\frac{1}{2} - \delta\right)}{\psi} \tag{$\phi \coloneqq \frac{1}{2} + \varepsilon, \: \psi \coloneqq 1 + \varepsilon + \delta + \Delta\alpha\left(\frac{1}{2} + \delta\right)$} \\
        &= \frac{1}{\phi\psi}\left(\varepsilon + \Delta\alpha\varepsilon\left(\frac{1}{2} + \delta\right) + \Delta\alpha\varepsilon\left(\frac{1}{2} - \delta\right) - \frac{1}{2}\left(\varepsilon + \delta\right) + \frac{\Delta\alpha}{2}\left(\frac{1}{2} - \delta\right)\right) \\
        &=\frac{1}{\phi\psi}\left(\frac{\varepsilon - \delta}{2} + \varepsilon\Delta\alpha + \frac{\Delta\alpha}{2}\left(\frac{1}{2} - \delta\right)\right) \tag{simplifying} \\
        &= \frac{1}{\phi\psi}\left(\frac{-\Delta}{2}\left(\frac{1}{2} + \varepsilon\right) + \varepsilon\Delta\alpha + \frac{\Delta\alpha}{2}\left(\frac{1}{2} - \delta\right)\right) \tag{$\Delta = \frac{\delta - \varepsilon}{\frac{1}{2} + \varepsilon}$} \\
        &> \frac{\Delta}{2\phi\psi\alpha}\left(-\left(\frac{1}{2} + \varepsilon\right)\left(\frac{1}{\alpha}\right) + 2\varepsilon + \frac{1}{2} - \varepsilon - (2\varepsilon + 1)\left(1 - \frac{p}{2}\right)\right) \tag{$\delta < \varepsilon + (2\varepsilon + 1)\left(1 - \frac{p}{2}\right)$} \\
        &= \frac{\Delta}{2\phi\psi\alpha}\left(-\left(\frac{1}{2} + \varepsilon\right)(p-1) + 2\varepsilon + \frac{1}{2} - \varepsilon - (2\varepsilon + 1)\left(1 - \frac{p}{2}\right)\right) \tag{$\alpha = \frac{1}{p-1}$} \\
        &= \frac{\Delta}{2\phi\psi\alpha}\cdot 0 = 0. \tag{simplifying}
    \end{align*}
    Putting the two inequalities together, we get the desired inequality, completing the proof:
    \[
        x = \frac{\left(\frac{1}{2} + \varepsilon\right)^{\frac{p}{p-1}} - \left(\frac{1}{2} - \delta\right)\left(\frac{1}{2}+\delta\right)^{\frac{1}{p-1}}}{\left(\frac{1}{2} + \delta\right)^{\frac{p}{p-1}} + \left(\frac{1}{2} + \varepsilon\right)^{\frac{p}{p-1}}} \leq \frac{\varepsilon + \delta - \Delta\alpha\left(\frac{1}{2} - \delta\right)}{1 + \varepsilon + \delta + \Delta\alpha\left(\frac{1}{2} + \delta\right)} < \frac{\varepsilon}{\frac{1}{2} + \varepsilon}. 
    \]
  
\end{proof}

\begin{theorem}\label{thm: 14}
    For $n > 2$ agents with additive costs over indivisible chores, and all $p < n$, there exist instances such that every allocation that minimizes the normalized $p$-mean of disutilities is not $\PROPone$ (and hence not $\EFone$).
\end{theorem}
\begin{proof}    We proceed by analyzing the following three cases, based on the values of $p$.
    \paragraph{Case 1. $p < 1$.} Consider an instance with $n > 2$ agents and $n^2$ chores such that $c_{ij} = 1/n^2$ for all $i,j$. The minimum value of the normalized $p$ mean of disutilities will be $0$, and will be achieved when some agent $i$ gets an empty bundle. Due to the pigeonhole principle, one of remaining agent $j \in \agents \setminus \{i\}$, will be allocated a bundle having disutility at least $1/(n-1)$. Even after removing one chore from agent $j$'s bundle, its disutility will be at least $1/(n-1) - 1/n^2 > 1/n$. Hence, any allocation that minimizes the normalized $p$ mean of disutilities will not be $\PROPone$.
    \paragraph{Case 2. $p = 1$.} Here we analyze an instance with $n$ agents and $n^2$ chores. The disutilitites of the first agent are $c_{11} = 1/n^2 + (n-1)\varepsilon$ and $c_{1j} = 1/n^2 - \varepsilon$ for all $j \geq 2$, for a sufficiently small and positive $\varepsilon > 0$. The other agents are identical, $c_{ij} = 1/n^2$ for all $i \in \agents \setminus \{1\}$ and $j \in \items$. Note that any $p$ mean welfare minimizing allocation will allocate all chores in $\items \setminus \{1\}$ to agent $1$. Therefore, the disutility of agent $1$ will be at least $1 - 1/n^2 - (n-1)\varepsilon$. Even after the removal of one chore, the disutility of this agent will be strictly greater than $1/n$.
    
    \paragraph{Case 3. $p \in (1, n)$.} Towards a contradiction, we assume that every allocation that minimizes the normalized $p$ mean of disutilities, for $p \in (1,n)$, results in a $\PROPone$ allocation. Then, \Cref{lemma:div-to-indiv} implies that minimizing the normalized $p$ mean of disutilities for divisible chores results in $\PROP$ allocations. This directly contradicts \Cref{theorem:chores-div-neg-many-agents}, as the abovementioned optimization for divisible chores can only result in $\Omega(n^{1/p})$-$\PROP$ allocations.
\end{proof}


\section{Conclusion}\label{sec:Conclusion}
In this paper, we study the fundamental problem of computing fair and economically efficient allocations when the items to be allocated are all goods or all chores, all divisible or all indivisible. We define a new class of objective functions, the normalized $p$-mean objectives, and show that it produces fair and efficient allocations for each of the aforementioned settings. There are several natural directions for future research. Our results show that optimizing objectives that aren't necessarily \emph{welfare functions} can result in fair and efficient allocations; this motivates the exploration of fairness properties of allocation algorithms beyond welfarist rules. In particular, can we characterize the class of objective functions that result in fair allocations? For the case of $p$-mean objectives, are there axiomatic explanations why only $p \leq 0$ results in $\PROP$ allocations for divisible goods, or $p>2$ for indivisible chores results in $\EFone$ allocations for two agents? Can other objectives produce $\PO$ and $\EF1$ allocations of indivisible chores for $n>2$ agents? Extending our results to the case of arbitrary entitlements or mixed manna would also be interesting. Finally, studying the fairness properties of normalized $p$-mean objectives in other domains, like public good allocation and participatory budgeting, would also be interesting.

\bibliographystyle{alpha}
\bibliography{references}

\appendix

\section{Additional Market Preliminaries}\label{appendix:market-prelims}

\paragraph{Market equilibrium for chores.}
Analogous to divisible goods, we can define a Fisher market equilibrium for chores. Given agents $\agents$ and chores $\items$, we say that the triple $(\dalloc, \rewards, \earnings)$, where $r_j$ is the reward/payment for doing one unit of chore $j$ and $e_i$ is the earning goal of agent $i$, is a \emph{market equilibrium} iff the following three conditions are satisfied.

\begin{itemize}
    \item {\bf Market clears:} For every chore $j$, either $p_j=0$ or $\sum_{i=1} x_{ij}=1$.
    \item {\bf Maximum reward-per-cost allocation:} If $x_{ij}>0$, then either $c_{ij}=0$ or $\frac{r_j}{c_{ij}} \geq \frac{r_\ell}{c_{i\ell}}$ for every chore $\ell$ with $c_{i \ell}>0$.
    \item {\bf Earning-goals met:} The total earning of each agent is exactly $e_i$, i.e., $e_i = \sum_{j} x_{ij}r_j$.
\end{itemize}

Overloading notations, we will use $\MRC_i = \{j \in \items : r_j/c_{ij} \geq r_\ell/c_{i\ell} \text{ for all } \ell \in \items\}$ to denote all the maximum reward-per-cost chores for agent $i$ in equilibrium $(\dalloc, \rewards, \earnings)$ as well as the value $\MRC_i = \max_j r_j/c_{ij}$. 



\section{Proofs missing from Section~\ref{SEC:GOODS}}\label{app:missing proofs goods}


\begin{proof}{Proof of~\Cref{lemma:div-goods-not-EF}}
    Consider an instance with $n=3$ agents and two divisible goods such that $v_{11} = 0$ and $v_{12} = 1$; $v_{21} = 0.5$ and $v_{22} = 0.5$; $v_{31} = 0.8$ and $v_{32} = 0.2$. We will show that any allocation that maximizes the normalized $p$-mean for $p \rightarrow - \infty$ is not $\EF$. Note that by continuity, this implies that the same holds for a sufficiently large but finite $p$.
    
    An allocation $\dalloc^*$ that maximizes the normalized $p$-mean for $p \rightarrow - \infty$, (i.e., maximizes the minimum normalized utility) can be computed by solving the following linear program: $\max \ k$ such that $\sum_{j=1}^2 \tilde{v}_{ij} \cdot x^*_{ij} \geq k$ for each $i \in \{1,2,3\}$ and $\sum_{i=1}^3 x^*_{ij} \leq 1$ for each $j \in \{1,2\}$. It can easily be verified that the optimum of this program occurs when item $1$ is divided among agent $2$ and $3$ such that $x^*_{21} = 7/17$, $x^*_{31} = 10/17$ and item $2$ is divided such that $x^*_{12} = 8/17$ and $x^*_{22   } = 9/17$. However, this allocation is not $\EF$ since agent $1$ envies $2$, i.e., $v_{1}(\dalloc^*_1) = 8/17 < 9/17 = v_1(\dalloc^*_2)$. This completes the proof. Note that, this counter-example can be easily extended to $n>3$ agents by simply adding pairs of new agents and goods such the agents in these pairs only like the corresponding good.  
\end{proof}

\begin{lemma} \label{lemma:div_goods_counterexample}
    In any instance with $m=2$ divisible goods and $n=2$ agents such that $v_{11} = 1/2 + \varepsilon$, $v_{12} = 1/2 - \varepsilon$, $v_{21} = 1/2 + \delta$, and $v_{22} = 1/2 - \delta$ with $\delta > \epsilon$, for all $p\in(0,1)$, every allocation that maximizes the normalized $p$-mean of utilities must allocate good $2$ entirely to agent $1$.    
\end{lemma}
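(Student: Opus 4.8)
The plan is to parametrize an allocation of this instance by a single pair $(\alpha,\gamma)\in[0,1]^2$, where $\alpha=x_{11}$ and $\gamma=x_{12}$ are the fractions of good $1$ and good $2$ assigned to agent $1$. Since $v_{11}+v_{12}=v_{21}+v_{22}=1$, normalized utilities equal raw utilities, so $\tilde v_1(\alpha,\gamma)=(\tfrac12+\varepsilon)\alpha+(\tfrac12-\varepsilon)\gamma$ and $\tilde v_2(\alpha,\gamma)=(\tfrac12+\delta)(1-\alpha)+(\tfrac12-\delta)(1-\gamma)$, and for $p\in(0,1)$ maximizing the normalized $p$-mean is equivalent to maximizing $F(\alpha,\gamma):=\tilde v_1(\alpha,\gamma)^p+\tilde v_2(\alpha,\gamma)^p$ over $[0,1]^2$. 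A maximizer $(\alpha^*,\gamma^*)$ exists by compactness. First I would dispose of the degenerate case where a utility vanishes: if $\tilde v_1=0$ then $\alpha^*=\gamma^*=0$ and $F=1$, which is beaten by the equal split (value $2^{1-p}>1$ because $p<1$); symmetrically $\tilde v_2>0$. Hence at any maximizer both utilities are strictly positive, $F$ is differentiable there, and the standard first-order optimality conditions hold: $\partial_\gamma F\le 0$ whenever $\gamma^*<1$, and $\partial_\alpha F<0$ would force $\alpha^*=0$.

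The engine of the argument is an algebraic identity between the two partials. Writing $\partial_\alpha F=p\big[(\tfrac12+\varepsilon)\tilde v_1^{\,p-1}-(\tfrac12+\delta)\tilde v_2^{\,p-1}\big]$ and $\partial_\gamma F=p\big[(\tfrac12-\varepsilon)\tilde v_1^{\,p-1}-(\tfrac12-\delta)\tilde v_2^{\,p-1}\big]$, the $\tilde v_1^{\,p-1}$ terms cancel in the combination $(\tfrac12-\varepsilon)\,\partial_\alpha F-(\tfrac12+\varepsilon)\,\partial_\gamma F$, and a short expansion shows this combination equals $-p(\delta-\varepsilon)\,\tilde v_2^{\,p-1}$, which is strictly negative since $p>0$, $\delta>\varepsilon$, and $\tilde v_2>0$. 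Consequently, at the maximizer, $(\tfrac12-\varepsilon)\,\partial_\alpha F<(\tfrac12+\varepsilon)\,\partial_\gamma F$.

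Now suppose, for contradiction, that $\gamma^*<1$. Then $\partial_\gamma F\le 0$, so the inequality above (together with $\tfrac12-\varepsilon>0$) forces $\partial_\alpha F<0$, hence $\alpha^*=0$. But with $\alpha^*=0$ we get $\tilde v_1=(\tfrac12-\varepsilon)\gamma^*\le \tfrac12-\varepsilon<\tfrac12$ and $\tilde v_2=(\tfrac12+\delta)+(\tfrac12-\delta)(1-\gamma^*)\ge \tfrac12+\delta>\tfrac12$, so $0<\tilde v_1<\tilde v_2$. Since $p-1<0$ this gives $\tilde v_1^{\,p-1}>\tilde v_2^{\,p-1}>0$, and combined with $\tfrac12-\varepsilon>\tfrac12-\delta\ge 0$ we obtain $(\tfrac12-\varepsilon)\tilde v_1^{\,p-1}>(\tfrac12-\delta)\tilde v_2^{\,p-1}$, i.e.\ $\partial_\gamma F>0$, contradicting $\partial_\gamma F\le 0$. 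Therefore $\gamma^*=1$, which is exactly the claim, and it holds for every maximizer since $(\alpha^*,\gamma^*)$ was arbitrary. The only subtle point is the preliminary verification that utilities are positive at maximizers, which legitimizes differentiating $F$; the boundary value $\delta=\tfrac12$ is even easier, since then $\partial_\gamma F=p(\tfrac12-\varepsilon)\tilde v_1^{\,p-1}>0$ rules out $\gamma^*<1$ directly. Everything else is routine computation, so I expect the sign identity of the second paragraph to be the crux and the remainder to follow mechanically.
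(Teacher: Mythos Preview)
Your proof is correct and takes a genuinely different route from the paper. The paper argues in two stages: first an explicit Pareto-style swap (send a bit of good~1 to agent~2 and a bit of good~2 to agent~1, keeping agent~2's utility fixed while strictly raising agent~1's) shows that an optimum cannot have both agents holding positive fractions of both goods, so either $x_{12}=1$ (done) or $x_{11}=0$; in the latter case it reduces to a one-variable problem in $x_{12}$, computes the derivative, and shows it is positive on $[0,1]$ by bounding the unique stationary point above~$1$. Your argument instead works directly with the full two-variable KKT conditions and hinges on the single algebraic identity $(\tfrac12-\varepsilon)\partial_\alpha F-(\tfrac12+\varepsilon)\partial_\gamma F=-p(\delta-\varepsilon)\tilde v_2^{\,p-1}<0$, which simultaneously forces $\alpha^*=0$ from $\gamma^*<1$ and then derives the contradiction $\partial_\gamma F>0$. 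This is slicker and avoids any case split or explicit solving for stationary points; the paper's approach, on the other hand, is slightly more elementary in that it never invokes KKT and the ``swap'' step makes the Pareto-efficiency intuition visible. Both are short, and your handling of the degenerate boundaries (zero utilities, $\delta=\tfrac12$) is careful enough that nothing is missing.
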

\begin{proof}   
    First, consider an allocation in which every agent has a nonzero share of every good. We will show that there exists $t_1, t_2 \in (0,1)$ such that transferring $t_1$ fraction of good 1 to agent 2 and $t_2$ fraction of good 2 to agent 1 results in a strictly larger $p$-mean. To achieve this, we can pick $t_1$ and $t_2$ so that the utility of agent 2 does not change after the transfer, i.e.,
    \begin{align}\label{equation:div-2-agent-item-lemma-eq1}
         t_1(1/2 + \delta) - t_2(1/2 - \delta) = \frac{1}{2}(t_1 - t_2) + \delta(t_1 + t_2) = 0.
    \end{align}
    For such a choice of $t_1$ and $t_2$, we can show that agent $1$'s utility always increases because the change in agent $1$'s utility can be written as,
    \begin{align*}
        t_2(1/2 - \varepsilon) - t_1(1/2 + \varepsilon) &= \frac{1}{2}(t_2 - t_1) - \varepsilon(t_1 + t_2) \\
        &= \delta(t_1 + t_2) - \varepsilon(t_1 + t_2) \tag{via~\Cref{equation:div-2-agent-item-lemma-eq1}} \\
        &> 0. \tag{$\delta > \varepsilon$}
    \end{align*}

    Therefore, either good $2$ is fully allocated to agent $1$ and we are done, or, good 1 is fully allocated to agent 2. Suppose the latter occurs, and suppose that $x$ is the fraction of good $2$ allocated to agent $1$. Then, $\tilde{v}_1(\dalloc_1) = (1/2 - \varepsilon)x$ and $\tilde{v}_2(\dalloc_2) = 1/2 + \delta + (1-x)(1/2 - \delta) = 1 - x(1/2 - \delta)$. We show that transferring the entirety of good 2 to agent 1 results in a larger $p$-mean.  The normalized $p$-mean of utilities is equal to
    \begin{align*}
        w_p(x) = \left(x\left(\frac{1}{2}-\varepsilon\right)\right)^p + \left(1-x\left(\frac{1}{2}-\delta\right)\right)^p.
    \end{align*}
    We first consider the derivative with respect to $x$ and show that it is positive.
    \begin{align*} 
        \frac{\partial w_p(x)}{\partial x} &= p\left(\left(\frac{1}{2} - \varepsilon\right)^px^{p-1} - \left(1 - x\left(\frac{1}{2} - \delta\right)\right)^{p-1}\left(\frac{1}{2} - \delta\right)\right) > 0 \\
        &\Longleftrightarrow \left(\frac{1}{2} - \varepsilon\right)^px^{p-1} > \left(1 - x\left(\frac{1}{2} - \delta\right)\right)^{p-1}\left(\frac{1}{2} - \delta\right) \\
        &\Longleftrightarrow x < \frac{\left(\frac{1}{2} - \delta\right)^{\frac{1}{p-1}}}{\left(\frac{1}{2} - \varepsilon\right)^{\frac{p}{p-1}} + \left(\frac{1}{2} - \delta\right)^{\frac{p}{p-1}}}. \tag{$0<p<1$}
    \end{align*}
    Showing that the stated upper bound on $x$ is larger than $1$ would imply that the derivative is positive everywhere in the interval $[0,1]$, and therefore the fraction of good $2$ allocated to agent $1$ should be equal to $1$.
    \begin{align*}
        \frac{\left(\frac{1}{2} - \delta\right)^{\frac{1}{p-1}}}{\left(\frac{1}{2} - \varepsilon\right)^{\frac{p}{p-1}} + \left(\frac{1}{2} - \delta\right)^{\frac{p}{p-1}}} \tag{$\delta > \varepsilon, \: 0 < p < 1$} &> \frac{\left(\frac{1}{2} - \delta\right)^{\frac{1}{p-1}}}{2\left(\frac{1}{2} - \delta\right)^{\frac{p}{p-1}}}\\ &= \frac{1}{1-2\delta} > 1.
    \end{align*}
    The lemma stands proved.  
\end{proof}

\section{Proofs missing from Section~\ref{SEC:CHORES}}\label{app: missing from chores}

\begin{proof}{Proof of~\Cref{theorem:div-chore-prop} continued}

Here, we prove that the allocations that minimize the normalized $p$-mean objective for divisible chores admit a market interpretation.
For the remainder of the proof we assume without loss of generality that $c_{ij} > 0$ for all agents $i$ and chores $j$; if $c_{ij} = 0$ for some agent $i$ and chore $j$, every $p$-mean optimal allocation will allocate chore $j$ to agent $i$.

We analyze the convex program whose objective is to minimize the $p$-mean of agents' normalized disutilities for $p>1$.  Minimizing $\left( \sum_{i=1}^n \frac{1}{n} \left( \sum_{j=1}^m \tilde{c}_{ij}x_{ij} \right)^{p}\right)^{1/p}$ is equivalent to minimizing $\sum_{i=1}^n \left( \sum_{j=1}^m \tilde{c}_{ij}x_{ij} \right)^{p}$, this convex program (indeed, the objective is convex, since each term of the objective $\left( \sum_{j=1}^m\tilde{c}_{ij}x_{ij} \right)^{p}$ is convex) can be written as follows:
\begin{align*}
    \text{minimize} \ \ \sum_{i=1}^n & \left( \sum_{j=1}^m \tilde{c}_{ij}x_{ij} \right)^{p}\\
    \text{subject to: }~~ -\sum_{i=1}^n x_{ij} & \leq -1 \ \ \text{, for all } j \in \items\\
    -x_{ij} & \leq 0 \ \ \text{, for all } i \in \agents, j \in \items
\end{align*}
    \paragraph{Analysis of the Convex Program.} Let $\dalloc^*$ be the optimal solution to this program, and let $r^*_j \geq 0$ and $\kappa^*_{ij} \geq 0$ be the associated optimal dual variables corresponding to the constraints $-\sum_{i=1}^n x_{ij} \leq -1$ and $-x_{ij} \leq 0$, respectively. The KKT conditions imply that

    \begin{align*}
        r^*_{j} \left( - \sum_{i=1}^n x^*_{ij} + 1 \right) & = 0, \text{ for all }  j \in \items \label{equation:kkt-1 chore} \numberthis\\
        \kappa^*_{ij}x^*_{ij} & = 0, \text{ for all } i \in \agents, j \in \items \label{equation:kkt-2 chore} \numberthis\\
         \frac{\partial}{\partial x_{ij}}  \sum_{i=1}^n  \left( \sum_{j=1}^m \tilde{c}_{ij}x_{ij} \right)^{p} \Biggr|_{x^*_{ij}} & + r^*_j \,  \frac{\partial}{\partial x_{ij}} \left(-\sum_{i=1}^n x_{ij} +1\right) \Biggr|_{x^*_{ij}} + \kappa^*_{ij} \,  \frac{\partial}{\partial x_{ij}} (-x_{ij}) \Biggr|_{x^*_{ij}} = 0 ,\forall i \in \agents, j \in \items. \numberthis
    \end{align*}

Note that, letting $\tilde{c}_i(\dalloc^*_i) = \sum_{j=1}^m \tilde{c}_{ij}x^*_{ij}$, stationarity (the third condition) implies:
\begin{equation}
p \tilde{c}_{ij} \, \tilde{c}_i(\dalloc^*_i)^{p-1} = r^*_j + \kappa^*_{ij} \text{ for all } i \in \agents, j \in \items. \label{equation:kkt-3 chore} 
\end{equation}

    Since $\tilde{c}_{ij}>0$, then~\Cref{equation:kkt-3 chore} implies that $p \, \tilde{c}_i(\dalloc^*_i)^{p-1} = \frac{r^*_j + \kappa^*_{ij}}{\tilde{c}_{ij}}$. Additionally, if $x^*_{ij} > 0$,~\Cref{equation:kkt-2 chore} implies that $\kappa^*_{ij} = 0$, and therefore, $p \, \tilde{c}_i(\dalloc^*_i)^{p-1} = \frac{r^*_j}{\tilde{c}_{ij}}$. Hence, if $x^*_{ij} > 0$, for any $\ell \in \items$ we have
    \begin{equation}
        \frac{r^*_j}{\tilde{c}_{ij}} = p \, \tilde{c}_i(\dalloc^*_i)^{p-1} = \frac{r^*_{\ell} + \kappa^*_{i\ell}}{\tilde{c}_{i\ell}} \geq \frac{r^*_{\ell}}{\tilde{c}_{i\ell}}. \label{equation:kkt-4 chore}
    \end{equation}

    \paragraph{Market Interpretation.} 
    Let $e^*_i \coloneqq p \, \tilde{c}_i(\dalloc_i^*)^{p}$. We will show that $(\dalloc^*, \rewards^*, \earnings^*)$ is a market equilibrium. 
    
    First, notice that $r^*_j > 0$ for all $j \in \items$:
    for $i \in \agents$ such that $x^*_{ij} > 0$, we have $r^*_j = p \tilde{c}_{ij} \tilde{c}_i(\dalloc^*_i)^{p-1}$, where $\tilde{c}_{ij} > 0$ and and $\tilde{c}_i(\dalloc^*_i)^{p-1} > 0$ (since $x^*_{ij} > 0$). Then, note that the \emph{market clears} (i.e., $\sum_{i=1}^n x^*_{ij} = 1$, for all $j \in \items$): this is a direct implication of~\Cref{equation:kkt-1 chore} and the fact that $r^*_j > 0$ for all $j \in \items$. 
    
    Second, $\dalloc^*$ is a \emph{maximum reward-per-cost} allocation, i.e. if $x^*_{ij}>0$, then~\Cref{equation:kkt-4 chore} implies that $\MRC_i \coloneqq  \frac{r^*_j}{\tilde{c}_{ij}} \geq \frac{r^*_\ell}{\tilde{c}_{i\ell}}$ for all $\ell \in \items$. Finally, every agent $i \in \agents$ meets their earning goal $e^*_i$. The amount earned by agent $i$ is
    \begin{align*}
        \sum_{j:\ x^*_{ij}>0} x^*_{ij} r^*_j &= \sum_{j:\ x^*_{ij}>0} x^*_{ij} \tilde{c}_{ij} \cdot \left( \frac{r^*_j}{\tilde{c}_{ij}} \right) \\
        &= \sum_{j:\ x^*_{ij}>0} x^*_{ij} \tilde{c}_{ij} \cdot \left( p \, \tilde{c}_i(\dalloc^*_i)^{p-1} \right) \tag{\Cref{equation:kkt-4 chore}}\\
        &= p \, \tilde{c}_i(\dalloc^*_i)^{p-1} \sum_{j:\ x^*_{ij}>0} x^*_{ij} \tilde{c}_{ij} \\
        &=  p \, \tilde{c}_i(\dalloc^*_i)^{p-1} \tilde{c}_i(\dalloc^*_i) = e^*_i.    \end{align*}
    This completes the proof.  
\end{proof}


\begin{proof}{Proof of~\Cref{lemma:indiv-chores-algebra}}
    To prove the first inequality, we consider the following difference,
    \begin{align*}
        &\left(a + \frac{\beta(1-a)}{1-\alpha}\right)^2 + (1 - \alpha - \beta)^2 - (a^2 + (1-\alpha)^2)\\
        & = a^2 + \frac{\beta^2(1-a)^2}{(1-\alpha)^2} + \frac{2a\beta(1-a)}{1-\alpha} + (1 - \alpha)^2 + \beta^2 - 2\beta(1-\alpha) - a^2 - (1-\alpha)^2\\
        &=\beta\left(\frac{\beta(1-a)^2}{(1-\alpha)^2} + \frac{2a(1-a)}{1-\alpha} + \beta - 2(1-\alpha)\right) \\
        &< \beta\left(\frac{(1-2\alpha)(1-a)^2}{(1-\alpha)^2} + \frac{2a(1-a)}{1-\alpha} + (1-2\alpha) - 2(1-\alpha)\right) \tag{$\beta < 1 - 2\alpha$} \\
        &= \beta\left(\frac{(1-2\alpha)(1-a)^2}{(1-\alpha)^2} + \frac{2a(1-a)}{1-\alpha} - 1\right) \\
        &= \frac{\beta}{(1-\alpha)^2}((1-2\alpha)(1-a)^2 + 2a(1-a)(1-\alpha) - (1-\alpha)^2)\\
        &= \frac{\beta}{(1-\alpha)^2}(2a\alpha - a^2 - \alpha^2) \tag{simplifying} \\
        &= -\frac{\beta}{(1-\alpha)^2}(a - \alpha)^2 < 0 \tag{$a < \alpha$}. 
    \end{align*}
    This establishes the first inequality. For the second inequality, we have
    \[
         a + \frac{1-a}{1-\alpha}\cdot \beta < 1 - \alpha \Longleftarrow \  \alpha < \frac{1-a}{1-\alpha}(1-\alpha-\beta).
    \]
    The last inequality above is true because $a < \alpha < 1$ (i.e., $1 < \frac{1-a}{1-\alpha}$ ) and $\alpha < 1 - \alpha - \beta$.  
\end{proof}

\begin{lemma} \label{lemma:div_chores_counterexample}
    For the divisible chores allocation instance below and $p \in (1,2)$, the allocation that minimizes the $p$-mean of normalized disutilities is such that, when $\delta > \varepsilon$, chore 2 is fully allocated to agent 1.
    $$\begin{pmatrix}
        \frac{1}{2} + \delta \ \ & ~~\frac{1}{2} - \delta\\
        \frac{1}{2} + \varepsilon \ \ & ~~\frac{1}{2} - \varepsilon
    \end{pmatrix}$$
\end{lemma}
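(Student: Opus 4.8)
The plan is to mirror the argument used for goods in Lemma~\ref{lemma:div_goods_counterexample}, exploiting comparative advantage: since $\delta > \varepsilon$, agent~$1$ has strictly smaller cost for chore~$2$ and strictly larger cost for chore~$1$ than agent~$2$, so the efficient direction is to concentrate chore~$2$ on agent~$1$. Throughout I use that each agent's costs already sum to $1$, so $\tilde{c}_{ij} = c_{ij}$, that we may assume $0 < \varepsilon < \delta < \tfrac12$ (otherwise the instance degenerates and the claim is immediate), and that $w_p$ is strictly increasing in each coordinate for $p > 0$ (so a change that leaves one agent's cost fixed and strictly decreases the other's strictly decreases the $p$-mean).

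First I would rule out allocations in which agent~$1$ holds a positive fraction of chore~$1$ \emph{and} agent~$2$ holds a positive fraction of chore~$2$. Given such an allocation $\dalloc$, consider moving a small amount $t_1 > 0$ of chore~$1$ from agent~$1$ to agent~$2$ and an amount $t_2 > 0$ of chore~$2$ from agent~$2$ to agent~$1$, with $t_1, t_2$ small enough to preserve feasibility and chosen so that agent~$1$'s cost is unchanged, i.e. $t_1(\tfrac12+\delta) = t_2(\tfrac12-\delta)$. A direct computation shows that agent~$2$'s cost then changes by $t_1(\tfrac12+\varepsilon) - t_2(\tfrac12-\varepsilon) = \frac{t_1}{1/2-\delta}\big((\tfrac12+\varepsilon)(\tfrac12-\delta) - (\tfrac12+\delta)(\tfrac12-\varepsilon)\big) = \frac{t_1(\varepsilon - \delta)}{1/2 - \delta} < 0$. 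Hence $w_p$ strictly decreases, contradicting optimality. Therefore, in any optimal allocation, $x_{11} = 0$ or $x_{22} = 0$; in the latter case $x_{12} = 1$ and we are done.

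It remains to treat the case $x_{11} = 0$, i.e. chore~$1$ is entirely on agent~$2$; write $x = x_{12} \in [0,1]$. Then $\tilde{c}_1(\dalloc_1) = (\tfrac12-\delta)x$ and $\tilde{c}_2(\dalloc_2) = 1 - (\tfrac12-\varepsilon)x$, and by monotonicity of $w_p$ it suffices to show that $w_p(x) := \big((\tfrac12-\delta)x\big)^p + \big(1 - (\tfrac12-\varepsilon)x\big)^p$ is strictly decreasing on $[0,1]$, which forces the minimizer $x = 1$, i.e. $x_{12} = 1$. Differentiating, $w_p'(x) < 0$ is equivalent to $(\tfrac12-\delta)^p x^{p-1} < (\tfrac12-\varepsilon)\big(1 - (\tfrac12-\varepsilon)x\big)^{p-1}$; since $p > 1$, the left side is nondecreasing in $x$ while the right side is strictly decreasing in $x$ on $[0,1]$, so it is enough to verify the inequality at $x = 1$, namely $(\tfrac12-\delta)^p < (\tfrac12-\varepsilon)(\tfrac12+\varepsilon)^{p-1}$. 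This follows by chaining $(\tfrac12-\delta)^p < (\tfrac12-\varepsilon)^p = (\tfrac12-\varepsilon)(\tfrac12-\varepsilon)^{p-1} \leq (\tfrac12-\varepsilon)(\tfrac12+\varepsilon)^{p-1}$, using $0 < \tfrac12-\delta < \tfrac12-\varepsilon < \tfrac12+\varepsilon$ and $p > 1$.

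The main subtlety — and where the argument is easy to get backwards — is choosing the right transfer in the first step: one must move chore~$1$ toward agent~$2$ while holding agent~$1$'s (not agent~$2$'s) cost fixed; the variant that holds agent~$2$'s cost fixed makes agent~$1$ strictly worse off and fails. Beyond that, the remaining steps are routine single-variable calculus, and nothing beyond $p > 1$ (hence a fortiori $p \in (1,2)$) is used.
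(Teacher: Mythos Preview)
Your proof is correct and follows essentially the same two-step approach as the paper: a Pareto-improving transfer rules out $x_{11}>0 \wedge x_{22}>0$, and then single-variable calculus on $w_p(x)$ handles the remaining case $x_{11}=0$ (your monotonicity reduction to $x=1$ is in fact a bit slicker than the paper's explicit computation of the critical point). One correction to your closing remark: the variant that holds agent~$2$'s cost fixed also works --- this is exactly what the paper does --- since agent~$1$'s cost then changes by $(\varepsilon-\delta)(t_1+t_2)<0$, so agent~$1$ is made \emph{better} off, not worse; either choice of which agent to hold fixed yields a valid Pareto improvement.
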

\begin{proof}   
    First, consider an allocation in which each agent has a nonzero fraction of each chore. Then, there exists $t_1$ and $t_2$ such that transferring a $t_1$ fraction of chore 1 to agent 2 and $t_2$ fraction of chore 2 to agent 1 results in a strictly smaller $p$-mean. To achieve this, pick $t_1$ and $t_2$ so that the cost of agent 2 does not change, i.e. $t_1(1/2 + \varepsilon) - t_2(1/2 - \varepsilon) = \frac{1}{2}(t_1 - t_2) + \varepsilon(t_1 + t_2) = 0$.
    Then, consider the change in cost for agent 1.
    \begin{align*}
        t_2(1/2 - \delta) - t_1(1/2 + \delta) &= \frac{1}{2}(t_2 - t_1) - \delta(t_1 + t_2) \\
        &= \varepsilon(t_1 + t_2) - \delta(t_1 + t_2) \tag{$\frac{1}{2}(t_1 - t_2) + \varepsilon(t_1 + t_2) = 0$} \\
        &< 0. \tag{$\delta > \varepsilon$}
    \end{align*}
    Since performing this transfer is giving a Pareto improvement, the allocation could not have been $p$-mean optimal. If agent $1$ receives all of chore $2$ we are done. We show that if chore 1 is fully allocated to agent 2, then transferring the entirety of chore 2 to agent 1 results in a smaller $p$-mean. Let $x$ be the fraction of chore 2 allocated to agent 1. Then, $\tilde{c}_1(\dalloc_1) = (1/2 - \delta)x$ and $\tilde{c}_2(\dalloc_2) = 1/2 + \varepsilon + (1-x)(1/2 - \varepsilon) = 1 - x(1/2 - \varepsilon)$. We then write the normalized $p$-mean as follows:
    \begin{align*}
        w_p(x) = \left(x\left(\frac{1}{2}-\delta\right)\right)^p + \left(1-x\left(\frac{1}{2}-\varepsilon\right)\right)^p.
    \end{align*}
    We then consider the derivative with respect to $x$ and show that it is negative (thus larger $x$ implies smaller $p$-mean).
    \begin{align*} 
        \frac{\partial w_p(x)}{\partial x} &= p\left(\left(\frac{1}{2} - \delta\right)^px^{p-1} - \left(1 - x\left(\frac{1}{2} - \varepsilon\right)\right)^{p-1}\left(\frac{1}{2} - \varepsilon\right)\right) < 0 \\
        &\Longleftrightarrow \left(\frac{1}{2} - \delta\right)^px^{p-1} < \left(1 - x\left(\frac{1}{2} - \varepsilon\right)\right)^{p-1}\left(\frac{1}{2} - \varepsilon\right) \\
        &\Longleftrightarrow x < \frac{\left(\frac{1}{2} - \varepsilon\right)^{\frac{1}{p-1}}}{\left(\frac{1}{2} - \delta\right)^{\frac{p}{p-1}} + \left(\frac{1}{2} - \varepsilon\right)^{\frac{p}{p-1}}}.
    \end{align*}
    Showing that the stated upper bound on $x$ is larger than $1$ would imply that the derivative is negative everywhere in the interval $[0,1]$, and therefore the fraction of chore $2$ allocated to agent $1$ should be equal to $1$. To this end:
    \begin{align*}
        \frac{\left(\frac{1}{2} - \varepsilon\right)^{\frac{1}{p-1}}}{\left(\frac{1}{2} - \delta\right)^{\frac{p}{p-1}} + \left(\frac{1}{2} - \varepsilon\right)^{\frac{p}{p-1}}} \tag{$\delta > \varepsilon$} &> \frac{\left(\frac{1}{2} - \varepsilon\right)^{\frac{1}{p-1}}}{2\left(\frac{1}{2} - \varepsilon\right)^{\frac{p}{p-1}}}\\ &= \frac{1}{1-2\varepsilon} > 1.
    \end{align*}
    The lemma stands proved.  
\end{proof}

\end{document}